\documentclass[doublecolumn,journal]{IEEEtran}
\IEEEoverridecommandlockouts

\def\BibTeX{{\rm B\kern-.05em{\sc i\kern-.025em b}\kern-.08em
    T\kern-.1667em\lower.7ex\hbox{E}\kern-.125emX}}
\usepackage{easyReview}
\usepackage{textcomp}
\usepackage{xcolor}
\usepackage{tikz}
\usepackage{cite,nicefrac}
\usepackage[ruled]{algorithm2e}
\usepackage{graphicx}
\usepackage{amsmath,amssymb,amsfonts}
\usepackage{bm}
\usepackage[version=4]{mhchem}
\usepackage{siunitx}
\usepackage{longtable,tabularx}
\usepackage{setspace}
\usepackage{subcaption}
\usepackage{array}
\usepackage{pdfpages}
\usepackage{lipsum}
\usepackage{wrapfig}
\usepackage{hyperref}
\usepackage{stmaryrd}
\usepackage{pifont}
\usepackage[stable]{footmisc}
\usepackage{stackengine} 
\usepackage{authblk}
\usepackage{mathtools}
\usepackage[utf8]{inputenc}
\usepackage{soul}
\usepackage{url}
\usepackage{mathtools}
\usepackage{amssymb}
\usepackage{amsmath,amsfonts,bbold}
\usepackage{algorithmic}

\usepackage{textcomp}
\usepackage{stfloats}
\usepackage{verbatim}
\usepackage{bbm}
\usepackage{mdframed}
\usepackage{amsthm}
\usepackage{multirow}
\usepackage{setspace}
\usepackage{enumitem}

\newtheorem{Fact}{Fact}
\newtheorem{lemma}{Lemma}

\newtheorem{theorem}{Theorem}
\newtheorem{Def}{Definition}
\newtheorem{Corollary}{Corollary}

\newtheorem{assumption}{Assumption}

\newtheorem{Remark}{Remark}

\newcommand{\mathbm}[1]{\bm{\mathcal{#1}}}

\DeclareMathOperator*{\minimize}{\textrm{minimize}}

\DeclareMathOperator*{\vecop}{\rm vec}
\DeclareMathOperator*{\var}{\rm Var}

\newcommand{\cmark}{\ding{51}}
\newcommand{\xmark}{\ding{55}}

\definecolor{mygreen}{RGB}{0, 128, 0}
\definecolor{mygreen2}{RGB}{0, 200, 128}

\newcommand{\algocomment}[1]{\item [] \texttt{\color{gray} // #1}}

\allowdisplaybreaks

\begin{document}

\title{Completing Radio Map Tensors via Factor Diffusion Posterior Sampling
}

\author{Le Xu, Rajesh Shrestha, and Xiao Fu
\thanks{
R. Shrestha is with the School of Electrical Engineering and
Computer Science, Oregon State University, OR, USA. 

L. Xu is with University of Granada, Spain. The work was developed when he was with the School of Electrical Engineering and
Computer Science, Oregon State University, OR, USA. 

X. Fu is with the Department of Electrical and Computer Engineering, University of Minnesota - Twin Cities, MN, USA; he was with the School of Electrical Engineering and
Computer Science, Oregon State University, OR, USA 
(email: xfu@umn.edu).

}
\thanks{
The work was supported in part by the National Science Foundation (NSF) under Projects NSF ECCS-2024058 and NSF CCF-2210004.}
}

\maketitle

\begin{abstract}
Radio map estimation (RME), also known as spectrum cartography, aims to recover radio signal strength across multiple domains (e.g., spatial and spectral domains) from sparse measurements, naturally leading to a tensor completion problem. Recent advances show that deep learning–based generative models can provide data-driven structural constraints for RME, yielding substantial performance improvements over traditional model-based approaches. Diffusion models (DMs) represent the current state of the art in generative modeling and hold strong promise for advancing RME. However, existing diffusion-based RME methods often require extensive training data, necessitate retraining when scenarios or measurement formats change, and lack recoverability guarantees.
This paper proposes a latent factor diffusion framework for spatio-spectral radio map tensor completion.
By training two independent DMs on single-emitter's spatial loss field and spectral signature and embedding them into the factors of the radio map tensor, the proposed approach enables diffusion posterior sampling-based recovery with reduced training complexity as well as enhanced measurement format/sensing pattern flexibility.
Recoverability and sample complexity guarantees under the proposed approach are also established. 
Numerical experiments validate the effectiveness of the proposed method, showing strong generalization across city environments unseen in training data.
\end{abstract}

\begin{IEEEkeywords}
Radio map estimation, spectrum cartography, diffusion models, tensor completion, recoverability analysis
\end{IEEEkeywords}

\section{Introduction}
Radio map estimation (RME), sometimes referred to as spectrum cartography, aims to construct radio signal strength maps over spatial regions of interest using 
sparsely deployed sensors \cite{bi2019engineering,romero2022radio,bazerque2009distributed,Bazerque2011Splines}. In many practical settings, radio maps also span additional domains, such as frequency, resulting in high-order tensor representations \cite{Shrestha2022DeepSpectrum,Subash2024Quantized,romero2017learning,chen2025grid,sun2024integrated,Zhang2020SpectrumViaBlockTerm}. Radio maps are a key enabler of wireless environment awareness and support a wide range of wireless resource allocation and networking applications; see discussions in \cite{bi2019engineering,zeng2024tutorial}.

In essence, RME is an inverse problem that aims to recover complete radio map tensors from sub-sampled measurements. As with many inverse problems, effective recovery relies on structural constraints to mitigate ill-posedness. Classical approaches typically impose manually designed structural priors, such as low-rank structures in matrix or tensor completion \cite{Lee2017CGMcartography,Zhang2020SpectrumViaBlockTerm} and spatial smoothness through kernel-based interpolation \cite{Bazerque2011Splines}, or combinations of both \cite{sun2022propagation,sun2024integrated}. While effective in simple propagation environments (e.g., free space or regions with limited shadowing), these methods often degrade in more complex radio-frequency (RF) scenarios, particularly urban or indoor environments with severe shadowing \cite{Shrestha2022DeepSpectrum,Subash2024Quantized}. This degradation is expected, as handcrafted priors such as low rank and sparsity capture macro-scale structures but tend to fail at modeling local oscillations and fine-grained spatial variations present in realistic radio maps.

In recent years, deep learning-based methods have been introduced to provide data-driven structural constraints for RME. In particular, modern deep generative models learn underlying data distributions and can represent complex data characteristics through trained neural parameters \cite{goodfellow2016deep}. This capability makes generative models well suited for capturing intricate RF propagation behaviors using offline data, such as simulated or previously recorded measurements. A variety of generative modeling approaches have therefore been adopted for RME, including autoencoders (AEs) \cite{teganya2021deep,Teganya2020Datadriven,Shrestha2022DeepSpectrum}, variational autoencoders (VAEs) \cite{an2024radio}, generative adversarial networks (GANs) \cite{Subash2024Quantized,zhang2023rme}, untrained networks \cite{timilsina2025domain}, and implicit neural representers \cite{chen2025grid}. These methods have demonstrated promising performance in complex, noisy, and shadowing-intensive RF environments.

Among deep generative models, diffusion models (DMs) \cite{Ho2020DDPM,song2021scorebased,chan2024tutorial} have achieved state-of-the-art performance across a wide range of applications. Compared with earlier generative models such as VAEs and GANs, DMs provide improved data fidelity and sample diversity, and have emerged as a central paradigm in modern artificial intelligence. Motivated by these advances, DMs have recently been introduced to RME and have demonstrated promising reconstruction performance; see, e.g., \cite{gu2025generative,luo2025denoising,wang2025radiodiff,wang2024radiodiff,zhao20253d}. 

\medskip

\noindent
{\bf Challenges.}
Most existing DM–based RME methods (e.g., \cite{luo2025denoising,wang2025radiodiff,wang2024radiodiff,zhao20253d}) adopt a conditional generative modeling paradigm, where a DM is trained to predict complete radio maps in and end-to-end (E2E) manner, where the prediction is done via mapping Gaussian noise and auxiliary information, e.g., spatial measurements, emitter locations, sensor locations, and building maps, to the radio map domain. 
The recent work in \cite{gu2025generative} formulates the RME problem as a more classical linear inverse problem and applies diffusion posterior sampling (DPS) \cite{chung2023diffusion,daras2024survey} to estimate the radio maps from their under-sampled versions.
While these approaches show promising empirical performance, several challenges remain.

First, deployment flexibility is limited.
Conditional E2E DM formulations tightly couple learned score functions to the training scenarios, including sampling patterns and measurement formats. Consequently, changes in the measurement process, such as moving from continuous-valued measurements \cite{Shrestha2022DeepSpectrum,Subash2024Quantized} to quantized observations \cite{Subash2024Quantized}, typically require retraining.

Second, training cost is high.
All the existing methods train DM on radio maps with information aggregated from multiple emitters. 
As noted in \cite{Shrestha2022DeepSpectrum,xu2025radio,Subash2024Quantized}, learning generative models under information-aggregated settings often requires a large number of training samples due to the high-dimensional state space. Aggregated DM based approaches are also sensitive to training–testing mismatches (e.g., changes in the number of emitters). In addition, existing methods are mostly designed for recovering spatial information over a single band, and they
become computationally prohibitive for radio map \textit{tensor} recovery across multiple frequency bands.

Third, theoretical understanding is lacking.
Existing recoverability guarantees for deep learning–based RME rely on deterministic generative models, such as autoencoders and GANs, which map low-dimensional latent variables to ambient radio map spaces \cite{Shrestha2022DeepSpectrum,Subash2024Quantized,timilsina2025domain}. In contrast, diffusion models generate samples via 
iterative stochastic process
and lack such deterministic low-dimensional representations. As a result, recoverability guarantees for DM-based radio map estimation, especially for radio map tensors, are yet to be established.

\medskip

\noindent
{\bf Contributions.} This work develops a DM-based framework for estimating spatio-spectral radio map tensors from sparsely acquired samples. We address challenges in both methodological development and theoretical analysis for DM-based RME. The main contributions are summarized as follows:

\smallskip

\noindent
$\blacktriangleright$~{\it Latent Factor Diffusion Framework}:
Unlike existing DM-based RME approaches that rely on models trained on aggregated radio maps \cite{luo2025denoising,wang2025radiodiff,wang2024radiodiff,gu2025generative,zhao20253d}, we treat RME as a tensor inverse problem and impose diffusion priors in the latent domain. Motivated by the observation that learning generative models for single-emitter spatial loss fields (SLFs) requires significantly fewer training samples \cite{Shrestha2022DeepSpectrum,Subash2024Quantized}, and leveraging the spatio–spectral 
decomposition
of narrowband radio maps \cite{Bazerque2011Splines,romero2017learning,Lee2017CGMcartography,Zhang2020SpectrumViaBlockTerm}, we incorporate {\it two} independently trained DMs for single-emitter's spatial and spectral characteristics into the latent factors of the radio map tensor. This yields a latent factor diffusion posterior sampling algorithm that accommodates diverse measurement formats and auxiliary information.

\smallskip

\noindent
$\blacktriangleright$~{\it Recoverability Analysis}:
We establish recoverability guarantees for radio map tensor completion based on latent factor posterior sampling. Specifically, assuming that the diffusion-based sampler provides a sufficiently accurate approximation of the true posterior and that the learned diffusion models capture the spatial and spectral data distributions with reasonable fidelity, we show that accurate recovery is guaranteed given a sufficient number of measurements. The analysis extends the instance-optimal compressed sensing framework in \cite{jalal2021instance} to tensor-structured signals under non-Gaussian sensing models, and provides theoretical insights for DM-based radio map recovery.

\medskip
We validate the proposed methods through extensive simulations, demonstrating their effectiveness and efficiency under a variety of observation models.

\medskip
\noindent
\textbf{Notation:} In this paper, the notations $x \in \mathbb R$, $\bm x\in \mathbb R^I$, $\bm X \in \mathbb R^{I\times J}$ and $\mathbm X \in \mathbb R^{I\times J\times K}$ denote a scalar, a vector, a matrix and a tensor, respectively. The notation $\bm 1_K$ denotes the vector of all ones of length $K$ and $\bm I_K$ denotes $K \times K$ identity matrix.
The set $[I]$ denotes $\{1,\ldots,I\}$ for a positive integer $I$. The operators $\circ$, $\oast$ and $\otimes$ denote the outer product, Hadamard product, and Kronecker product, respectively. 
Probability density functions (pdfs), probabilities of events, and expectations are denoted by $p(\cdot)$, ${\rm Pr}(\cdot)$, and $\mathbb E[\cdot]$, respectively.
The indicator function of an event $A$ is denoted by $\mathbb{1}(A)$, which equals $1$ if $A$ occurs and $0$ otherwise.
The operator $\mathcal{Q}(\cdot)$ denotes the element-wise quantization, $\mathcal U(a,b)$ denotes the uniform distribution on the interval $[a,b]$, and $\Phi(\cdot)$ denotes the cumulative distribution function (CDF) of the standard Gaussian distribution.

\section{Background}
\subsection{Problem Statement}
We consider the classical spatio-spectral RME problem (see, e.g., \cite{romero2017learning,Zhang2020SpectrumViaBlockTerm,Shrestha2022DeepSpectrum,Bazerque2011Splines,Lee2017CGMcartography}) over a rectangular region with $I\times J$ grids and $K$ frequency bands. This spatio-spectral radio map is represented as a tensor $\mathbm X \in \mathbb{R}^{I \times J \times K}$, where $\mathbm X(i,j,k)$ denotes the RF signal strength at the $(i,j)$th spatial cell and the $k$th frequency band. Depending on applications and measure data, such signal strength can represent interference \cite{Bazerque2011Splines}, channel gain \cite{Lee2017CGMcartography}, or beamforming gain \cite{wang2025beamckm}.
To simplify the discussion, we follow \cite{Bazerque2011Splines,romero2017learning,Zhang2020SpectrumViaBlockTerm,Shrestha2022DeepSpectrum,Subash2024Quantized,xu2025radio,sun2024integrated} to consider the power interference propagation map, The developed technique, however, also applies to other types of maps.
In the considered case, $\mathbm X(i,j,:)\in\mathbb{R}^K$ represents the power spectral density (PSD) of the signal received at location $(i,j)$.
Note that the signal at every $\mathbm X(i,j,k)$ is often contributed by multiple emitters in the region.
Within the region of interest, measurements are collected at a subset of grid locations indexed by $\bm\varOmega = \{(i_n,j_n)\}_{n=1}^N$, where $|\bm\varOmega| = N$ denotes the number of measurements. Accordingly, the PSD vector $\mathbm X(i,j,:)$ is observed if $(i,j) \in \bm\varOmega$. That is, the sensor can scan across all the $K$ frequency bands at its location $(i,j)$. The scenario is illustrated in Fig.~\ref{fig:RME}.

Note that sampling such PSD ``fibers'' from the tensor $
\mathbm{X}$ can be represented as a classical linear sensing problem
\begin{align}\label{eq:classicIP}
    \bm y=\bm A \bm x + \bm \xi,
\end{align}
where $\bm \xi$ denotes the noise.
To see this, define the spatial selection matrix $\bm M_{\bm \varOmega}\in\{0,1\}^{N\times IJ}$ so that its $n$th row selects the spatial location $(i_n,j_n)\in \bm \varOmega$. Using the spatial index mapping $\ell(i,j)=(j-1)I+i$, we define
\begin{equation}
(\bm M_{\bm \varOmega})_{n,\,\ell(i,j)}
=
\begin{cases}
1, & {\text{if } (i,j)=(i_n,j_n),}\\
0, & \text{otherwise},
\end{cases}
\label{eq:Pomega}
\end{equation}
for $n\in[N]$, $i\in[I]$, and $j\in[J]$. 
In addition, let $\bm X_{\rm spec}\in \mathbb R^{K \times IJ}$ denote the matrix formed by stacking the spectral vectors of \(\mathbm X\) as columns; i.e., $\bm X_{\rm spec}\triangleq[\mathbm X(1,1,:),\mathbm{X}(2,1,:),\ldots,\mathbm X(I,J,:)]\in \mathbb{R}^{K\times IJ}$. Then the spectra observed at the sampled locations $\bm \varOmega$ can be written as
\begin{equation}
    \bm Y =  \bm X_{\rm spec} \bm M_{\bm \varOmega}^\top + \bm W \in \mathbb R^{K \times N},
\end{equation}
where $\bm W \in \mathbb{R}^{K \times N}$ denotes measurement noise. In particular, the $n$th column of $\bm Y\in \mathbb R^{K\times N}$ is given by $\mathbm X(i_n,j_n,:)+\bm W(:,n)$. Let $\bm x= \vecop (\bm X_{\rm spec}) \in \mathbb R^{IJK}$, $\bm y= \vecop(\bm Y)\in \mathbb R^{NK}$ and $\bm w = \vecop ( \bm W )\in\mathbb R^{NK}$. The measurement vector is then expressed as follows
\begin{equation}
    \bm y = (\bm M_{\bm \varOmega} \otimes \bm I_K  ) \bm x + \bm w, \label{eq: IP for RM}
\end{equation}
where we used the vectorization identity $\vecop (\bm B \bm C \bm D)=(\bm D^T \otimes \bm B) \vecop(\bm C)$.

\begin{figure}[!t]
    \centering
    \includegraphics[width=0.85\linewidth]{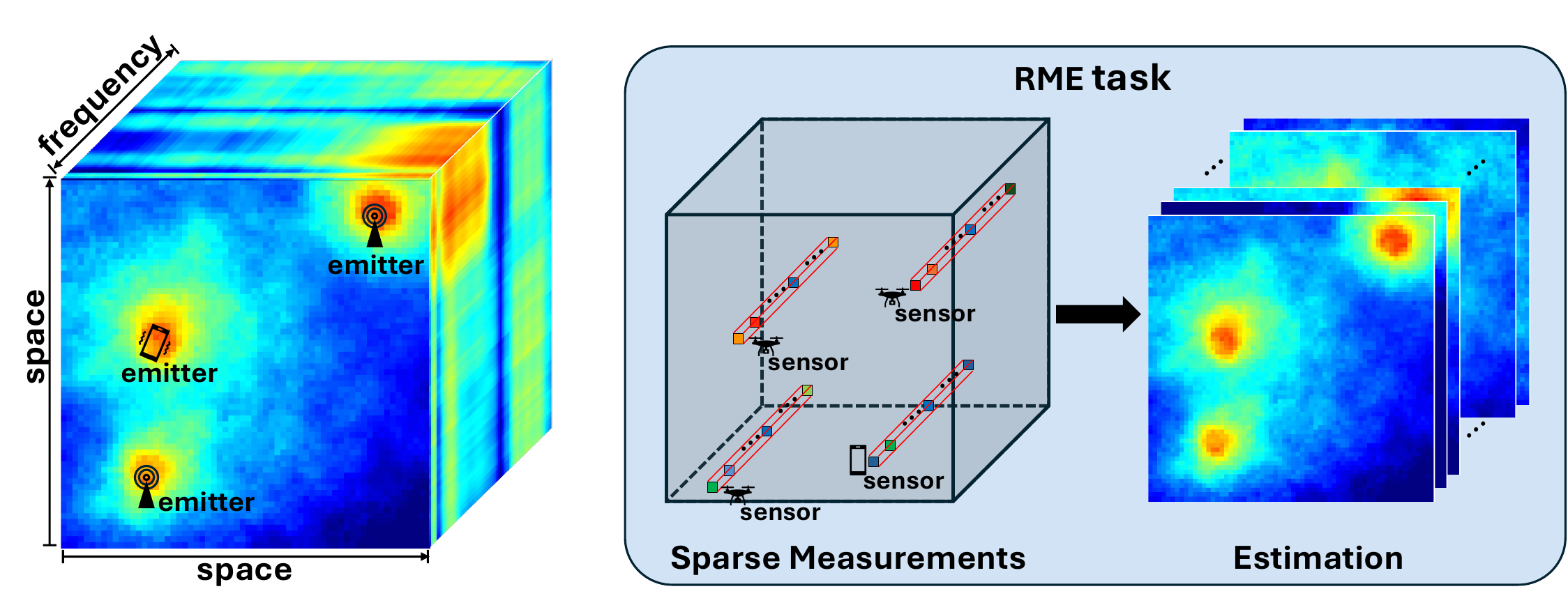}
    \caption{Left: scenario of interest. Right: RME as tensor sampling and recovery; figure adapted from \cite{Shrestha2022DeepSpectrum,Subash2024Quantized,xu2025radio}.}
    \label{fig:RME}
\end{figure}

\subsection{RME: From Model-based to Data-driven Paradigms}
Clearly, the RME task is an ill-posed, under-determined inverse problem, as $|\bm \varOmega|$ is often much smaller than $IJ$. Classical matrix/tensor completion approaches typically rely on model-based structural constraints to avoid spurious solutions. These include local smoothness \cite{Bazerque2011Splines,romero2017learning}, low-rank matrix or tensor models \cite{Lee2017CGMcartography,Zhang2020SpectrumViaBlockTerm} (on spatial slabs or the entire $\mathbm X$), and the integration of low-rank and interpolation models \cite{sun2022propagation,sun2024integrated}.
These methods exhibit strong performance in relatively simple RF settings (e.g., rural areas with minimal blockage), but are less competitive in handling severe shadowing effects arising from complex propagation conditions; see \cite{Shrestha2022DeepSpectrum,Subash2024Quantized}.

In recent years, deep learning approaches, especially generative models, were adopted
for RME (see, e.g., \cite{Levie2021radiounet,teganya2021deep,Teganya2020Datadriven,Shrestha2022DeepSpectrum,Subash2024Quantized}). Deep models are considered more capable of capturing complex data characteristics and thus more promising in complex RF environments.
Pioneering works \cite{teganya2021deep,Teganya2020Datadriven} employ an end-to-end autoencoder to recover $\mathbm X$ from $\mathbm Y$ (with the assistance of some auxiliary information). These methods are considered computationally heavy in multi-band settings, as they essentially have to model the distribution of the high-dimensional data $\mathbm X$, contributed by multiple emitters' activities.

\begin{figure}
    \centering
    \includegraphics[width=0.85\linewidth]{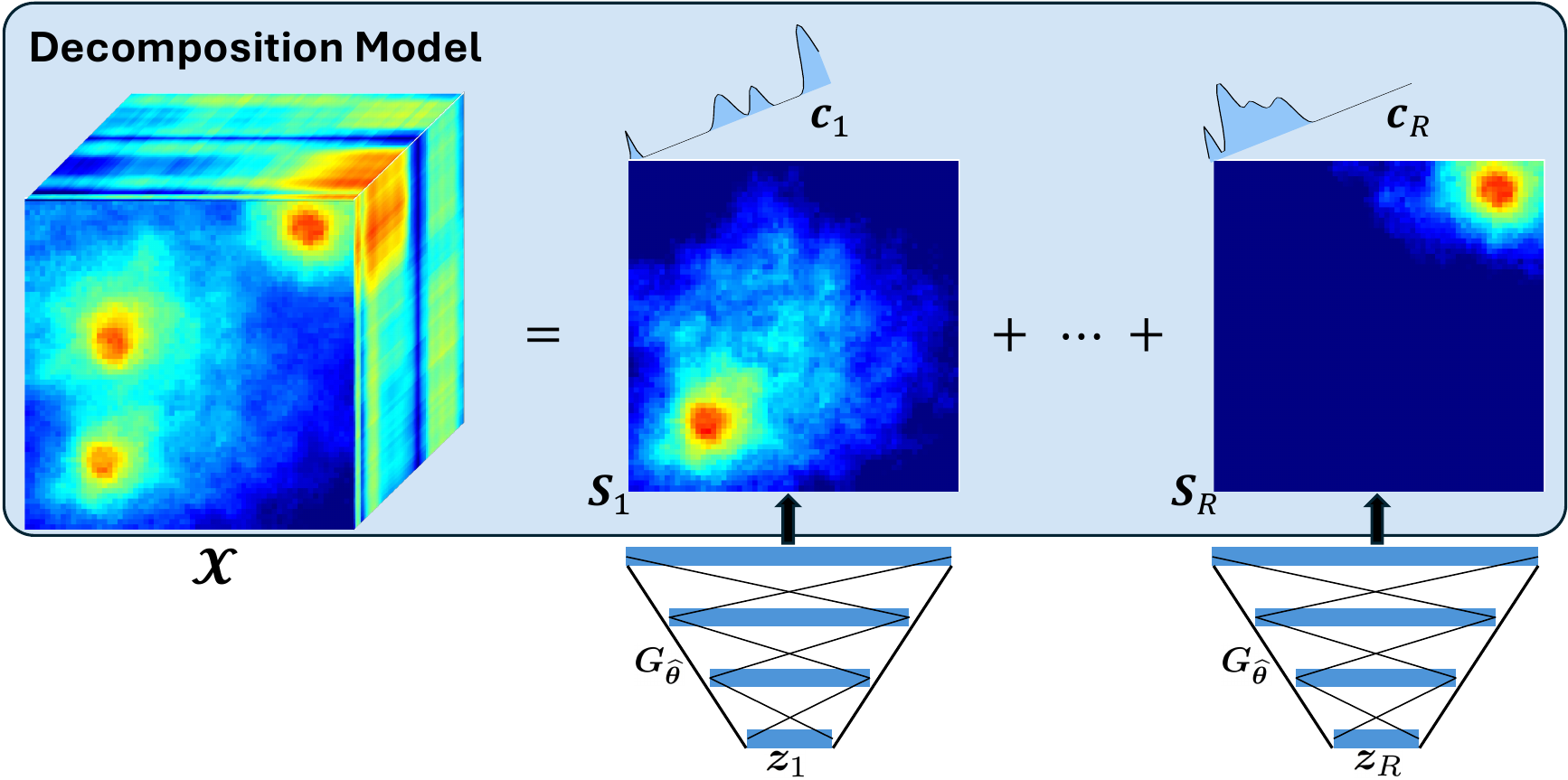}
    \caption{The spatio-spectral emitter decomposition and the single-emitter SFL generative model in \cite{Shrestha2022DeepSpectrum,Subash2024Quantized}; figure adapted from \cite{Shrestha2022DeepSpectrum,Subash2024Quantized}.}
    \label{fig:dowjons}
\end{figure}

To reduce learning complexity, the works in \cite{Shrestha2022DeepSpectrum,Subash2024Quantized,xu2025radio} adopt a classical spatio–spectral decomposition of radio maps (see, e.g., \cite{bazerque2009distributed,Bazerque2011Splines,romero2017learning,Zhang2020SpectrumViaBlockTerm}):
\begin{align}\label{eq:X=SC}
\mathbm X = \sum_{r=1}^R \bm S_r \circ \bm c_r,
\end{align}
where $R$ denotes the number of emitters, and $\bm S_r$ and $\bm c_r$ represent the \textit{spatial loss field} (SLF) and PSD of the $r$th emitter, respectively. Under this model, a generator $\bm G_{\bm\theta}:\mathbb{R}^d \rightarrow \mathbb{R}^{I\times J}$ is trained via GANs \cite{Subash2024Quantized} or AEs \cite{Shrestha2022DeepSpectrum} to model single-emitter SLFs, rather than aggregated radio maps. This substantially simplifies learning by avoiding the combinatorial complexity induced by multi-emitter superposition; see Fig.~\ref{fig:dowjons}.

Once trained, each $\bm S_r$ admits the representation $\bm S_r=\bm G_{{\bm\theta}}(\bm z_r)$ for a certain latent code $\bm z_r\in\mathbb{R}^d$ with $d\ll IJ$. This way, recovering $\mathbm X$ boils down to estimating $\{\bm z_r,\bm c_r\}$, e.g., via
\begin{align}\label{eq:opt-based}
    \minimize_{\{\bm z_r,\bm c_r\}}  \left\| \bm y  - \bm A {\rm vec}\left( 
             \sum_{r=1}^R \bm G_{{\bm \theta}}(\bm z_r) \circ \bm c_r \right) \right\|_{\rm 2}^2;
\end{align}
also see Fig.~\ref{fig:dowjons}.
Recoverability guarantees have been established for this framework \cite{Shrestha2022DeepSpectrum,Subash2024Quantized,timilsina2025domain}, showing that accurate recovery is ensured with high probability when the number of sensor locations reaches the order of $(d+K)R$.

\subsection{Existing Diffusion-Based RME and Challenges}

Classical generative models, e.g., AE, VAE, and GANs, have their respective limitations. In particular, AEs/VAEs are prone to over-smoothing, while GANs often suffer from min-max optimization instability, mode collapse, and limited sample diversity \cite{goodfellow2016deep}.
In recent years, DM emerged as a high-accuracy generative model with a manageable training process \cite{chan2024tutorial}. These properties make DMs a compelling choice for incorporation into RME.

\subsubsection{Diffusion Model Preliminaries}
DMs have a forward process and a reverse process. In the forward process, a clean sample $\bm x_0\sim p_{\rm data}$ is gradually corrupted by Gaussian noise \cite{chan2024tutorial}:
\begin{align}
    \bm x_t
    =
    \sqrt{\alpha_t}\,\bm x_{t-1}
    +
    \sqrt{1-\alpha_t}\,\bm \varepsilon_t,
    \quad
    \bm \varepsilon_t \sim \mathcal{N}(\bm 0,\bm I),
\end{align}
for $t=1,\ldots,T$, where $\{\alpha_t \in (0,1)\}_{t=1}^T$ is a predefined noise schedule; see choices in \cite{Ho2020DDPM,nichol2021improved,chan2024tutorial}. 
Denote the distribution of $\bm x_t$ as $p_t(\bm x_t)$.
When $T\rightarrow\infty$, it is shown that $p_T(\bm x_T)$ becomes standard Gaussian under proper noise scheduling \cite{song2021scorebased}.
A neural network $\bm s_{\bm \theta}(\bm x_t,t)$ is trained to predict the noise component from $\bm x_t$: 
\begin{equation}\label{eq:dsm_uncond}
\minimize_{\bm\theta}~
\mathbb{E}_{t,\bm x_0,{\bm\varepsilon_t}}\left[
\left\|
\bm s_{\bm\theta}(\bm x_t,t)
+\frac{1}{\sigma_t}\bm\varepsilon_t
\right\|_2^2
\right],
\end{equation}
where $t$ is sampled uniformly from $\{1,\ldots, T\}$, and $\sigma_t = \sqrt{1-\bar\alpha_t}$ with $\bar\alpha_t=\prod_{i=1}^t \alpha_i$.
It is shown that the learned neural network approximates the score function of noisy data distribution in the following way:
\begin{align}\label{eq:score_approx}
    \bm s_{\bm \theta}(\bm x_t,t)
    \approx
    \nabla_{\bm x_t} \log p_t(\bm x_t).
\end{align}
At inference time, DMs generates $\bm x_0\sim p_0(\bm x) = p(\bm x)$ using a backward process starting from Gaussian noise:
\begin{equation}\label{eq:reverseSDE}
\bm x_{t-1}
=\frac{1}{\sqrt{\alpha_t}}
    \left(
        \bm x_t + \beta_t\bm s_{\bm \theta}(\bm x_t,t)
    \right)
+ \sqrt{\tilde{\beta}_t} \bm z_t,
\end{equation}
where $\beta_t = 1-\alpha_t$, $\tilde\beta_t=\nicefrac{(1-\alpha_t)(1-\bar\alpha_{t-1})}{(1-\bar\alpha_t)}$ and $ \bm z_t \sim {\cal N}(\bm 0,\bm I)$.

\subsubsection{Diffusion for Inverse Problem Solving in RME}
\label{subsec:dps}
Consider applying score-based diffusion methods to RME under the linear inverse model in \eqref{eq:classicIP} with Gaussian noise, i.e., $\bm y=\bm A\bm x+\bm w$. The objective is to sample from the posterior distribution $p(\bm x | \bm y)$. This differs from standard diffusion models, which sample from the unconditional prior $p(\bm x)$. Existing diffusion-based approaches to RME can be broadly categorized into two routes.

The first route trains an E2E \emph{conditional} score model that directly approximates
$\bm s_{\bm\theta}(\bm x_t,t,\bm y)\approx \nabla_{\bm x_t}\log p(\bm x_t|\bm y)$, with the score network conditioned on the observation $\bm y$ via
\begin{equation}\label{eq:dsm_cond}
\minimize_{\bm\theta}~
\mathbb{E}_{t,\bm x_0,\bm\varepsilon_t,\bm y}\left[
\left\|
\bm s_{\bm\theta}(\bm x_t,t,\bm y)
+\frac{1}{\sigma_t}\bm\varepsilon_t
\right\|_2^2
\right],
\end{equation}
Many existing DM-based RME methods follow this paradigm; see, e.g., \cite{luo2025denoising,wang2025radiodiff,wang2024radiodiff}. In these approaches, the incomplete radio map together with auxiliary information (e.g., sensor locations and building or blockage maps) is treated as a conditioning variable, and a conditional score model of $p(\bm x | \bm y)$ is learned. While effective in certain scenarios, this strategy incurs high learning complexity and limited flexibility, as changes in sampling patterns or measurement formats (i.e., changes in $\bm y$) typically require retraining.

The second route makes use of the measurement acquisition models, e.g., $\bm y=\bm A\bm x+\bm w$, and solves the associated inverse problems to recover $\bm x$.
The solution only uses the score of $\bm x_t$ and approximates $\nabla_{\bm x_t} \log p(\bm x_t|\bm y)$ via Bayes’ rule,
\begin{align*} 
\nabla_{\bm x_t} \log p(\bm x_t \mid \bm y)
=
\nabla_{\bm x_t} \log p_{t}(\bm x_t)
+
\nabla_{\bm x_t} \log p(\bm y \mid \bm x_t),
\end{align*}
where the first term is the prior score. The likelihood score in the second term is generally intractable but can be approximated under known signal sensing models like $\bm y=\bm A\bm x+\bm w$ \cite{daras2024survey}. For example \textit{diffusion posterior sampling} (DPS) \cite{chung2023diffusion} approximates it via 
\begin{align} \label{eq:likelihood_score_approx}
\nabla_{\bm x_t} \log  p(\bm y | \bm x_t) \approx \nabla_{\bm x_t}  \log  p(\bm y | \widehat{\bm x}_0(\bm x_t)),
\end{align}
where $\widehat{\bm x}_0(\bm x_t)$ is the estimated noise-free version of $\bm x_t$ derived via Tweedie's formula \cite{efron2011tweedie,chung2023diffusion}:
\begin{align} \label{eq:tweedie}
\widehat{\bm x}_0(\bm x_t) \triangleq {\mathbb E[\bm x_0 |\bm x_t ] = } \frac{1}{\sqrt{\bar\alpha_t}}\left( \bm x_t + (1-\bar\alpha_t)\bm s_{\bm \theta}(\bm x_t, t) \right).
\end{align}
Ideally, $\widehat{\bm x}_0\sim p(\bm x)$ and $p(\bm y | \widehat{\bm x}_0) $ often has a closed form given the distribution of $\bm w$.
Recent RME works such as \cite{gu2025generative} adopt this route, but train diffusion models directly on aggregated radio maps, thereby inheriting similar learning complexity and generalization challenges noted in \cite{Shrestha2022DeepSpectrum,Subash2024Quantized}. Moreover, recoverability guarantees for diffusion-based RME under either route remain largely unexplored.

\section{Proposed Approach}
In this work, our interest lies in pre-trained priors (without involving $\bm y$) to recover $\mathbm X$. Nonetheless, as DM of the aggregate radio map tensors $\mathbm X$ is costly to train, we propose to train diffusion models of the latent factors of $\mathbm X=\sum_{r=1}^R \bm S_r \circ \bm c_r$ in \eqref{eq:X=SC}.  As mentioned in \cite{Subash2024Quantized,Shrestha2022DeepSpectrum}, the hardness of modeling radio maps lies in capturing the characteristics of $\bm S_r$, i.e., the SLF of a single emitter. Learning a DM of $\bm S_r$ is arguably much easier than that of $\mathbm X$, as $\bm S_r$ has a much smaller state space \cite{Shrestha2022DeepSpectrum}.

Under the sensing model in \eqref{eq:classicIP}, our goal is to draw samples from $p(\bm x|\bm y)$ (recall that $\bm x=\vecop(\mathbm X_{\rm spec})$ in \eqref{eq: IP for RM}). 
However, if one only knows the DM of $\bm S_r$ and has no access to either the conditional score $\nabla_{\bm x_t} p(\bm x_t|\bm y)$ or the prior score $\nabla_{\bm x_t} p_{t}(\bm x_t)$, directly using both routes in Sec.~\ref{subsec:dps} for DM-based RME is infeasible. This motivates us to propose a new posterior sampling paradigm.

\begin{figure}
    \centering
    \includegraphics[width=0.385\linewidth]{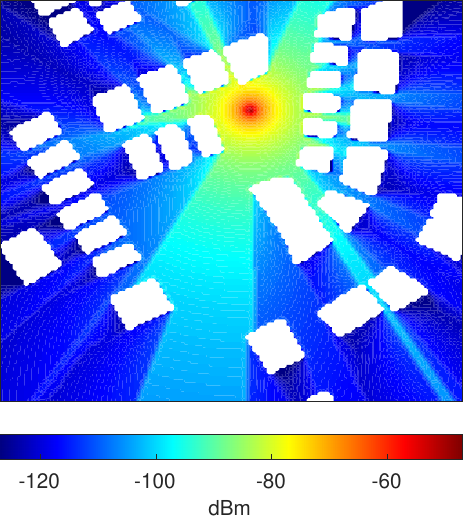} 
    \hspace{4mm}
    \includegraphics[width=0.36\linewidth]{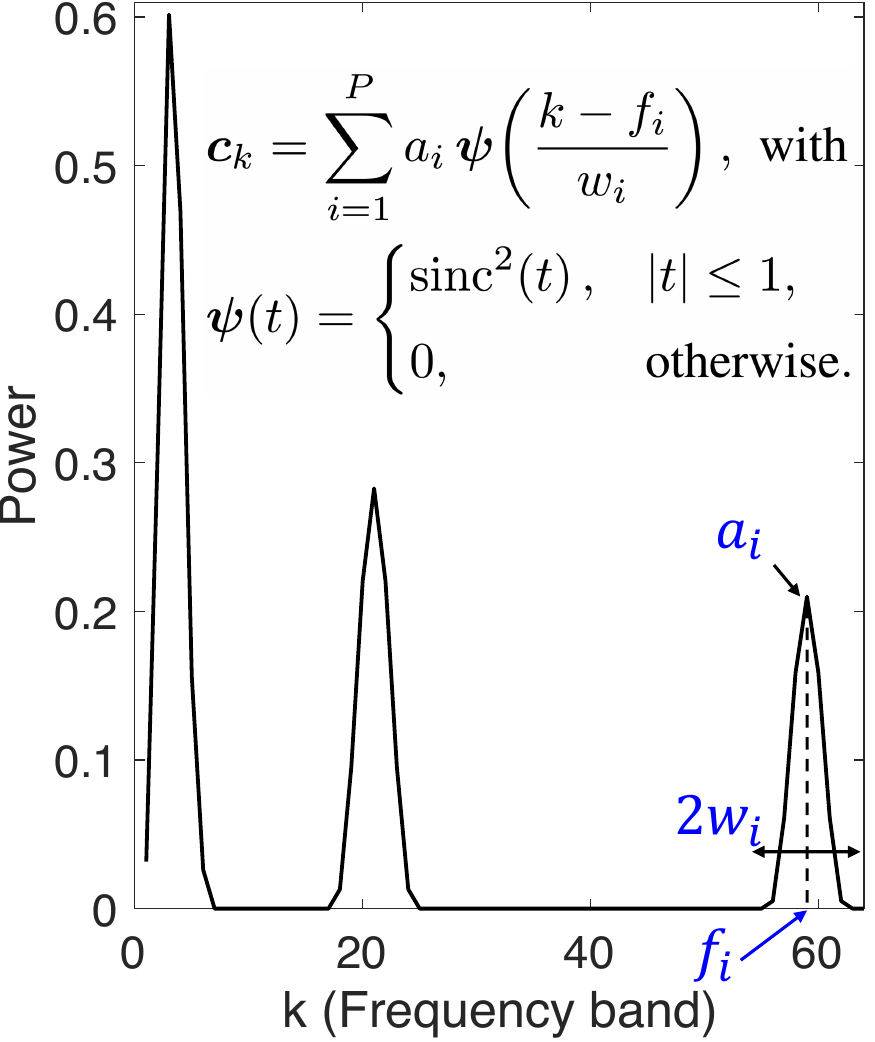}
    \caption{{Typical SLF (left) and PSD (right) of a single emitter over frequency band of interest \cite{Shrestha2022DeepSpectrum,mehanna2013frugal}.}}
    \label{fig:PSD}
\end{figure}

\subsection{Proposed Algorithm: FactorDPS} \label{subsec:FactorDPS}
We note that the hard to model part in radio maps is the SLF \cite{Shrestha2022DeepSpectrum,Subash2024Quantized,xu2025radio}. Hence, we train a DM for $\bm S_r$.
In addition, to facilitate a simple algorithm,
we also train a separate DM of the single-emitter PSD, i.e., $\bm c_r \in \mathbb{R}^K$. Note that both the SLF and PSD have mature numerical simulators.
The SLF can be generated by the statistical simulator in \cite{goldsmith_2005} and ray-tracing based simulators such as WinProp \cite{Hoppe2017wave,DatasetPaper} and SIONNA \cite{hoydis2023sionna}. Emitter PSDs can be well simulated by truncated squared-sinc signals as in \cite{mehanna2013frugal,fu2015factor,fu2016power}; also see Fig.~\ref{fig:PSD}.

\subsubsection{Spatial and Spectral DM Training} 
\label{subsec:LatentTraining}
Let $\bm s \in \mathbb{R}^{IJ}$ denote a vectorized single-emitter SLF sample
(i.e., $\bm s=\mathrm{vec}(\bm S_r)$ for any $r$). It is reasonable to assume that all emitters share the same generative model for their SLFs; see \cite{Shrestha2022DeepSpectrum,Subash2024Quantized,timilsina2025domain}. We train a score network
$f_{\mathrm{SLF}}(\bm s_t,t; \bm \theta_{\rm SLF})\approx \nabla_{\bm s_t}\log p_t(\bm s_t)$
using denoising score matching:
\begin{equation}\label{eq:train_slf}
\min_{\bm\theta_{\mathrm{SLF}}}\;
\mathbb{E}_{t,\bm s_0,\bm\varepsilon_t}\!\left[
\left\|
f_{\mathrm{SLF}}\!\left(\bm s_t,t;\bm\theta_{\mathrm{SLF}}\right)
+\frac{1}{\sigma_t}\bm\varepsilon_t
\right\|_2^2
\right].
\end{equation}
Here, $\bm s_t$ is generated from the clean sample $\bm s_0$ via the forward diffusion process $\bm s_t = \sqrt{\bar{\alpha}_t} \bm s_0 + \sqrt{1-\bar{\alpha}_t} \bm \varepsilon_t$, where $\bm \varepsilon_t \sim \mathcal N(\bm 0, \bm I_{IJ})$. The training samples $\bm s_0$ can be generated using statistical \cite{goldsmith_2005} or ray-tracing based simulators \cite{Hoppe2017wave,hoydis2023sionna}, depending on the wireless propagation scenario under consideration. In this work, we will use the ray-tracing based dataset in \cite{DatasetPaper}; see details in Sec. \ref{subsec:training}. Similarly, let $\bm c \in \mathbb{R}^{K}$ denote a PSD (spectral signature) sample. We train a score network
$f_{\mathrm{PSD}}(\bm c_t,t)\approx \nabla_{\bm c_t}\log p_t(\bm c_t)$
as follows:
\begin{equation}\label{eq:train_psd}
\min_{\bm\theta_{\mathrm{PSD}}}\;
\mathbb{E}_{t,\bm c_0,\bm\varepsilon_t}\!\left[
\left\|
f_{\mathrm{PSD}}\!\left(\bm c_t,t;\bm\theta_{\mathrm{PSD}}\right)
+\frac{1}{\sigma_t}\bm\varepsilon_t
\right\|_2^2
\right],
\end{equation}
where $\bm c_t=\sqrt{\bar\alpha_t}\bm c_0+\sqrt{1-\bar{\alpha}_t}{\bm\varepsilon_t},\;{
\bm\varepsilon_t}\sim\mathcal N(\bm 0,\bm I_K)$ and $\bm c_0$ is generated by the PSD simulators in \cite{mehanna2013frugal,fu2015factor,fu2016power}.

Consequently, after the training stage, we have two DMs $\bm f_{\rm SLF}(\cdot,t;\bm \theta_{\rm SLF})$ and $\bm f_{\rm PSD}(\cdot,t;\bm \theta_{\rm PSD})$ to represent the scores $\nabla \log p(\bm S_r)$ and $\nabla \log p(\bm c_r)$, respectively.

\subsubsection{FactorDPS}
Under the signal model, our goal boils down to sampling from the conditional joint distribution of the SLFs and the PSDs, i.e., \[ \{\bm S_r, \bm c_r\}_{r=1}^R \sim p(\{\bm S_r, \bm c_r\}_{r=1}^R|\bm{y}),\]
as recovering $\{\bm S_r, \bm c_r\}_{r=1}^R$ would recover $\mathbm X$. To proceed, we assume the following:

\begin{assumption}[Latent Factor Independence]\label{as:indep}
For any radio map $\mathbm X=\sum_{r=1}^R \bm S_r \circ \bm c_r$, the latent factors $\{ \bm S_r, \bm c_r \}_{r=1}^R$ are mutually statistically independent, i.e.,
    \begin{align}
        p(\{\bm S_r, \bm c_r\}_{r=1}^R)=\prod_{r=1}^R p(\bm S_r) p(\bm c_r).
    \end{align}
\end{assumption}
Assumption~\ref{as:indep} is a reasonable approximation for reality. First, the spatial occurrence and spectral occupancy of an emitter are often not related to each other, as the latter is governed by purely spectral considerations, e.g., frequency license.
Second, as long as signal strength propagation and spectral occupancy of different emitters are not coordinated, cross-emitter independence also makes sense.

Under Assumption~\ref{as:indep}, Bayes' rule leads to the following factored posterior:
\begin{align}\label{eq:indepstep}
p(\{\bm S_r,\bm c_r\}_{r=1}^R | \bm y)
\propto
p(\bm y | \{\bm S_r,\bm c_r\}_{r=1}^R)
\prod_{r=1}^R  p(\bm S_r)p(\bm c_r).
\end{align}

The above allows us to apply the reverse process in \eqref{eq:reverseSDE} to generate the latent factors $$\bm S\triangleq(\bm S_1,\ldots,\bm S_R),~~\bm C\triangleq(\bm c_1,\ldots,\bm c_R)$$ conditioned on $\bm y$.
To be specific, in each sampling step $t$, the following holds:
\begin{align} \label{eq:pysc}
    & \nabla_{\bm S^{(t)},\bm C^{(t)}} \log p(\bm S^{(t)},\bm C^{(t)}  | {\bm y})  
    \nonumber \\
    &=  \nabla_{\bm{S}^{(t)},\bm C^{(t)}} \log p(\bm S^{(t)},\bm C^{(t)} ) + {\nabla_{\bm{S}^{(t)},\bm C^{(t)}} \log {p( \bm y|\bm S^{(t)},\bm C^{(t)} )}} \nonumber\\
      &= \operatorname{cat}\left(\left\{\nabla_{\bm S_r^{(t)}}\log p_t(\bm S_r^{(t)})\right\}_{r=1}^R,\left\{\nabla_{\bm c_r^{(t)}}\log p_t(\bm c_r^{(t)})\right\}_{r=1}^R\right) \nonumber \\
      & \qquad \quad   +  \nabla_{\bm{S}^{(t)},\bm C^{(t)}} \log {p( \bm y|\bm S^{(t)},\bm C^{(t)} )} \nonumber \\
      & \approx \operatorname{cat}\left(\left\{\bm f_{\rm SLF}(\bm S_r^{(t)},t)\right\}_{r=1}^R,\left\{\bm f_{\rm PSD}(\bm c_r^{(t)},t)\right\}_{r=1}^R\right) \nonumber \\
      & \qquad \quad +  \nabla_{\bm{S}^{(t)},\bm C^{(t)}} \log {p( \bm y|\bm S^{(t)},\bm C^{(t)} )},
\end{align} 
where we have used \eqref{eq:indepstep} in the second equality. The operator $\operatorname{cat}(\cdot)$ denotes concatenation of its arguments after vectorization. 
In this decomposition, the only term that requires approximation is the likelihood score $\nabla_{\bm S^{t}, \bm C^{(t)}} \log p(\bm y | \bm S^{(t)}, \bm C^{(t)})$, since it involves the generally intractable conditional distribution relating the measurement process. This term can be approximated using off the-shelf intermediate likelihood-score approximators from the literature \cite{daras2024survey}. 
In particular, recall that DPS \cite{chung2023diffusion} approximates the likelihood term using \eqref{eq:likelihood_score_approx}.
Applying the same trick, we have
\begin{align} \label{eq:S-score-conditional}
    &{\nabla_{\bm S^{(t)}, \bm C^{(t)}} \log p(\bm y | \bm S^{(t)},\bm C^{(t)})} \\
    &\approx  {\nabla_{\bm S^{(t)}, \bm C^{(t)}} \log p(\bm y | \widehat{\bm S}^{(0)},\widehat{\bm C}^{(0)})},  \nonumber
\end{align}
where $\widehat{\bm S}^{(0)}$ and $\widehat{\bm C}^{(0)}$ are the Tweedie-denoised versions of $\bm S^{(t)}$ and $\bm C^{(t)}$, respectively; see \eqref{eq:tweedie}.

Further note that when the measurement noise $\bm w$ is white Gaussian, the log-likelihood $\log p(\bm y |{\bm S}, \bm C)$ is given by:
\begin{align} \label{eq:likelihood_noQuan}
- \frac{1}{2 \sigma_n^2}
\left\|
\bm y -\bm A
{\rm vec}\left(\sum_{r=1}^R \bm S_r \circ \bm c_r\right)
\right\|_F^2
+ {\rm const},
\end{align}
where $\sigma_n^2 = \var (\bm w)$---making the reverse process in \eqref{eq:reverseSDE} fully executable. 

\begin{algorithm}[t]
\footnotesize
\algsetup{indent=0.8em}
\caption{Factor Diffusion Posterior Sampling (FactorDPS)}
\label{alg:lfdps}
\begin{algorithmic}[1]

\REQUIRE Observation $\bm y$, diffusion steps $T$, initial latents $\{\bm S_r^{(T)},\bm c_r^{(T)}\}_{r=1}^R$
\ENSURE Estimated latent factors $\{\bm S_r^{(0)},\bm c_r^{(0)}\}_{r=1}^R$

\FOR{$t = T, \ldots, 1$}

\vspace{0.5ex}
\algocomment{Tweedie Lemma Denoising}
\FOR{$r=1$ \TO $R$}
\vspace{0.5ex}
\STATE $\widehat{\bm S}_r^{(0)} \gets \dfrac{1}{\sqrt{\bar\alpha_t}}
\left(\bm S_r^{(t)} + (1-\bar\alpha_t)\, \bm f_{\rm SLF}(\bm S_r^{(t)},t)\right)$
\STATE $\widehat{\bm c}_r^{(0)} \gets \dfrac{1}{\sqrt{\bar\alpha_t}}
\left(\bm c_r^{(t)} + (1-\bar\alpha_t)\, \bm f_{\rm PSD}(\bm c_r^{(t)},t)\right)$
\ENDFOR

\vspace{0.5ex}
\algocomment{Conditional Score Construction}
\FOR{$r=1$ \TO $R$}
\STATE $\bm f_{\rm SLF}(\bm S_r^{(t)},t|\bm y) \gets 
\bm f_{\rm SLF}(\bm S_r^{(t)},t)
+ \nabla_{\bm S_r^{(t)}}
\log p\!\left(\bm y |\widehat{\bm S}^{(0)},\widehat{\bm C}^{(0)}\right)$
\STATE $\bm f_{\rm PSD}(\bm c_r^{(t)},t|\bm y) \gets 
\bm f_{\rm PSD}(\bm c_r^{(t)},t)
+\nabla_{\bm c_r^{(t)}}
\log p\!\left(\bm y | \widehat{\bm S}^{(0)},\widehat{\bm C}^{(0)} \right)$
\ENDFOR

\vspace{0.5ex}
\algocomment{Reverse Update}
\FOR{$r=1$ \TO $R$}
\STATE Sample $\bm z_{S_r}\sim\mathcal N(\bm 0,\bm I_{IJ})$,
$\bm z_{c_r}\sim\mathcal N(\bm 0,\bm I_K)$
\STATE $\bm S_r^{(t-1)} \gets
\dfrac{1}{\sqrt{\alpha_t}}
\left(\bm S_r^{(t)} + \beta_t \bm f_{\rm SLF}(\bm S_r^{(t)},t\mid\bm y)\right)
+ \sqrt{\tilde\beta_t}\,\bm z_{S_r}$
\STATE $\bm c_r^{(t-1)} \gets
\dfrac{1}{\sqrt{\alpha_t}}
\left(\bm c_r^{(t)} + \beta_t \bm f_{\rm PSD}(\bm c_r^{(t)},t\mid\bm y)\right)
+ \sqrt{\tilde\beta_t}\,\bm z_{c_r}$
\ENDFOR

\ENDFOR

{\bf Output}:
$\{\bm S_r^{(0)}, \bm c_r^{(0)}\}_{r=1}^R$ and $\widehat{\mathbm X} = \sum_r \bm S_r^{(0)}\circ\bm c_r^{(0)}$.

\end{algorithmic}
\end{algorithm}

Algorithm \ref{alg:lfdps} summarizes our proposed method. Since this method uses a decomposed signal model and pre-trained DMs in the latent space, we refer to it as \textit{Latent Factor Diffusion Posterior Sampling} (\texttt{FactorDPS}).
The proposed approach enjoys a lighter training load compared to those training a DM for $p(\bm x)$, e.g., \cite{gu2025generative}, and also avoids the potential pitfalls in directly training a DM for $p(\bm x| \bm y)$ as in \cite{luo2025denoising,wang2025radiodiff,wang2024radiodiff}, as will be shown in the experiments.

\subsection{Extensions}
The proposed method is flexible to handle various scenarios (e.g., single-band cases), incorporate auxiliary information (e.g., building maps), and deal with non-continuous measurements.

\subsubsection{Single-band case}\label{subsec:single-band-case}:
Although the algorithm is designed for spatio-spectral radio map tensors,
it can also be used for single-band radio maps as in \cite{romero2022radio,bi2019engineering}.
For single-band cases, the model becomes
\begin{align}\label{eq:multi-R,single-K}
 \bm X =\sum_{r=1}^R \bm S_r. 
\end{align}
The \texttt{FactorDPS} algorithm applies with $\bm f_{\rm SLF}(\cdot,t)$ without updating $\bm c_r$.

\subsubsection{Quantized Measurements} Assume that the measurements are quantized using a Gaussian dithering based quantizer at the sensors for saving signaling bandwidth \cite{Subash2024Quantized,timilsina2025domain}. That is, we have
\begin{align} \label{eq:quantize_Y}
    \bm y = {\cal Q} \left( \bm A \bm x +{ \bm w} \right), 
\end{align}
where $\bm y \in [Q]^{NK}$ is the quantized radio map and $\mathcal{Q}(\cdot): \mathbb{R} \to [Q]$ denotes a quantization operator. Specifically, for quantization boundaries $b_0 < b_1 < \cdots < b_Q$,
\begin{equation}
    \mathcal{Q}(x) = q \quad \text{if } x \in (b_{q-1}, b_q].
\end{equation}
The probability of observing level $q$ is the integral of the Gaussian noise density over this bin. The log-likelihood is:
\begin{align} \label{eq:likelihood_Quantized}
\log\; p(\bm y|\bm S,\bm C)
&= \sum_{(i,j)\in\bm\varOmega}
\sum_{k=1}^K  \sum_{q=1}^Q \mathbb 1({\mathbm Y(i,j,k)=q})  \nonumber\\
&
\log
\Big( \Phi_{\sigma_n}\big(b_q-\sum_{r=1}^R \bm S_r(i,j)\bm c_r(k)\big) \nonumber
 \\
&-
\Phi_{\sigma_n}\big(b_{q-1}-\sum_{r=1}^R \bm S_r(i,j)\bm c_r(k)\big)
\Big),   
\end{align}
in which $\Phi_{\sigma_n}(\cdot)$ denotes the CDF of a zero-mean Gaussian with variance $\sigma_n^2$.
With this expression, the \texttt{FactorDPS} algorithm can handle quantized measurements by replacing \eqref{eq:likelihood_noQuan} with \eqref{eq:likelihood_Quantized} in the updates.

\subsubsection{SLF Score Conditioned on Building Maps} \label{subsec:condition}
The single-emitter SLF's DM also allows one to incorporate auxiliary information such as buidling maps in a relatively lightweight way (as the signal interacts with building maps individually). That is, instead of training $\bm f_{\rm SLF}(\bm S_{r}^{(t)},t)\approx \nabla_{\bm S_r^{(t)}} \log p_{t}(\bm S_r^{(t)})$, we use conditional score \cite{song2021scorebased} to represent 
\begin{align}\label{eq:conditionalscore}
    \bm f_{\rm SLF, aux}(\bm S_r^{(t)}, t, \bm m) \approx \nabla_{\bm S_r^{(t)}} \log p_{t}\left( \bm S_{r}^{(t)} | \bm m  \right).
\end{align}
The training of such scores can be done by inputting $\bm m$ into the score based denoiser; see \cite{tashiro2021csdi} and the illustration in Fig.~\ref{fig:slf_conditional}.
 
An important remark is that including $\bm m$ almost does not change the algorithm.
To be specific, by the same spirit of Assumption~\ref{as:indep}, it is reasonable to assume $p(\{\bm S_r,\bm c_r\}_{r=1}^R|\bm m)
=\prod_{r=1}^R p(\bm S_r|\bm m)p(\bm c_r)$, where we have used an additional condition that $\bm c$ is independent from the building map $\bm m$.
As a consequence, during inference over a given city map $\bm m$, one only needs to replace $f_{\rm SLF}(\bm S_r^{(t)},t)$ in Algorithm~\ref{alg:lfdps} by $f_{\rm SLF,aux}(\bm S_r^{(t)},t,\bm m)$, i.e., the learned score conditioned on $\bm m$.  We use such city map-conditioned training and inference methods in our experiments; see Sec.~\ref{sec:exp}.

\begin{figure}
    \centering
    \includegraphics[width=0.85\linewidth]{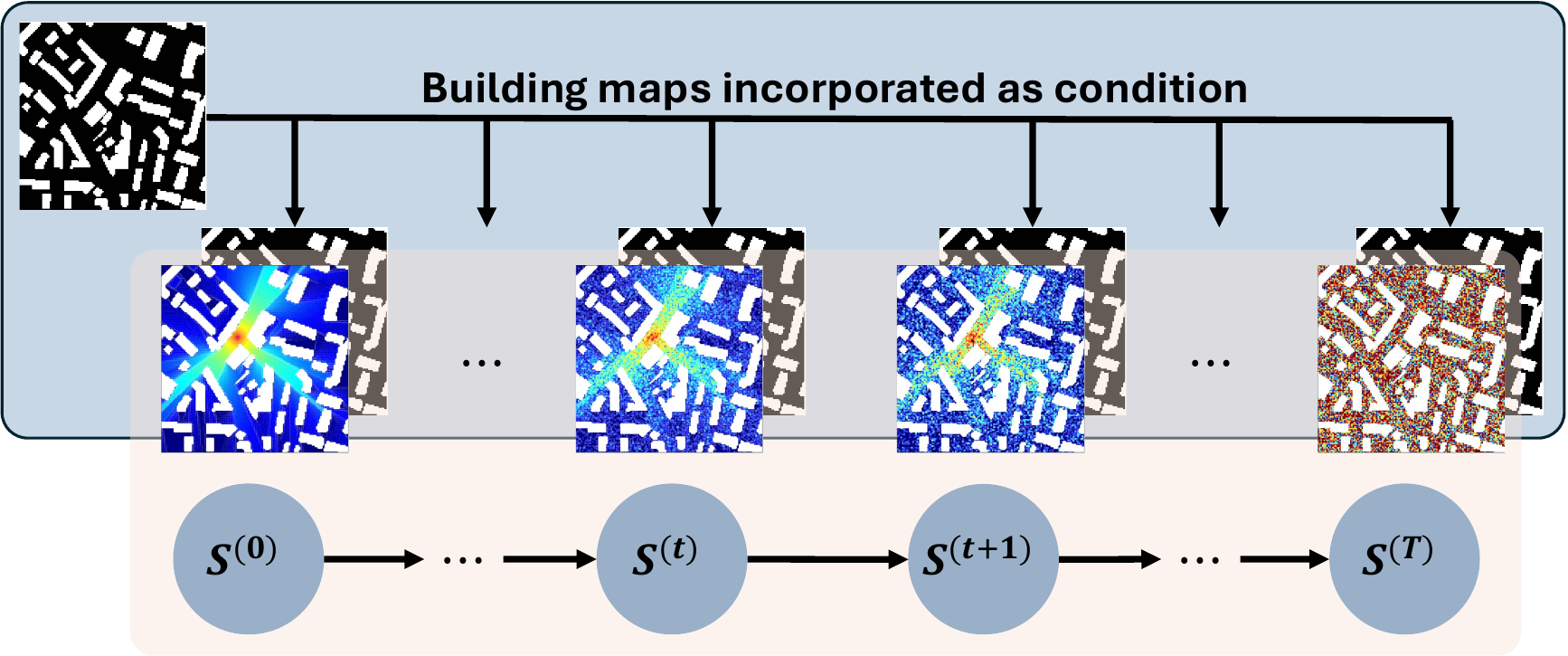}
    \caption{Conditional score model for incorporating auxiliary information, where the building map is adopted as the network input.}
    \label{fig:slf_conditional}
\end{figure}

\section{Recoverability of Posterior Sampling} \label{sec:recoverability}
In this section, we show that posterior sampling, i.e., $\{\widehat{\bm S}_r,\widehat{\bm c}_r\}\sim p(\bm S,\bm C|\bm y)$, recovers the ground-truth radio map, i.e., $\mathbm{X}=\sum_{r=1}^R \bm S_r\circ\bm c_r$, up to bounded errors. 
We also show that this process is robust: even when the learned scores and approximated likelihood present a certain gap from the ground-truth $p(\bm S,\bm C|\bm y)$, the estimated radio map is still reasonably accurate. For simplicity of notation, we present the results for unconditional score models $\bm f_{\rm SLF}(\bm S_r^{(t)},t)$.
Nonetheless, extension to settings with building maps and conditional scores $f_{\rm SLF,aux}(\bm S_r^{(t)},t,\bm m)$ in Sec.~\ref{subsec:condition} is within reach.

To move forward, we impose the following assumptions on the latent factors:
\begin{assumption} \label{assump:structured_latent}
Both the ground-truth and estimated radio maps satisfy the following:
\begin{align}
\|\bm S_r\|_{\rm F} \le \gamma_{\rm s},
\ 
\|\bm c_r\|_2 \le \gamma_{\rm c},
\qquad
\forall r\in[R],
\label{eq:c_bound}
\end{align}
where $\gamma_s$ and $\gamma_c$ are positive constants.
\end{assumption}

Let $\bm z\in\mathbb{R}^{(IJ+K)R}$ be the parameter vector collecting all the unknown parameters in our model. That is, we have $\bm z \triangleq {\rm vec}(\bm Z) \in \mathbb R^{(IJ+K)R}$ where
\begin{align}\label{eq:z}
\bm Z \triangleq
\begin{bmatrix}
\mathrm{vec}(\bm S_1) & \cdots & \mathrm{vec}(\bm S_R)\\
\bm c_1 & \cdots & \bm c_R
\end{bmatrix}
\in \mathbb R^{(IJ+K)\times R}.
\end{align}
In addition, we define a map $\bm h(\cdot)$ applied onto $\bm z$, i.e.,
\begin{align}\label{eq:h(z)}
    \bm x = \bm h(\bm z)
    \triangleq
    \mathrm{vec}\!\Big(\sum_{r=1}^R \bm S_r \circ \bm c_r\Big)
    \in \mathbb R^{IJK}.
\end{align}
Let $p_{\bm z}$ denote the probability density function of $\bm z$, and $p_{\bm x}$ denote the distribution induced by $\bm x=\bm h(\bm z)$.
The support of $p_{\bm x}$ is therefore \textit{included} in the following set:
\begin{align}\label{eq:Xset}
\mathbb X
\triangleq
\{\bm x \in &\mathbb R^{IJK} \mid \bm x = \bm h(\bm z),\ \bm z ={\rm vec}(\bm Z), \nonumber \\
&\|\bm S_r\|_{\rm F} \le \gamma_{\rm s},\ 
\|\bm c_r\|_2 \le \gamma_{\rm c},\
\forall r\in[R]\}.
\end{align}

Our analysis uses the notion of $\epsilon$-net and its covering number \cite{shalev2014understanding,mohri2018foundations}, whose definitions are as follows:
\begin{Def}[$\epsilon$-Net and Covering Number]
    Let $\mathbb A \subseteq \mathbb R^d$ and $\epsilon > 0$. A finite set $\mathbb M_{\epsilon} \triangleq \{ \underline{\bm a}_1,\ldots,\underline{\bm a}_M \} \subseteq \mathbb R^d$ is called an $\epsilon$-net of $\mathbb A$ with respect to the $\ell_2$ norm if, for every $\bm a \in \mathbb A$, there exists some $\underline{\bm a}_i \in \mathbb M_{\epsilon}$ such that
    \begin{align}
        \|\underline{\bm a}_i - \bm a \|_2 \leq \epsilon. \nonumber
    \end{align}
    The covering number of $\mathbb A$ is then defined as 
    \begin{align}
        \mathrm{N}(\mathbb A,\epsilon) \triangleq \min \{ |\mathbb M_{\epsilon}| : \mathbb M_{\epsilon} \text{ is an } \epsilon\text{-net of } \mathbb A \}, \nonumber
    \end{align}
    where $|\mathbb M_\epsilon|$ denotes the cardinality of $\mathbb M_\epsilon$.
\end{Def}
Under the above definitions,
the set in \eqref{eq:Xset} satisfies the following characterization:
\begin{Fact}\label{fact:covering_X} Let $\mathbb X$ be defined as in \eqref{eq:Xset}. Then, every $\bm x \in \mathbb X$ satisfies
\[
\max_{\ell} |\bm x_\ell|
\le
{\tau_{\rm x}},
\quad
\tau_{\rm x} \triangleq R\gamma_{\rm s} \gamma_{\rm c}.
\]
Moreover, the covering number of $\epsilon$-net of $\mathbb X$ satisfies
\begin{align}
{\rm N}(\mathbb X, \epsilon)
\le
\left(\frac{3R(\gamma_{\rm s}+\gamma_{\rm c})}{\epsilon}\right)^{(IJ+K)R}
\gamma_{\rm s}^{IJR}
\gamma_{\rm c}^{KR}.
\label{eq:cover_X}
\end{align}
\end{Fact}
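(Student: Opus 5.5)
The plan is to prove the two claims separately: the entrywise bound follows directly from the norm constraints, and the covering bound comes from covering each latent factor separately and pushing those nets through $\bm h(\cdot)$, which is Lipschitz on the constraint set.

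\emph{Entrywise bound.} Each entry of $\bm x=\bm h(\bm z)$ has the form $\bm x_\ell=\sum_{r=1}^R \bm S_r(i,j)\,\bm c_r(k)$. Any single entry is bounded by the norm of its vector, so $|\bm S_r(i,j)|\le\|\bm S_r\|_{\rm F}\le\gamma_{\rm s}$ and $|\bm c_r(k)|\le\|\bm c_r\|_2\le\gamma_{\rm c}$. The triangle inequality then gives $|\bm x_\ell|\le R\gamma_{\rm s}\gamma_{\rm c}=\tau_{\rm x}$.

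\emph{Covering bound.} I would first establish a Lipschitz-type estimate for $\bm h$ on the constraint set. For two feasible parameter sets $\{\bm S_r,\bm c_r\}$ and $\{\bm S_r',\bm c_r'\}$, use $\|\bm S\circ\bm c\|_{\rm F}=\|\bm S\|_{\rm F}\|\bm c\|_2$ and the splitting $\bm S_r\circ\bm c_r-\bm S_r'\circ\bm c_r'=(\bm S_r-\bm S_r')\circ\bm c_r+\bm S_r'\circ(\bm c_r-\bm c_r')$. Together with the triangle inequality this yields
\[
\|\bm h(\bm z)-\bm h(\bm z')\|_2\le\sum_{r=1}^R\big(\gamma_{\rm c}\|\bm S_r-\bm S_r'\|_{\rm F}+\gamma_{\rm s}\|\bm c_r-\bm c_r'\|_2\big).
\]
Next, set $\delta=\epsilon/(R(\gamma_{\rm s}+\gamma_{\rm c}))$. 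Take a $\delta$-net of the Frobenius ball of radius $\gamma_{\rm s}$ in $\mathbb R^{IJ}$ and a $\delta$-net of the $\ell_2$ ball of radius $\gamma_{\rm c}$ in $\mathbb R^K$. By the standard volumetric argument (packing disjoint balls of radius $\delta/2$), these nets can be chosen with at most $(1+2\gamma_{\rm s}/\delta)^{IJ}\le(3\gamma_{\rm s}/\delta)^{IJ}$ and $(3\gamma_{\rm c}/\delta)^{K}$ points, respectively. Form the product net over all $R$ components and map it through $\bm h$. For any $\bm x=\bm h(\bm z)\in\mathbb X$, pick the nearest net point componentwise; the estimate above bounds the error by $R\delta(\gamma_{\rm c}+\gamma_{\rm s})=\epsilon$. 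Hence the image is an $\epsilon$-net of $\mathbb X$. This is allowed because the definition does not require net points to lie in $\mathbb X$, although here they do, since the net points can be taken inside the balls.

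Counting points, the image net has at most
\[
\Big(\tfrac{3\gamma_{\rm s}R(\gamma_{\rm s}+\gamma_{\rm c})}{\epsilon}\Big)^{IJR}\Big(\tfrac{3\gamma_{\rm c}R(\gamma_{\rm s}+\gamma_{\rm c})}{\epsilon}\Big)^{KR}
\]
elements. Rearranged, this is exactly \eqref{eq:cover_X}.

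The only delicate step is the simplification $1+2\gamma/\delta\le3\gamma/\delta$, which requires $\delta\le\min(\gamma_{\rm s},\gamma_{\rm c})$. I would handle this by stating that the bound is meant for this regime of small $\epsilon$, which is the regime used later. Alternatively, when $\delta>\gamma$ the single point $\bm 0$ already covers the corresponding ball, so a one-point net suffices. In that case the corresponding factor $3\gamma/\delta$ may fall below one, so the stated product could underestimate the true count; I would therefore restrict the claim to $\epsilon\le R(\gamma_{\rm s}+\gamma_{\rm c})\min(\gamma_{\rm s},\gamma_{\rm c})$.
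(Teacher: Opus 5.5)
Your proof is correct and is essentially the paper's argument. The paper defers to the product-of-nets lemmas it cites, and it carries out the same computation explicitly in the proof of Corollary~\ref{cor:cover_lowrank}: cover each $\bm S_r$ and $\bm c_r$ at scale $\epsilon/(R(\gamma_{\rm s}+\gamma_{\rm c}))$ with volumetric nets of size $(3\gamma/\delta)^{d}$, then push the product net through the Lipschitz map $\bm h$.

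Your caveat that $1+2\gamma/\delta\le 3\gamma/\delta$ requires $\delta\le\min(\gamma_{\rm s},\gamma_{\rm c})$ is legitimate, and the paper omits it. However, your remark that this is ``the regime used later'' does not hold automatically, since the theorems invoke the bound at radius $\sqrt{IJK}\epsilon/4$ for every $\epsilon<4\tau_{\rm x}/3$.
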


\begin{IEEEproof}
The bound follows from a standard covering-number derivation for matrix decomposition/completion models \cite{Shrestha2022DeepSpectrum,fornasier2021robust,xu2025radio,wang2012stability}; see a derivation in the supplementary materials for completeness.
\end{IEEEproof}

\subsection{Perfect Sampler Case}
In this subsection, we first characterize the performance of \texttt{FactorDPS} under an ideal assumption that the algorithm produces samples exactly from $p(\bm S,\bm C|\bm y)$, which is the exact objective of the posterior sampling process.
We show the following theorem:

\begin{theorem}
\label{thm:perfect_case}
Denote the ground-truth latent parameters and radio map as $\bm z^\natural \sim p_{\bm z}$ and $\bm x^\natural = \bm h(\bm z^\natural)$, respectively. 
Under the data acquisition model in \eqref{eq:classicIP}-\eqref{eq:Pomega}, assume that the sensor locations are uniformly sampled from the $I\times J$ grids, and that \texttt{FactorDPS} produces $\hat{\bm z}\sim p(\bm z\mid \bm y)$ and $\hat{\bm x}=\bm h(\hat{\bm z}) \in \mathbb{X} $. 
For any $\epsilon\in(0,\nicefrac{4\tau_x}{3})$
and $\delta\in(0,1)$, assume that
\begin{align}\label{eq:N_bound}
    &N \geq { C(\tau_{\rm x},\epsilon)} \bigg(  (IJ+K)R\log(\frac{12R(\gamma_{\rm s}+\gamma_{\rm c})}{\sqrt{IJK}\epsilon}) \nonumber\\
    &\qquad \quad + IJR\log\gamma_{\rm s} + KR \log \gamma_{\rm c} + \log \frac{8}{\delta}  \bigg), 
\end{align}
where
\begin{align}\label{eq:C_epsilon}
{C(\tau_{\rm x},\epsilon)} \triangleq \max\left\{ \frac{512\tau_{\rm x}^4}{\epsilon^4},\frac{32\sigma^2(\epsilon^2+8\sigma^2+{64\tau_{\rm x}^2})}{K\epsilon^4} \right\}.   
\end{align}
Then, with probability at least $ 1-\delta$, the following holds
\[
\frac{\|\bm x^\natural - \hat{\bm x}\|_2}{\sqrt{IJK}} < \epsilon.
\]
\end{theorem}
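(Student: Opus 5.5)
The plan is to adapt the instance-optimal posterior sampling argument of \cite{jalal2021instance}, with the Gaussian matrix replaced by the random fiber-selection operator $\bm A=\bm M_{\bm\varOmega}\otimes\bm I_K$ and the finite covering supplied by Fact~\ref{fact:covering_X}. The key symmetry is that when $\hat{\bm z}\sim p(\bm z\mid\bm y)$ and $\bm z^\natural\sim p_{\bm z}$, the pair $(\bm x^\natural,\hat{\bm x})$ is exchangeable given $\bm y$. Hence it suffices to bound the probability that some ``bad'' candidate $\bm x'\in\mathbb X$ with $\|\bm x'-\bm x^\natural\|_2\ge\epsilon\sqrt{IJK}$ has likelihood $p(\bm y\mid\bm x')$ comparable to or larger than $p(\bm y\mid\bm x^\natural)$. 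Following \cite{jalal2021instance}, the posterior mass placed on the bad region is at most the prior-weighted ratio of likelihoods. After averaging over $\bm z^\natural$, this reduces to a uniform statement over $\mathbb X$: with high probability over $\bm\varOmega$ and $\bm w$, every bad $\bm x'$ satisfies $\|\bm y-\bm A\bm x'\|_2^2-\|\bm y-\bm A\bm x^\natural\|_2^2\gtrsim NK\epsilon^2$, which makes its likelihood exponentially small.

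\textbf{Step 1 (concentration for a fixed pair).} Fix $\bm x,\bm x'\in\mathbb X$ and set $\bm d=\bm x-\bm x'$. Since the sensor locations are uniform, $\frac1N\|\bm A\bm d\|_2^2$ is an average of $N$ i.i.d.\ terms $\|\bm d_{(i_n,j_n)}\|_2^2$. Each term has mean $\|\bm d\|_2^2/(IJ)$ and lies in $[0,4K\tau_{\rm x}^2]$ by Fact~\ref{fact:covering_X}. Hoeffding's inequality then gives
\[
\frac1{NK}\|\bm A\bm d\|_2^2\ge \frac{\|\bm d\|_2^2}{IJK}-\frac{\epsilon^2}{4}
\]
with failure probability $\exp(-cN\epsilon^4/\tau_{\rm x}^4)$. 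This is the source of the $512\tau_{\rm x}^4/\epsilon^4$ term in $C(\tau_{\rm x},\epsilon)$. For the noise cross term $2\langle\bm w,\bm A\bm d\rangle$, I will condition on $\bm\varOmega$; the term is then Gaussian with variance $4\sigma^2\|\bm A\bm d\|_2^2\le 16\sigma^2NK\tau_{\rm x}^2$. A sub-Gaussian tail gives deviations of order $NK\epsilon^2$ with probability $\exp(-cNK\epsilon^4/(\sigma^2(\epsilon^2+\sigma^2+\tau_{\rm x}^2)))$, accounting for the second entry of $C$ and its $1/K$ factor. The $\epsilon^2$ and $8\sigma^2$ pieces come from also controlling $\|\bm w\|_2^2$ around $NK\sigma^2$ via a chi-square bound.

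\textbf{Step 2 (union bound and net extension).} Take an $\epsilon\sqrt{IJK}/4$-net of $\mathbb X$. By Fact~\ref{fact:covering_X}, its log-cardinality is $(IJ+K)R\log\frac{12R(\gamma_{\rm s}+\gamma_{\rm c})}{\sqrt{IJK}\epsilon}+IJR\log\gamma_{\rm s}+KR\log\gamma_{\rm c}$, which matches \eqref{eq:N_bound} term by term. Apply Step 1 to all net points paired with $\bm x^\natural$, and split $\delta$ among the events, which produces the $\log(8/\delta)$ term. To extend to arbitrary $\bm x'\in\mathbb X$, I will use that $\bm A$ is a selection, so $\|\bm A\bm u\|_2\le\|\bm u\|_2$; moreover the per-sample contribution of the net error is controlled by $\tau_{\rm x}$, so the residual difference changes by at most a constant fraction of $NK\epsilon^2$. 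The restriction $\epsilon<4\tau_{\rm x}/3$ is what keeps the Hoeffding and sub-Gaussian regimes consistent.

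\textbf{Step 3 (posterior sampling conclusion).} On the good event, every bad $\bm x'$ has likelihood ratio $p(\bm y\mid\bm x')/p(\bm y\mid\bm x^\natural)\le e^{-cNK\epsilon^2/\sigma^2}$. Using exchangeability, as in Lemma A.1 of \cite{jalal2021instance}, the probability that the posterior sample is bad is at most twice the failure probability of the good event. This yields $\|\bm x^\natural-\hat{\bm x}\|_2<\epsilon\sqrt{IJK}$ with probability at least $1-\delta$.

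The main obstacle is Step 3. Exchangeability alone does not control posterior mass in the bad region, because that mass must be weighed against the prior mass of a small ball around $\bm x^\natural$, and the noiseless limit is degenerate. I would first try the approach of \cite{jalal2021instance}: bound $\Pr(\hat{\bm x}\text{ bad})$ by $\Pr(\bm x^\natural\text{ bad relative to }\hat{\bm x})$, i.e.\ by the probability that two independent posterior draws are far apart yet both fit $\bm y$. Since the residual of each draw is close to $\|\bm w\|_2^2$, that event is excluded by the uniform bound of Step 2, which avoids any explicit prior-mass lower bound.
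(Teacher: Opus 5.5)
The real gap is in Step~2, not Step~3. Your route needs a \emph{uniform} event over $\mathbb X$: every $\bm x'\in\mathbb X$ with $\|\bm x'-\bm x^\natural\|_2\ge\sqrt{IJK}\,\epsilon$ must satisfy $\|\bm y-\bm A\bm x'\|_2^2-\|\bm w\|_2^2\gtrsim NK\epsilon^2$. Uniformity is needed because $\hat{\bm x}$ is chosen after $\bm\varOmega$ and $\bm w$ are drawn. Your net extension does not deliver this event at the resolution $\sqrt{IJK}\,\epsilon/4$ that produces the logarithmic term in \eqref{eq:N_bound}.

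Write $\bm x'=\bm b+\bm\Delta$ with $\bm b$ the nearest net point. The selection property only gives $\|\bm A\bm\Delta\|_2\le\|\bm\Delta\|_2\le\sqrt{IJK}\,\epsilon/4$. This is attained when $\bm\Delta$ concentrates on sampled fibers, and it is far above $\sqrt{NK}\,\epsilon$ once $N\ll IJ$. The entrywise bound gives only $\|\bm A\bm\Delta\|_2^2\le 4NK\tau_{\rm x}^2$, which is not a small fraction of $NK\epsilon^2$, because $\epsilon<4\tau_{\rm x}/3$. You also cannot apply the fixed-vector Hoeffding bound of Step~1 to $\bm\Delta$, since the relevant $\bm\Delta$ depends on $\bm\varOmega$. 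So nothing in your argument prevents the residual gap from changing by more than $NK\epsilon^2$ between $\bm x'$ and its net point.

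Within your route, the fix is a net of resolution of order $\sqrt{NK}\,\epsilon^2/(\sigma+\tau_{\rm x})$, so that $\|\bm\Delta\|_2^2$ and $2\|\bm y-\bm A\bm b\|_2\|\bm\Delta\|_2$ are both small against $NK\epsilon^2$. For $N\ll IJ$ this enlarges the logarithmic term and gives a more demanding condition on $N$ than \eqref{eq:N_bound}. The claimed term-by-term match with Fact~\ref{fact:covering_X} therefore does not hold.

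Step~3 is fine as you first stated it. Given such a uniform event $\mathcal E$, exchangeability of $(\bm x^\natural,\hat{\bm x})$ given $\bm y$ lets you swap the two draws. On $\mathcal E$, the swapped event (the two far apart, with $\hat{\bm x}$ fitting $\bm y$ better by $cNK\epsilon^2$) is impossible. Hence $\Pr(\hat{\bm x}\ \text{bad})\le 2\Pr(\mathcal E^{\complement})$, with no prior-mass lower bound needed.

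The idea you are missing, and the one the paper uses, is that no uniform statement is needed at all. The paper proceeds as follows.
\begin{itemize}
\item Partition $\mathbb X$ into the Voronoi cells $\mathbb V_i$ of the net around centers $\bm b_i$.
\item Let $p^{\rm in}_{\bm y,i}$ and $p^{\rm out}_{\bm y,i}$ be the laws of $\bm y$ when $\bm x$ is drawn from the prior restricted to $\mathbb V_i$ and to $\mathbb B_i^{\rm out}=\{\|\bm x-\bm b_i\|_2\ge 3\sqrt{IJK}\,\epsilon/4\}$, respectively.
\item Bayes' rule together with $p(\bm y)\ge\max\{\Pr(\mathbb V_i)\,p^{\rm in}_{\bm y,i},\Pr(\mathbb B_i^{\rm out})\,p^{\rm out}_{\bm y,i}\}$ bounds the cell's contribution $\Pr(\bm x^\natural\in\mathbb V_i)\,\mathbb E[\Pr(\hat{\bm x}\in\mathbb B_i^{\rm out}\mid\bm y)]$ by $\int\min\{p^{\rm in}_{\bm y,i},p^{\rm out}_{\bm y,i}\}=1-\mathrm{TV}(p^{\rm in}_{\bm y,i},p^{\rm out}_{\bm y,i})$.
\item This TV is lower-bounded using the single test $\|\bm y-\bm A\bm b_i\|_2^2\le NK(\sigma^2+\tfrac{5}{16}\epsilon^2)$.
\end{itemize}
The test's two error probabilities are averages over $\bm x$ of fixed-$\bm x$ probabilities. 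So Serfling's inequality over $\bm\varOmega$ and a noncentral chi-square tail over $\bm w$ are used only pointwise, and the union bound runs only over the net centers. That is why the coarse $\sqrt{IJK}\,\epsilon/4$ covering number suffices in the paper's argument.
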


\begin{IEEEproof}
    See the supplementary materials.
\end{IEEEproof}
Theorem~\ref{thm:perfect_case} shows that posterior sampling under the radio map's data structure in \eqref{eq:X=SC} and the data acquisition process in \eqref{eq:classicIP}-\eqref{eq:Pomega} is {\it instance-optimal} \cite{jalal2021instance,jalal2021robust}. That is, any solution $\widehat{\bm x}=\bm h(\widehat{\bm z})$ resulted from a posterior draw $\widehat{\bm z}\sim p(\bm z|\bm y)$ falls in the vicinity of $\bm x^\natural$. 

A remark is that the sample complexity bound for reaching $\epsilon$-accuracy in \eqref{eq:N_bound}
scales with the effective dimension of the radio map (i.e., $O(IJR+KR)$ when $R\leq \sqrt{IJK}$){, consistent} with classical sample complexity bounds of inverse problems. 

\subsection{Imperfect Sampler Case}

Theorem~\ref{thm:perfect_case}'s premise is that one can sample exactly from the posterior $p(\bm z|\bm y)$.
In practice, this condition is hard to meet due to various reasons, e.g., that the learned scores $\bm f_{\bm SLF}$ and $\bm f_{\rm PSD}$ are only approximations for the true scores and that the DPS trick [cf. Eq.~\eqref{eq:S-score-conditional}] does not exactly recover the likelihoods in \eqref{eq:likelihood_score_approx}.

In this subsection, we consider a more realistic setting where our \texttt{FactorDPS} does not exactly sample from $p(\bm z|\bm y)$ but only approximately does so. More specifically, we assume that \texttt{FactorDPS}'s draws end up in a proxy distribution $p'(\bm z|\bm y)\neq p(\bm z|\bm y)$.
To formalize this notion, we first quantify the mismatch between the ground-truth and empirical distributions using the \textit{total variation} (TV) \cite{takezawa2005introduction}:
\begin{Def}[TV-distance]\label{def:tv-distance}
    For two probability distributions $p'$ and $p$ defined on the same space, the TV distance is defined as:
    \begin{align}\label{eq:TV}
        \mathrm{TV}(p',p)
        \triangleq
        \sup_{\mathbb A} |P'(\mathbb A)-P(\mathbb A)|,
    \end{align}
    where $\mathbb A$ is any measurable set within the domain of $p$ and $p'$.
\end{Def}
Under this definition, one can see that
there always exists $\varepsilon_{\rm TV}\in[0,1]$ such that for every measurement
$\bm y\in\mathbb R^{NK}$,
\begin{equation}
\mathrm{TV}\!\Big(p'(\bm z\mid \bm y),\,p(\bm z\mid \bm y)\Big)
\le \varepsilon_{\rm TV},
\label{eq:post_TV_mismatch_final}
\end{equation}
where $p(\bm z\mid \bm y)$ is the ground-truth posterior induced by $p_{\bm z}$
and the likelihood of $\bm y=\bm A\bm h(\bm z)+\bm\xi$, while $p'(\bm z\mid \bm y)$ denotes
the stationary distribution of the \texttt{FactorDPS} process.

Under such imperfect posterior sampling, the recovered estimate $\hat{\bm x}$ may not necessarily lie in the ground truth radio map set $\mathbb X$. To account for this mismatch, we introduce the following {\it extended} support:
\begin{align} \label{eq:G(X)}
    \mathbb G(\mathbb X,{\kappa}) \triangleq \left\{ \hat{\bm x}\in\mathbb R^{IJK}:\; 
    \min_{\bm x\in\mathbb X}\|\hat{\bm x}-\bm x\|_2\le {\kappa}
    \right\}.
\end{align}
It is intuitive to assume that $\widehat{\bm x}$ ends up within $\mathbb G(\mathbb X,{\kappa})$ with an acceptable ${\kappa}$ if $\varepsilon_{\rm TV}$ is not large.
Using this notion, we show the following:

\begin{theorem}
\label{thm:imperfect_TV}
Under the settings of Theorem~\ref{thm:perfect_case}, let $\hat{\bm z}\sim p'(\bm z\mid \bm y)$ be a draw from 
\texttt{FactorDPS}. Assume $\hat{\bm x} = \bm h(\hat{\bm z})\in\mathbb G(\mathbb X,{\kappa})$ for some ${\kappa} \geq 0$.
Then for any $\epsilon\in(0,\nicefrac{4(\tau_x+\kappa)}{3})$ and $\delta \in (0,1)$, if
\begin{align}\label{eq:N_bound_imperf}
    &N \geq {C({ \tau_{\rm x}+\kappa},\epsilon)} \bigg(  (IJ+K)R\log(\frac{12R(\gamma_{\rm s}+\gamma_{\rm c})}{\sqrt{IJK}\epsilon}) \nonumber\\
    &\qquad \quad + IJR\log\gamma_{\rm s} + KR \log \gamma_{\rm c} + \log \frac{8}{\delta}  \bigg), 
\end{align}
where { $C(\cdot,\cdot)$} is defined in \eqref{eq:C_epsilon},
then
\[
\frac{\|\bm x^\natural - \hat{\bm x}\|_2}{\sqrt{IJK}} < \epsilon
\]
with probability at least $ 1-\varepsilon_{\rm TV}-\delta$.
\end{theorem}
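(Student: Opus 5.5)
The plan is to reduce Theorem~\ref{thm:imperfect_TV} to the proof machinery behind Theorem~\ref{thm:perfect_case}, in two layers. The first layer is a TV coupling. The second is a re-run of the perfect-sampler argument with the amplitude bound $\tau_{\rm x}$ inflated to $\tau_{\rm x}+\kappa$.

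\textbf{Coupling.} For fixed $\bm y$, define the bad event
\[
\mathbb B_{\bm y}=\{\bm z:\ \|\bm h(\bm z)-\bm x^\natural\|_2\ge \epsilon\sqrt{IJK}\},
\]
where $\bm x^\natural$ is viewed jointly with $\bm y$. By Definition~\ref{def:tv-distance} and \eqref{eq:post_TV_mismatch_final},
\[
P'(\mathbb B_{\bm y}\mid \bm y)\le P(\mathbb B_{\bm y}\mid\bm y)+\varepsilon_{\rm TV}.
\]
Integrating over the joint law of $(\bm z^\natural,\bm\Omega,\bm w)$ shows the failure probability under $p'$ is at most $\varepsilon_{\rm TV}$ plus the failure probability of an exact posterior draw. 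The one subtlety is that the exact-sampler event must be evaluated for samples landing in the enlarged set. So I will not cite Theorem~\ref{thm:perfect_case} as a black box. Instead I will re-prove its key uniform concentration statement over $\mathbb G(\mathbb X,\kappa)$, which covers both $p$- and $p'$-draws.

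\textbf{Re-running the perfect-sampler argument.} The proof of Theorem~\ref{thm:perfect_case} (following \cite{jalal2021instance}) should go roughly as follows.
\begin{enumerate}
\item Build an $\epsilon'$-net of the candidate set via Fact~\ref{fact:covering_X}, with $\epsilon'\propto\sqrt{IJK}\,\epsilon/4$. This choice is where the $12R(\gamma_{\rm s}+\gamma_{\rm c})/(\sqrt{IJK}\epsilon)$ factor comes from.
\item For each pair of net points that are $\epsilon$-far apart, use Hoeffding/Bernstein over the uniformly sampled fibers to show the normalized measurement distance $\frac{1}{NK}\|\bm A(\bm x-\bm x')\|^2$ concentrates around $\frac{1}{IJK}\|\bm x-\bm x'\|^2$. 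The entries are bounded by $2\tau_{\rm x}$, which gives the $\tau_{\rm x}^4/\epsilon^4$ term.
\item Show the Gaussian noise cross term is controlled, which gives the $\sigma^2$ term divided by $K$.
\item Argue that a posterior draw places mass exponentially small on far hypotheses, because their likelihood ratio against $\bm x^\natural$ is tiny. A union bound over the net then gives probability $1-\delta$.
\end{enumerate}

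For the imperfect case, every candidate $\hat{\bm x}\in\mathbb G(\mathbb X,\kappa)$ lies within $\kappa$ in $\ell_2$ of some point of $\mathbb X$, hence within $\kappa$ entrywise. All boundedness constants $\tau_{\rm x}$ therefore become $\tau_{\rm x}+\kappa$, replacing $C(\tau_{\rm x},\epsilon)$ by $C(\tau_{\rm x}+\kappa,\epsilon)$. The admissible range of $\epsilon$ becomes $(0,4(\tau_{\rm x}+\kappa)/3)$. For the covering, I take the net of $\mathbb X$ and note that an $\epsilon'$-net of $\mathbb X$ induces an $(\epsilon'+\kappa)$-cover of $\mathbb G$. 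This is awkward if $\kappa$ is not small.

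\textbf{Main obstacle.} The hard step is avoiding this loss in the covering. My preferred route: the concentration argument only needs to be uniform over differences $\hat{\bm x}-\bm x^\natural$. I would project $\hat{\bm x}$ to its nearest point $\bar{\bm x}\in\mathbb X$ and apply the net to $\bar{\bm x}$. The residual $\hat{\bm x}-\bar{\bm x}$ I would absorb only through the sup-norm bound $\tau_{\rm x}+\kappa$ in the Hoeffding range, so the covering number \eqref{eq:cover_X} is unchanged. This explains why \eqref{eq:N_bound_imperf} keeps the same log term while only $C$ changes.

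Finally, a union bound adds $\delta$ (the concentration failure) and $\varepsilon_{\rm TV}$ (the coupling), giving success probability at least $1-\varepsilon_{\rm TV}-\delta$.
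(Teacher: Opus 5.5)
Your coupling step is correct and matches the paper. It fixes $\bm y$, applies the TV bound to replace $\hat{\bm x}$ by an exact draw $\tilde{\bm x}\sim p(\bm x\mid\bm y)$, and pulls $\varepsilon_{\rm TV}$ out.

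The gap is in your re-run of the perfect-sampler argument. Step~4, ``a posterior draw places exponentially small mass on far hypotheses because their likelihood ratio against $\bm x^\natural$ is tiny,'' does not hold. Posterior mass is likelihood times prior, normalized by $\int p(\bm y\mid\bm x)\,dP(\bm x)$. For a given $\bm x^\natural$, nothing lower-bounds the prior mass near $\bm x^\natural$ relative to the far region, so a large prior-mass ratio can overwhelm a small likelihood ratio. Pairwise concentration over net points plus a union bound does not address this. A likelihood argument would also need control of points inside cells, not just net points. Since $\bm A$ is a selection operator, $\|\bm A(\bm x-\bm b_i)\|_2/\sqrt{NK}$ can be $\sqrt{IJ/N}$ times the net radius, which would change the log term.

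The missing idea is the averaging argument of \cite{jalal2021instance}, which exploits that $\bm x^\natural$ is itself drawn from the prior. Take a $\sqrt{IJK}\epsilon/4$-net $\{\bm b_i\}$ of $\mathbb X$ with Voronoi cells $\mathbb V_i$. On $\{\bm x^\natural\in\mathbb V_i\}$, failure forces $\tilde{\bm x}\in\mathbb B_i^{\rm out}$, the set of points at distance at least $3\sqrt{IJK}\epsilon/4$ from $\bm b_i$. Let $p^{\rm in}_{\bm y,i}$ and $p^{\rm out}_{\bm y,i}$ be the measurement laws induced by the prior restricted to $\mathbb V_i$ and to $\mathbb B_i^{\rm out}$. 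Bayes' rule, the identity $\Pr(\bm x^\natural\in\mathbb V_i)=\Pr(\tilde{\bm x}\in\mathbb V_i)$, and disjointness of the two regions give
\begin{align*}
&\Pr(\bm x^\natural\in\mathbb V_i)\,\mathbb E\big[\Pr(\tilde{\bm x}\in\mathbb B_i^{\rm out}\mid\bm y)\big]\\
&\quad\le\int\min\{p^{\rm in}_{\bm y,i},p^{\rm out}_{\bm y,i}\}\,d\bm y=1-\mathrm{TV}(p^{\rm in}_{\bm y,i},p^{\rm out}_{\bm y,i}).
\end{align*}
Lemma~\ref{lemma:TVsampling} then shows $\mathbb E_{\bm A}\mathrm{TV}$ is near one using a single test set, a ball around $\bm A\bm b_i$. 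It uses Serfling's inequality for $\|\bm A(\bm x-\bm b_i)\|_2^2$ and noncentral $\chi^2$ concentration for the noise, each applied at a fixed $\bm x$ and then averaged over the restricted prior. No uniform concentration, likelihood ratio, or prior-mass condition is needed. Summing over at most $\mathrm N(\mathbb X,\sqrt{IJK}\epsilon/4)$ cells gives $\delta$.

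Your ``main obstacle'' is a misdiagnosis. After your coupling, the only remaining draw is $\tilde{\bm x}$, which lies in $\mathbb X$, not in the enlarged set. So your stated reason for not reusing the perfect-sampler argument is false, and nothing has to be covered on $\mathbb G(\mathbb X,\kappa)$. In the paper the net only indexes the cell containing $\bm x^\natural$, and $\mathbb B_i^{\rm out}$ is a ball complement that is never covered. That is why the log term in \eqref{eq:N_bound_imperf} is unchanged. The parameter $\kappa$ enters only through the amplitude constant $\gamma_{\rm x}=2(\tau_{\rm x}+\kappa)$ fed to Lemma~\ref{lemma:TVsampling}. Its requirement $\gamma_{\rm x}>2c\epsilon'$, with $c=3$ and inner radius $\epsilon'=\epsilon/4$, is exactly the range $\epsilon<4(\tau_{\rm x}+\kappa)/3$.

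Your projection fix is both unnecessary and unsound as stated. Moving $\hat{\bm x}$ to its nearest point $\bar{\bm x}\in\mathbb X$ can change $\|\hat{\bm x}-\bm x^\natural\|_2$ by up to $\kappa$. An entrywise bound in a Hoeffding range cannot absorb that $\ell_2$ shift.
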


\begin{IEEEproof}
    See the supplementary materials.
\end{IEEEproof}

Theorem~\ref{thm:imperfect_TV} asserts that, even if the posterior sampling process is imperfect, as long as the stationary distribution of the \texttt{FactorDPS} approach is a reasonable approximation of $p(\bm z|\bm y)$ (in terms of bounded TV and the solution support), the sampled $\widehat{\bm x}$ is still within a neighborhood of the ground truth, serving as an acceptable estimate.

\subsection{Sharpening Sample Complexity}\label{subsec:sample_complexity}

In this section, we will show that the sample complexity can be reduced even further. By adopting widely assumed structures on the spatial factors, one can effectively restrict the feasible set and thereby sharpen the theoretical error bound.

\begin{Corollary}
\label{cor:cover_lowrank}
In addition to Assumption~\ref{assump:structured_latent}, suppose that each spatial factor
$\bm S_r\in\mathbb R^{I\times J}$ satisfies ${\rm rank}(\bm S_r)\leq r_{\rm s}$. 
Define the corresponding low-rank radio map set as
\begin{align}
    &\mathbb X_{\rm lr} \triangleq \{\bm x \in \mathbb X \mid  \mathrm{rank}(\bm S_r)\le r_{\rm s},\ 
    \forall r\in[R]\}.
\end{align}
Let $\hat{\bm z}$ be drawn from \texttt{FactorDPS}, and assume that $\hat{\bm x}=\bm h(\hat{\bm z})\in\mathbb G(\mathbb X_{\rm lr},{\kappa})$ for some ${ \kappa} \geq 0$.
Then, for any $\epsilon\in(0,\nicefrac{4(\tau_x+\kappa)}{3})$, the guarantee in Theorem \ref{thm:imperfect_TV} continues to hold provided that
\begin{align}
    &N \geq C({ {\tau_{\rm x}+\kappa}},\epsilon) \Big(  R\big((I + J+1)r_{\rm s}+K\big)\log(\frac{12R(\gamma_{\rm s}+\gamma_{\rm c})}{\sqrt{IJK} \epsilon}) \nonumber\\
    & \quad + Rr_{\rm s}(I+J+{ 1})\log(3\gamma_{\rm s}) + RK \log \gamma_{\rm c} + \log \frac{8}{\delta}  \Big),
\end{align}
where $C({\cdot,\cdot})$ is defined in \eqref{eq:C_epsilon}.
\end{Corollary}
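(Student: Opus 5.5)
The plan is to rerun the argument behind Theorem~\ref{thm:imperfect_TV} with $\mathbb X$ replaced by $\mathbb X_{\rm lr}$. In that argument the set $\mathbb X$ enters only through two quantities. The first is the entrywise bound $\tau_{\rm x}$, which fixes the constant $C(\tau_{\rm x}+\kappa,\epsilon)$. The second is the covering number ${\rm N}(\mathbb X,\epsilon')$, which enters through a union bound over an $\epsilon'$-net, with $\epsilon'$ of order $\sqrt{IJK}\epsilon/4$; this is where the $\log(12R(\gamma_{\rm s}+\gamma_{\rm c})/(\sqrt{IJK}\epsilon))$ term comes from.

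Since $\mathbb X_{\rm lr}\subseteq\mathbb X$, every $\bm x\in\mathbb X_{\rm lr}$ still satisfies $\max_\ell|\bm x_\ell|\le\tau_{\rm x}$. Also $\mathbb G(\mathbb X_{\rm lr},\kappa)\subseteq\mathbb G(\mathbb X,\kappa)$, so the concentration step (a Hoeffding/Bernstein bound on $\|\bm A(\bm x-\bm x')\|_2^2$ under uniform fiber sampling), the noise term, and the TV-transfer step, which costs an extra $\varepsilon_{\rm TV}$ in probability, all carry over verbatim. The key structural point is that the ground-truth prior is now supported on $\mathbb X_{\rm lr}$. The "bad set" of candidates far from $\bm x^\natural$ that must be excluded with high probability therefore only needs to be covered inside $\mathbb X_{\rm lr}$, enlarged by $\kappa$ as before. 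Consequently, the only change in the final bound is that ${\rm N}(\mathbb X,\cdot)$ is replaced by ${\rm N}(\mathbb X_{\rm lr},\cdot)$.

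The core new ingredient is a covering-number analogue of Fact~\ref{fact:covering_X}, namely
\[
{\rm N}(\mathbb X_{\rm lr},\epsilon)\le\Big(\tfrac{3R(\gamma_{\rm s}+\gamma_{\rm c})}{\epsilon}\Big)^{R((I+J+1)r_{\rm s}+K)}(3\gamma_{\rm s})^{R r_{\rm s}(I+J+1)}\gamma_{\rm c}^{KR}.
\]
I would prove it in three steps.

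\begin{enumerate}
\item Cover the set $\{\bm S:\ {\rm rank}(\bm S)\le r_{\rm s},\ \|\bm S\|_{\rm F}\le\gamma_{\rm s}\}$ using the SVD parametrization $\bm S=\bm U\bm\Sigma\bm V^\top$, following the standard argument of Cand\`es--Plan. Net the diagonal $\bm\Sigma$ in a Frobenius ball ($r_{\rm s}$ parameters) and net $\bm U,\bm V$ with orthonormal columns in the $\|\cdot\|_{1,2}$ norm ($Ir_{\rm s}$ and $Jr_{\rm s}$ parameters). This gives a net of size at most $(c\gamma_{\rm s}/\epsilon_s)^{(I+J+1)r_{\rm s}}$, which accounts for the factors $(I+J+1)r_{\rm s}$ and $3\gamma_{\rm s}$.
\item Cover each $\bm c_r$ in an $\ell_2$-ball of radius $\gamma_{\rm c}$ exactly as in Fact~\ref{fact:covering_X}.
\item Use the Lipschitz bound
\[
\Big\|\sum_r\bm S_r\circ\bm c_r-\sum_r\underline{\bm S}_r\circ\underline{\bm c}_r\Big\|_{\rm F}\le\sum_r\big(\gamma_{\rm c}\|\bm S_r-\underline{\bm S}_r\|_{\rm F}+\gamma_{\rm s}\|\bm c_r-\underline{\bm c}_r\|_2\big)
\]
to choose the per-factor radii $\epsilon_s,\epsilon_c$ so that the tensor-level error is at most $\epsilon$. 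Taking the product over $r\in[R]$ then yields the displayed bound.
\end{enumerate}

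Substituting $\log{\rm N}(\mathbb X_{\rm lr},\sqrt{IJK}\epsilon/4)$ into the sample-size condition from the proof of Theorem~\ref{thm:imperfect_TV} gives exactly the stated requirement on $N$.

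The main obstacle is the rank-constrained covering in step 1, in particular getting the constants so that the bound appears as $(3\gamma_{\rm s})^{Rr_{\rm s}(I+J+1)}$ times the common $\epsilon$-dependent factor. The orthonormal factors $\bm U,\bm V$ live in a unit $\|\cdot\|_{1,2}$-ball, which has radius $1$ rather than $\gamma_{\rm s}$. This mixes radii $1$ and $\gamma_{\rm s}$ and requires splitting $\epsilon_s$ into three parts, one each for $\bm U$, $\bm\Sigma$, and $\bm V$. I would first try radius $\epsilon_s/3$ for each part. I would then bound $1\le\gamma_{\rm s}$, or absorb the difference into the constant $3\gamma_{\rm s}$, so that all $(I+J+1)r_{\rm s}$ coordinates share the same base; this is what produces the extra ``$+1$'' and the constant $3$. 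A secondary concern is checking that the proof of Theorem~\ref{thm:imperfect_TV} really uses the covering only of the set containing the prior support, not of $\mathbb X$ itself. If instead it covers the support of $\hat{\bm x}$, the assumption $\hat{\bm x}\in\mathbb G(\mathbb X_{\rm lr},\kappa)$ covers that case as well, since an $\epsilon$-net of $\mathbb X_{\rm lr}$ is an $(\epsilon+\kappa)$-net of $\mathbb G(\mathbb X_{\rm lr},\kappa)$.
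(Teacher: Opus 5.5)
Your proposal is correct and follows the paper's proof. The argument of Theorem~\ref{thm:imperfect_TV} is reused verbatim with ${\rm N}(\mathbb X,\cdot)$ replaced by ${\rm N}(\mathbb X_{\rm lr},\cdot)$, which is bounded by the $R$-fold product of per-factor nets at scale $R(\gamma_{\rm s}+\gamma_{\rm c})\epsilon'$ via your Lipschitz estimate. The only difference is that the paper cites the low-rank covering bound $(9\gamma_{\rm s}/\epsilon)^{(I+J+1)r_{\rm s}}$ from \cite{wang2012stability} rather than rederiving it by the Cand\`es--Plan SVD argument.

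The constant issue you flag needs no $\gamma_{\rm s}\ge 1$ step. Net $\bm\Sigma$ at radius $\epsilon/3$ in the $\gamma_{\rm s}$-ball and $\bm U,\bm V$ at radius $\epsilon/(3\gamma_{\rm s})$ in the unit $\|\cdot\|_{1,2}$-ball. This gives exactly $(9\gamma_{\rm s}/\epsilon)^{(I+J+1)r_{\rm s}}=(3/\epsilon)^{(I+J+1)r_{\rm s}}(3\gamma_{\rm s})^{(I+J+1)r_{\rm s}}$, and hence the stated factor $(3\gamma_{\rm s})^{Rr_{\rm s}(I+J+1)}$.
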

\begin{IEEEproof}
    See the supplementary materials.
\end{IEEEproof}

Corollary~\ref{cor:cover_lowrank} indicates that, if we further restrict $\mathbb X$ in \eqref{eq:Xset} to radio maps with low-rank SLFs, then accurate recovery of $\bm x^\natural$ can be achieved with substantially reduced sample complexity compared with Theorems~\ref{thm:perfect_case} and \ref{thm:imperfect_TV}.

\begin{Remark}
A remark is that the low-rank assumption is often adopted in radio map sensing frameworks due to its simplicity \cite{Zhang2020SpectrumViaBlockTerm, sun2024integrated, sun2022propagation}; also see Fig.~\ref{fig:slf_lowrank} for empirical evidence.
However, we emphasize that the low-rank assumption alone has been shown not sufficient to capture the complex nature of SLFs \cite{Shrestha2022DeepSpectrum,Subash2024Quantized,timilsina2025domain}.
Here, our computational framework uses the learned DM of the SLFs, which likely reflects both low rank and other more complex structures.
Therefore, the actual radio map's support set is expected to be a much smaller subset of $ \mathbb X_{\rm lr}$. That is, Corollary~\ref{cor:cover_lowrank} still presents a pessimistic bound. Nonetheless, the bound is much tighter than those in Theorems~\ref{thm:perfect_case}-\ref{thm:imperfect_TV}. Overall, these bounds are intended to provide theoretical support for the effectiveness of \texttt{FactorDPS}, rather than to precisely characterize the number of samples required in practice.
\end{Remark}

\begin{figure}[!t]
    \centering
    \includegraphics[width=0.45\linewidth]{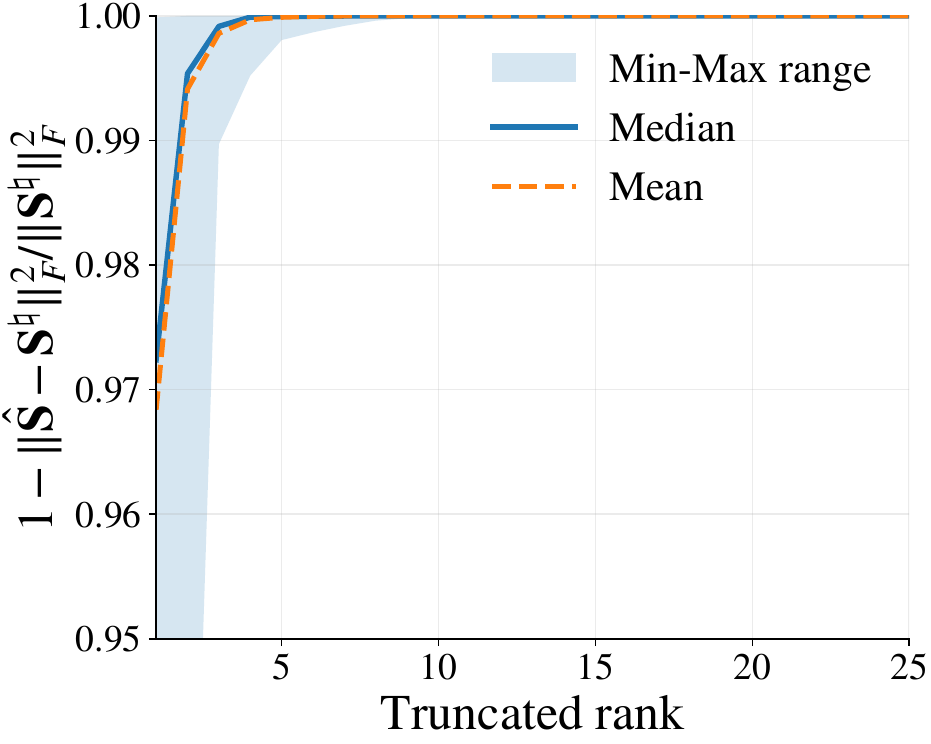}
    \hspace{3mm}
    \includegraphics[width=0.45\linewidth]{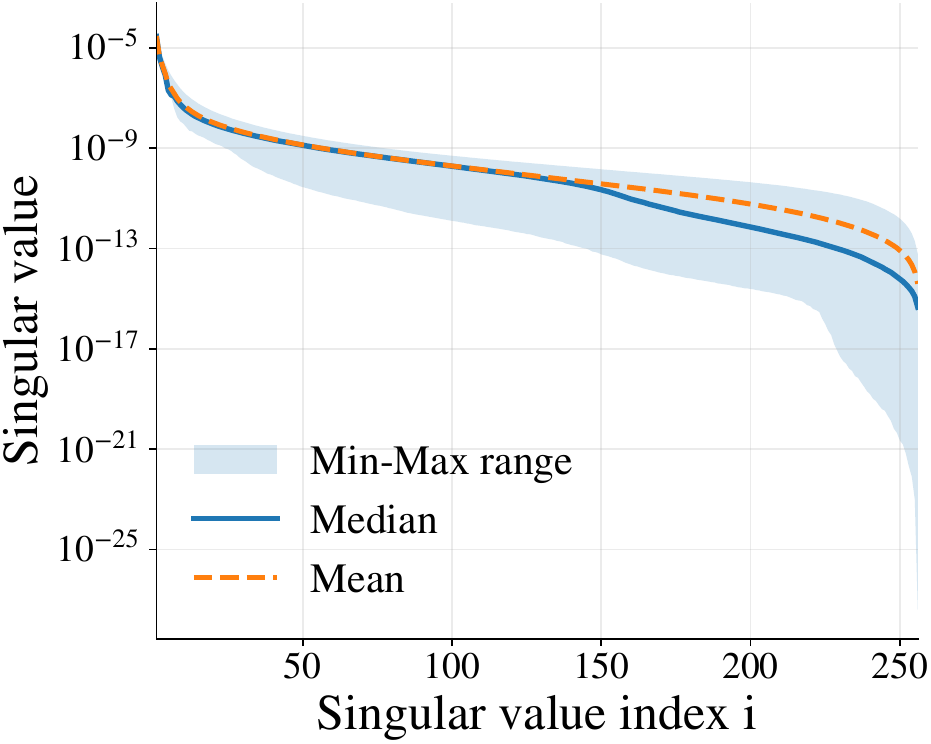}
    \caption{
    {Left:} reconstruction accuracy versus truncated rank;
    {Right:} singular values in a decreasing order.
    Results are computed on $10,000$ SLF samples from the RadioMapSeer dataset~\cite{DatasetPaper}.}
    \label{fig:slf_lowrank}
\end{figure}

\begin{Remark}\label{remark:ourproof}
As mentioned, Theorems~\ref{thm:perfect_case}--\ref{thm:imperfect_TV} and 
Corollaries~\ref{cor:cover_lowrank} establish 
instance-optimal recovery guarantees for posterior-sampling-based radio map estimation. 
Instance-optimality analysis was studied in 
\cite{jalal2021instance,jalal2021robust} for compressive sensing settings, 
where the goal is to recover a structured vector $\bm x$ from Gaussian linear measurements. 
Our work extends this line of research in two important directions. 
First, we consider models in which the prior is specified in a latent space 
(i.e., on $\bm z$) rather than directly on $\bm x$. 
Second, the sensing operator in our problem is not Gaussian but arises from 
tensor fiber subsampling. 
In addition, our analysis employs a more direct, TV-based characterization 
of the accuracy of the posterior sampler (cf. Theorem~\ref{thm:imperfect_TV}).
Compared to the Wasserstein distance based characterizations in \cite{jalal2021instance},
this TV-based characterization provides a lighter proof route for establishing 
recoverability guarantees under imperfect samplers. 
\end{Remark}

\begin{Remark}\label{remark:gap_from_alg}

We also remark on several limitations. The recoverability results in Theorems~\ref{thm:perfect_case}–\ref{thm:imperfect_TV} and Corollary~\ref{cor:cover_lowrank} are established under two important conditions.
First, these results assume that \texttt{FactorDPS} produces samples from $p(\bm x|\bm y)$ or $p’(\bm x|\bm y)$. 
In practice, exact posterior sampling may require infinitely many sampling steps and an accurate likelihood representation beyond approximations such as \ref{eq:S-score-conditional}. Therefore, it remains to establish theoretical guarantees on how closely the final step distribution produced by Algorithm~\ref{alg:lfdps} approximates the ground-truth posterior.
Second, our analysis assumes uniformly sampled spatial locations. In practice, \texttt{FactorDPS} can also accommodate structured missing regions, as demonstrated in Sec.~\ref{subsec:multi-band-cases}. Establishing analogous theoretical guarantees under such structured missing patterns may require additional analysis.
\end{Remark}

\section{Numerical Results}
\label{sec:exp}
\subsection{Simulation Setup}

\subsubsection{Dataset} \label{subsec:Dataset}
We evaluate the proposed method on the RadioMapSeer dataset \cite{DatasetPaper},
which contains radio maps generated via \textit{ray-tracing} (RT) across 701 cities. In each city, 80 independent simulations are conducted by placing a single emitter on a randomly picked building rooftop, and the received signal strength is computed through the \textit{dominant path model} (DPM)  at a horizontal plane at the height of 1.5m, yielding $80$ emitter locations per city. Each map covers a geographical area with a size of $256 \times 256$m$^2$, represented by $256 \times 256$ spatial grids.

In our simulations, we treat the RadioMapSeer-provided single-emitter maps at a single carrier frequency (5.9 GHz) as SLFs (i.e., $\bm S_r$).
We denote by $\bm S_{\ell,n}$ the SLF corresponding to the $\ell$-th transmitter in the $n$-th city, where $\ell \in [80]$ and $n\in[701]$. 
We use disjoint environments for training and testing. Radio maps from city scenarios $\{1,\ldots,625\}$ are used for training, while those indexed by $\{626,\ldots,701\}$ are used for testing, i.e., we have
$
\mathbb S_{\rm tr}^\natural
\triangleq
\left\{
\bm S_{\ell,n} \in \mathbb R^{256\times256}:
\ell\in[80],~n\in[625]
\right\}$ and $
\mathbb S_{\rm te}^\natural
\triangleq
\left\{
\bm S_{\ell,n} \in \mathbb R^{256\times256}:
\ell\in[80],~n\in\{626,\ldots,701\}
\right\}$ as the training and testing SLF pools, respectively.
At test time, $R$ SLFs from the same city are selected from $\mathbb S_{\rm te}^\natural$. For single-band cases, radio maps are constructed according to \eqref{eq:multi-R,single-K}, where $c_r=1$ for all $r\in[R]$. For multi-band cases, $R$ PSDs are generated as in \cite{Shrestha2022DeepSpectrum} (also see Fig.~\ref{fig:PSD}), and the radio maps are synthesized using \eqref{eq:X=SC}.

\subsubsection{Baselines}

\begin{table}[t]
\centering
\caption{Methodological comparison with baseline methods.}
\label{tab:method_comparison}
\footnotesize
\resizebox{\columnwidth}{!}{
\begin{tabular}{lccc}
\hline
Method
& Learned model
& \shortstack{Measurement-\\independent prior}
& \shortstack{Spatio-spectral\\factorization} \\
\hline
\texttt{DeepComp}~\cite{teganya2021deep}
& AE & \xmark & \xmark \\

\texttt{RadioUNet}~\cite{Levie2021radiounet}
& U-Net & \xmark & \xmark \\

\texttt{DowJons}~\cite{Shrestha2022DeepSpectrum}
& AE & \cmark & \cmark \\

\texttt{RM-Gen}~\cite{luo2025denoising}
& DM & \xmark & \xmark \\

\texttt{DPS}~\cite{chung2023diffusion}
& DM & \cmark & \xmark \\

\texttt{FactorDPS}
& DM & \cmark & \cmark \\
\hline
\end{tabular}
}
\end{table}

We compare the proposed methods with several representative baselines. 
Three end-to-end learning-based approaches are considered: 
i) {\tt DeepComp} \cite{teganya2021deep}, which employs an autoencoder architecture; 
ii) {\tt RadioUNet} \cite{Levie2021radiounet}, which adopts a W-Net formed by concatenating two U-Nets for enhanced feature extraction; and 
iii) {\tt RM-Gen} \cite{luo2025denoising}, a diffusion-based approach that performs conditional generation given the measurements. Unlike {\tt FactorDPS}, {\tt RM-Gen} couples the diffusion model with the sensor measurements during training, whereas {\tt FactorDPS} learns a measurement-independent prior and enforces measurement consistency only during inference.
We also include iv) {\tt DowJons} \cite{Shrestha2022DeepSpectrum}, an optimization-enhanced method that exploits the same spatio-spectral factorization as {\tt FactorDPS}, but uses a pretrained AE prior for the SLFs instead of a DM prior.
Table~\ref{tab:method_comparison}
summarizes the key methodological characteristics of {\tt FactorDPS} and
representative deep prior-based methods.

\subsubsection{Algorithm Settings} \label{subsec:training}

The training data are generated following the same procedure as the testing data in Sec.~\ref{subsec:Dataset}, but using $\mathbb S_{\rm tr}^\natural$ instead of $\mathbb S_{\rm te}^\natural$.
The training procedure is summarized as follows:

\begin{itemize}[leftmargin=4mm]

\item \textbf{End-to-end baselines.}
For {\tt DeepComp}, {\tt RadioUNet}, and {\tt RM-Gen}, the measurements and building masks are provided as inputs. During training, the sampling ratio $\tau$ is drawn from $\mathcal{U}(0.01,0.1)$. For each method, three models are trained to match the corresponding testing settings:
i) for the single-emitter, single band setting ($R=1,K=1$), the radio maps in $\mathbb S_{\rm tr}^\natural$ are directly used for training;
ii) for the multi-emitter, single-band setting ($R>1, K=1$), the methods are re-trained using radio maps constructed according to \eqref{eq:multi-R,single-K}, where $R$ is drawn uniformly from $\{1,2,3,4\}$;
iii) for the multi-emitter, multi-band setting ($R>1, K>1$), the methods are trained on per-band radio maps generated as
$\bm X=\sum_{r=1}^{R}c_r\bm S_{\ell_r,n}$,
where $R$ is drawn uniformly from $\{1,2,3,4\}$ and
$c_r\sim\mathcal{U}(0,1)$.
In each epoch, 50{,}000 training samples are used.

\item \textbf{Decomposition based methods.}
{\tt DowJons} and the proposed {\tt FactorDPS} exploit the spatio-spectral decomposition in \eqref{eq:X=SC} and use generative models of single-emitter
SLFs, which are trained directly using $\mathbb S_{\rm tr}^\natural$.
In particular, for {\tt FactorDPS}, the SLF diffusion network takes a noisy
SLF, the building map, and timestep $t$ as inputs and predicts the score
$\bm f_{\rm SLF}$ used in Algorithm~\ref{alg:lfdps}.
For multi-band settings, an additional diffusion network is trained for the
PSD's score function $\bm f_{\rm PSD}$.
A total of 50{,}000 synthetic PSD samples are used for training; see Sec.~\ref{subsec:LatentTraining}.
\end{itemize}

The network architectures of the SLF and PSD DMs in \texttt{FactorDPS} are summarized in the supplementary materials. Both DMs use 2,000 reverse diffusion steps. Other baselines follow their default network setups, with input dimensions adjusted to the $256\times256$ spatial size. The only exception is {\tt RM-Gen}, for which we adopt essentially the same diffusion settings and network architecture as the SLF DM in {\tt FactorDPS}, so as to enable a controlled comparison. The only architectural difference is that {\tt RM-Gen} includes an additional input channel for the sampled measurements.

At the inference stage, for single-band cases, all methods can be directly applied. In particular, {\tt FactorDPS} uses only the learned SLF score model with $\bm c_r=1$ for all $r\in[R]$; see Sec.~\ref{subsec:single-band-case}. For multi-band cases, {\tt DowJons} and {\tt FactorDPS} are directly applicable. In contrast, {\tt DeepComp}, {\tt RadioUNet}, and {\tt RM-Gen} are applied independently to each band. The reconstructed bands are then stacked to form the full radio map tensor.
For all methods that require the knowledge of the number of emitters $R$, unless otherwise specified, we assume that $R$ is known or has been estimated prior to inference. In practice, such information may be obtained from transmitter deployment records maintained by the network operator, or estimated from the measurements using model order estimation methods 
\cite{stoica2005spectral,fu2015self}.

\begin{figure}[t!]
    \centering

    \begin{subfigure}{0.98\linewidth}
        \centering
        \includegraphics[width=\linewidth]{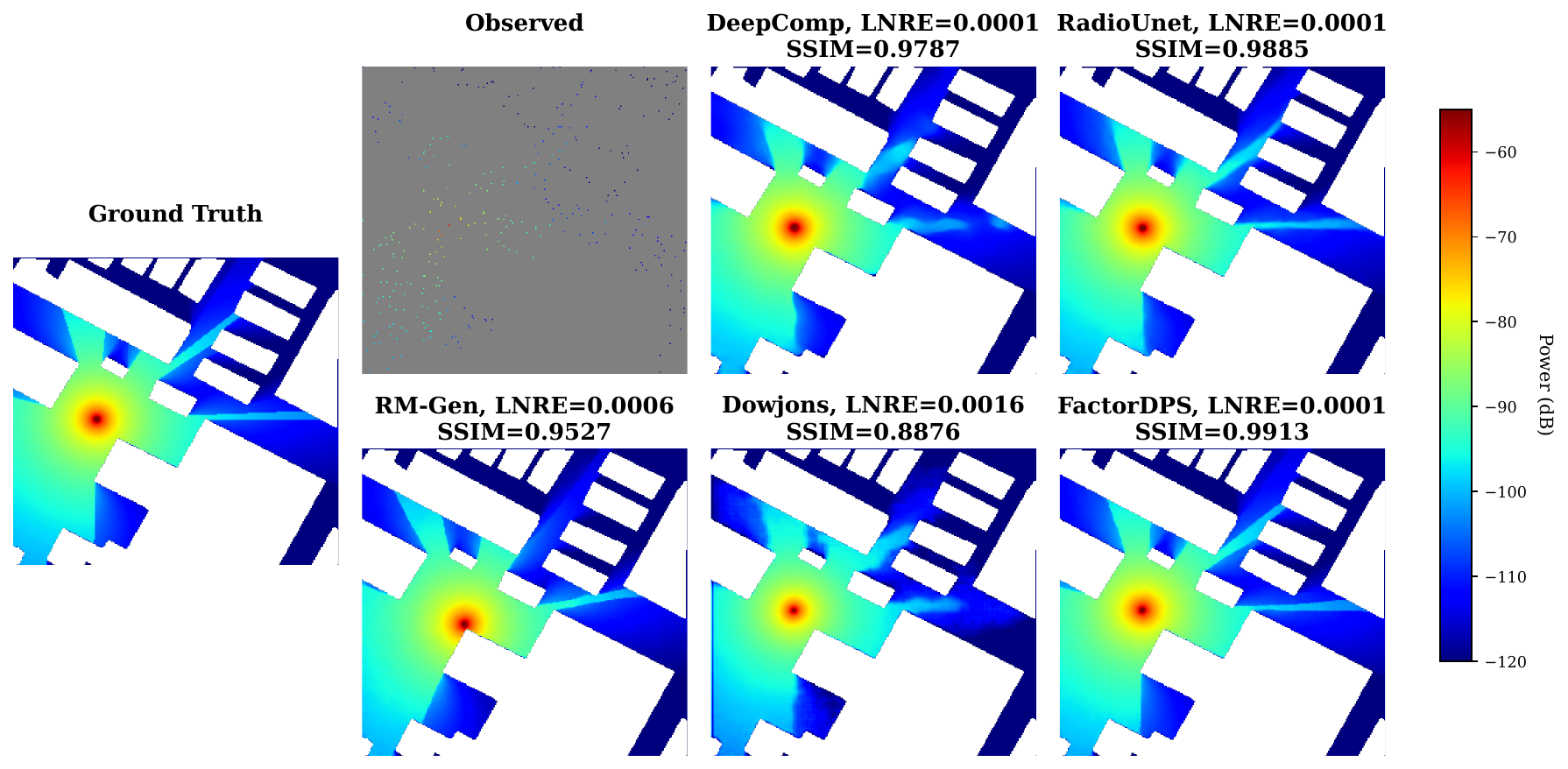}
        \caption{$R=1$}
        \label{subfig:K=1R=1}
    \end{subfigure}
    \hfill
    \begin{subfigure}{0.98\linewidth}
        \centering
        \includegraphics[width=\linewidth]{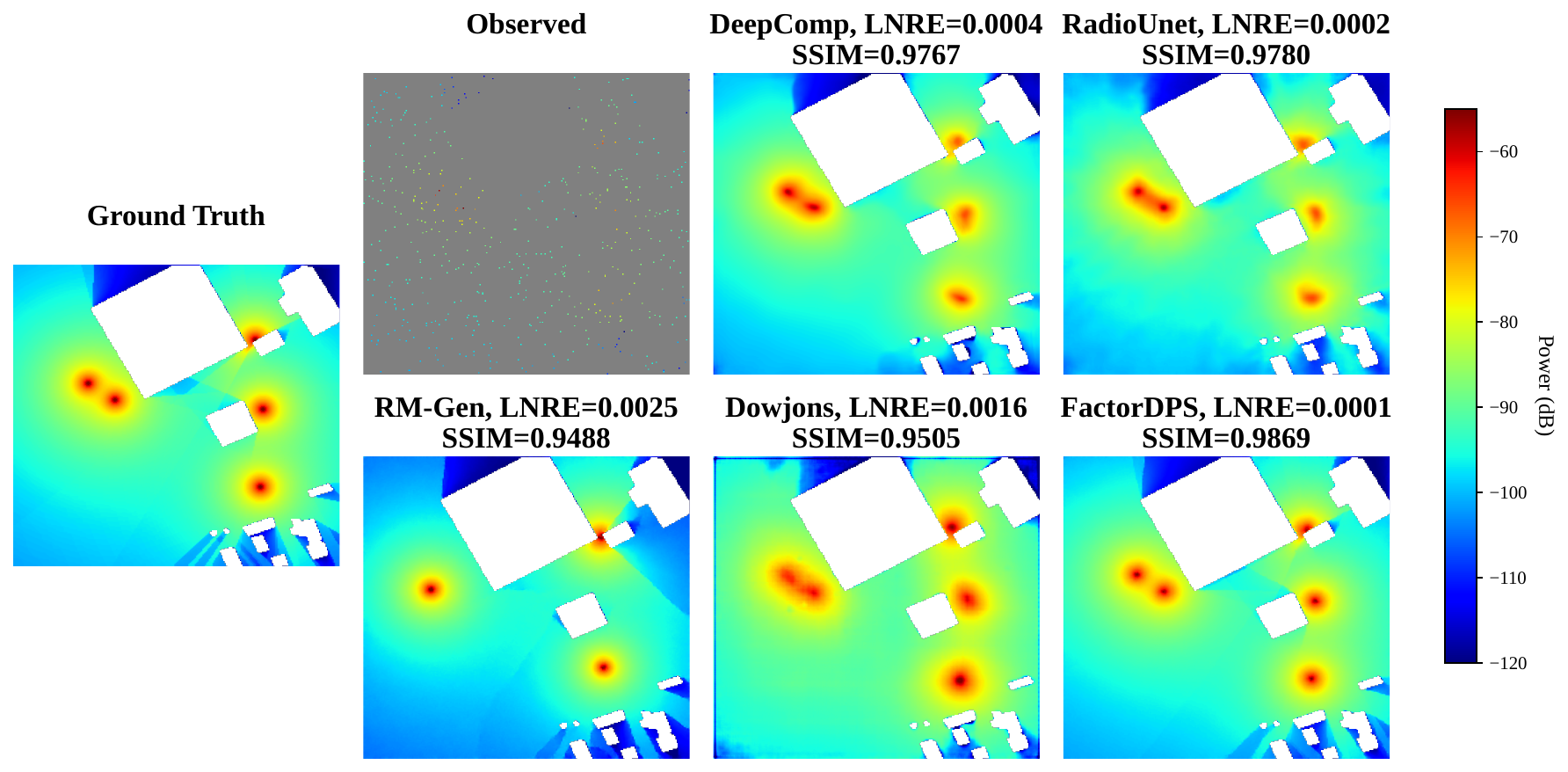}
        \caption{$R=5$}
        \label{subfig:K=1R=N}
    \end{subfigure}
    \caption{Recovered radio maps under $1\%$ sampling ratio in single-band cases ($K=1$).}
    \label{fig:K=1}
\end{figure}

\subsubsection{Metrics}
We evaluate reconstruction performance using the {log-domain normalized reconstruction error} (LNRE) \cite{Subash2024Quantized}:
\begin{align}
    \text{LNRE} = \frac{ \| 10\log(\mathbm X^{\natural}) -  10\log(\hat{\mathbm X}) \|_{\rm F}^2}
    {\| 10\log(\mathbm X^{\natural}) \|_{\rm F}^2},
\end{align}
where $\mathbm X^{\natural}$ and $\hat{\mathbm X}$ denote the ground-truth and reconstructed radio maps, respectively.
{The LNRE is} evaluated in the log domain, as radio maps {have very imbalanced pixel values}. The log-domain evaluation provides a more balanced error characterization across the entire spatial region.
In addition, the structural similarity index measure (SSIM) \cite{Wang2004SSIM} is employed to evaluate perceptual reconstruction quality ({also in the log-domain}). The SSIM score ranges from $0$ to $1$, {with higher scores indicating better performance.} For multi-band data, the mean SSIM (MSSIM) is then obtained by averaging the SSIM values across all $K$ bands.
All reported results are presented as mean $\pm$ standard deviation over 50 Monte Carlo trials. In each trial, a city scenario and the corresponding source SLFs are randomly selected to construct the test radio map.

\subsection{Results of Single-Band Cases}
\begin{table}[t!]
\caption{Performance under different sampling rate $\tau$ ($K=1,R=1$).}
\footnotesize
\centering
\centering
\scalebox{0.72}{
\begin{tabular}{ |c | c c c c c |}
\hline
 & \multicolumn{5}{c|}{LNRE ($\times 10^{-4}$)}\\
$\tau$ & \texttt{DeepComp} & \texttt{RadioUnet} & \texttt{RM-Gen} & \texttt{Dowjons} & \texttt{FactorDPS} \\
 \hline
 1\% &    3.10 $\pm$ 1.55 &    1.16 $\pm$ 0.51 &    3.91 $\pm$ 1.88 &    145.55 $\pm$ 159.81 &    \textbf{0.98} $\pm$ 0.35 \\
 2\% &    1.70 $\pm$ 0.73 &    0.78 $\pm$ 0.31 &    3.10 $\pm$ 1.41 &    57.82 $\pm$ 53.90 &    \textbf{0.73} $\pm$ 0.27 \\
 4\% &    1.75 $\pm$ 1.04 &    \textbf{0.68} $\pm$ 0.28 &    2.69 $\pm$ 1.28 &    34.66 $\pm$ 22.53 &    0.69 $\pm$ 0.27 \\
 6\% &    1.62 $\pm$ 0.90 &    0.65 $\pm$ 0.30 &    1.91 $\pm$ 0.66 &    25.33 $\pm$ 11.36 &    \textbf{0.64} $\pm$ 0.23 \\
 8\% &    1.54 $\pm$ 0.85 &    \textbf{0.57} $\pm$ 0.27 &    2.01 $\pm$ 0.80 &    22.26 $\pm$ 6.84 &    0.57 $\pm$ 0.23 \\
 10\% &    1.44 $\pm$ 0.83 &    0.55 $\pm$ 0.24 &    1.61 $\pm$ 0.52 &    21.54 $\pm$ 8.65 &    \textbf{0.54} $\pm$ 0.22 \\
\hline
\hline
 & \multicolumn{5}{c|}{MSSIM}\\
$\tau$ & \texttt{DeepComp} & \texttt{RadioUnet} & \texttt{RM-Gen} & \texttt{Dowjons} & \texttt{FactorDPS} \\
 \hline
 1\% &    0.966 $\pm$ 0.016 &    0.981 $\pm$ 0.008 &    0.963 $\pm$ 0.015 &    0.793 $\pm$ 0.074 &    \textbf{0.984} $\pm$ 0.006 \\
 2\% &    0.976 $\pm$ 0.012 &    0.986 $\pm$ 0.006 &    0.968 $\pm$ 0.015 &    0.824 $\pm$ 0.076 &    \textbf{0.988} $\pm$ 0.005 \\
 4\% &    0.977 $\pm$ 0.014 &    0.988 $\pm$ 0.006 &    0.971 $\pm$ 0.013 &    0.881 $\pm$ 0.052 &    \textbf{0.990} $\pm$ 0.004 \\
 6\% &    0.978 $\pm$ 0.011 &    0.989 $\pm$ 0.005 &    0.974 $\pm$ 0.010 &    0.876 $\pm$ 0.050 &    \textbf{0.990} $\pm$ 0.003 \\
 8\% &    0.979 $\pm$ 0.011 &    0.991 $\pm$ 0.004 &    0.973 $\pm$ 0.010 &    0.890 $\pm$ 0.053 &    \textbf{0.991} $\pm$ 0.003 \\
 10\% &    0.981 $\pm$ 0.012 &    0.992 $\pm$ 0.004 &    0.979 $\pm$ 0.008 &    0.890 $\pm$ 0.053 &    \textbf{0.992} $\pm$ 0.003 \\
\hline
\end{tabular}}
\label{table:mr_singleK_meanstd_without}
\end{table}

\begin{table}[t!]
\caption{Performance under different $R$ ($K=1$, $\tau=1\%$).}
\footnotesize
\centering
\centering
\scalebox{0.72}{
\begin{tabular}{ |c | c c c c c |}
\hline
 & \multicolumn{5}{c|}{LNRE ($\times 10^{-4}$)}\\
$R$ & \texttt{DeepComp} & \texttt{RadioUnet} & \texttt{RM-Gen} & \texttt{Dowjons} & \texttt{FactorDPS} \\
 \hline
 2 &    3.60 $\pm$ 1.81 &    5.11 $\pm$ 2.51 &    17.71 $\pm$ 8.31 &    75.86 $\pm$ 82.15 &    \textbf{1.10} $\pm$ 0.43 \\
 3 &    3.89 $\pm$ 2.23 &    4.79 $\pm$ 1.31 &    27.84 $\pm$ 11.09 &    75.62 $\pm$ 94.03 &    \textbf{1.19} $\pm$ 0.60 \\
 4 &    4.77 $\pm$ 2.40 &    3.63 $\pm$ 1.19 &    38.68 $\pm$ 13.86 &    126.39 $\pm$ 158.86 &    \textbf{1.69} $\pm$ 1.12 \\
 5 &    5.16 $\pm$ 2.71 &    3.45 $\pm$ 1.09 &    39.14 $\pm$ 14.98 &    57.29 $\pm$ 60.24 &    \textbf{1.90} $\pm$ 1.22 \\
 6 &    5.30 $\pm$ 2.93 &    3.30 $\pm$ 1.43 &    44.81 $\pm$ 13.69 &    88.13 $\pm$ 86.61 &    \textbf{1.95} $\pm$ 0.98 \\
 7 &    5.42 $\pm$ 2.92 &    3.49 $\pm$ 1.16 &    53.42 $\pm$ 18.95 &    123.21 $\pm$ 157.43 &    \textbf{2.22} $\pm$ 1.23 \\
 8 &    6.95 $\pm$ 3.89 &    3.31 $\pm$ 1.21 &    62.67 $\pm$ 18.39 &    112.48 $\pm$ 120.91 &    \textbf{2.82} $\pm$ 1.35 \\
\hline
\hline
 & \multicolumn{5}{c|}{MSSIM}\\
$R$ & \texttt{DeepComp} & \texttt{RadioUnet} & \texttt{RM-Gen} & \texttt{Dowjons} & \texttt{FactorDPS} \\
 \hline
 2 &    0.963 $\pm$ 0.018 &    0.958 $\pm$ 0.016 &    0.939 $\pm$ 0.019 &    0.824 $\pm$ 0.081 &    \textbf{0.982} $\pm$ 0.007 \\
 3 &    0.965 $\pm$ 0.019 &    0.961 $\pm$ 0.016 &    0.939 $\pm$ 0.019 &    0.841 $\pm$ 0.075 &    \textbf{0.983} $\pm$ 0.008 \\
 4 &    0.961 $\pm$ 0.021 &    0.962 $\pm$ 0.018 &    0.926 $\pm$ 0.025 &    0.821 $\pm$ 0.103 &    \textbf{0.989} $\pm$ 0.005 \\
 5 &    0.961 $\pm$ 0.020 &    0.963 $\pm$ 0.017 &    0.920 $\pm$ 0.025 &    0.869 $\pm$ 0.060 &    \textbf{0.980} $\pm$ 0.011 \\
 6 &    0.962 $\pm$ 0.021 &    0.966 $\pm$ 0.016 &    0.920 $\pm$ 0.021 &    0.854 $\pm$ 0.058 &    \textbf{0.980} $\pm$ 0.010 \\
 7 &    0.965 $\pm$ 0.020 &    0.971 $\pm$ 0.015 &    0.919 $\pm$ 0.023 &    0.800 $\pm$ 0.133 &    \textbf{0.981} $\pm$ 0.009 \\
 8 &    0.959 $\pm$ 0.021 &    0.965 $\pm$ 0.014 &    0.907 $\pm$ 0.025 &    0.825 $\pm$ 0.094 &    \textbf{0.980} $\pm$ 0.010 \\
\hline
\end{tabular}}
\label{table:mr_exp6R_meanstd_without}
\end{table}

Fig.~\ref{fig:K=1} illustrates the reconstruction results {by all the methods} under a sampling ratio of $\tau = 1\%$. 
Fig.~\ref{subfig:K=1R=1} presents the single-emitter case, where all compared methods achieve reasonably good recovery performance. 
Fig.~\ref{subfig:K=1R=N} shows {the results obtained from a} more challenging multi-emitter scenario. 
In this setting, {\tt FactorDPS} provides clearer emitter identification and sharper ray-tracing structures. 
The diffusion baseline {\tt RM-Gen} could only recover $R=3$ emitters while $R=5$, revealing generalization issues beyond its training range $R \in [4]$ (see Sec.~\ref{subsec:training}).

Table~\ref{table:mr_singleK_meanstd_without} reports the performance under various sampling ratios $\tau$, with $R=1$ and $K=1$. 
Since this is a relatively simple setting, all methods exhibit similar performance, except for {\tt Dowjons}, which performs worse, possibly due to its relatively simple network design. Overall, {\tt FactorDPS} delivers the best performance.

Table~\ref{table:mr_exp6R_meanstd_without} further summarizes the results across different $R$'s under $\tau=1\%$. 
As the number of emitters increases, the reconstruction task becomes increasingly challenging. Nonetheless, {\tt FactorDPS} outperforms the baselines across all values of $R$.
Again, the diffusion baseline {\tt RM-Gen} performs significantly worse than the proposed {\tt FactorDPS}, and its performance degrades rapidly as $R$ increases.

\subsection{Results of Multi-Band Cases}
\label{subsec:multi-band-cases}
In this subsection, we test the methods using multi-emitter, multi-band (with $K=64$ frequency bands) radio maps.

Fig.~\ref{fig:K=64Normal} presents the reconstruction results at the $48$th frequency bin under a sampling rate of $\tau = 1\%$. 
Fig.~\ref{subfig:R=3} and Fig.~\ref{subfig:R=8} correspond to scenarios with $R=3$ and $R=8$ emitters, respectively. 
In both cases, {\tt FactorDPS} achieves the clearest reconstructions, with more accurate emitter localization and ray-tracing contours.
Moreover, {\tt RM-Gen} again exhibits notable generalization issues in Fig.~\ref{subfig:R=8}, where it reconstructs only the three strongest emitters.

\begin{figure}[t]
    \centering

    \begin{subfigure}{0.98\linewidth}
        \centering
        \includegraphics[width=\linewidth]{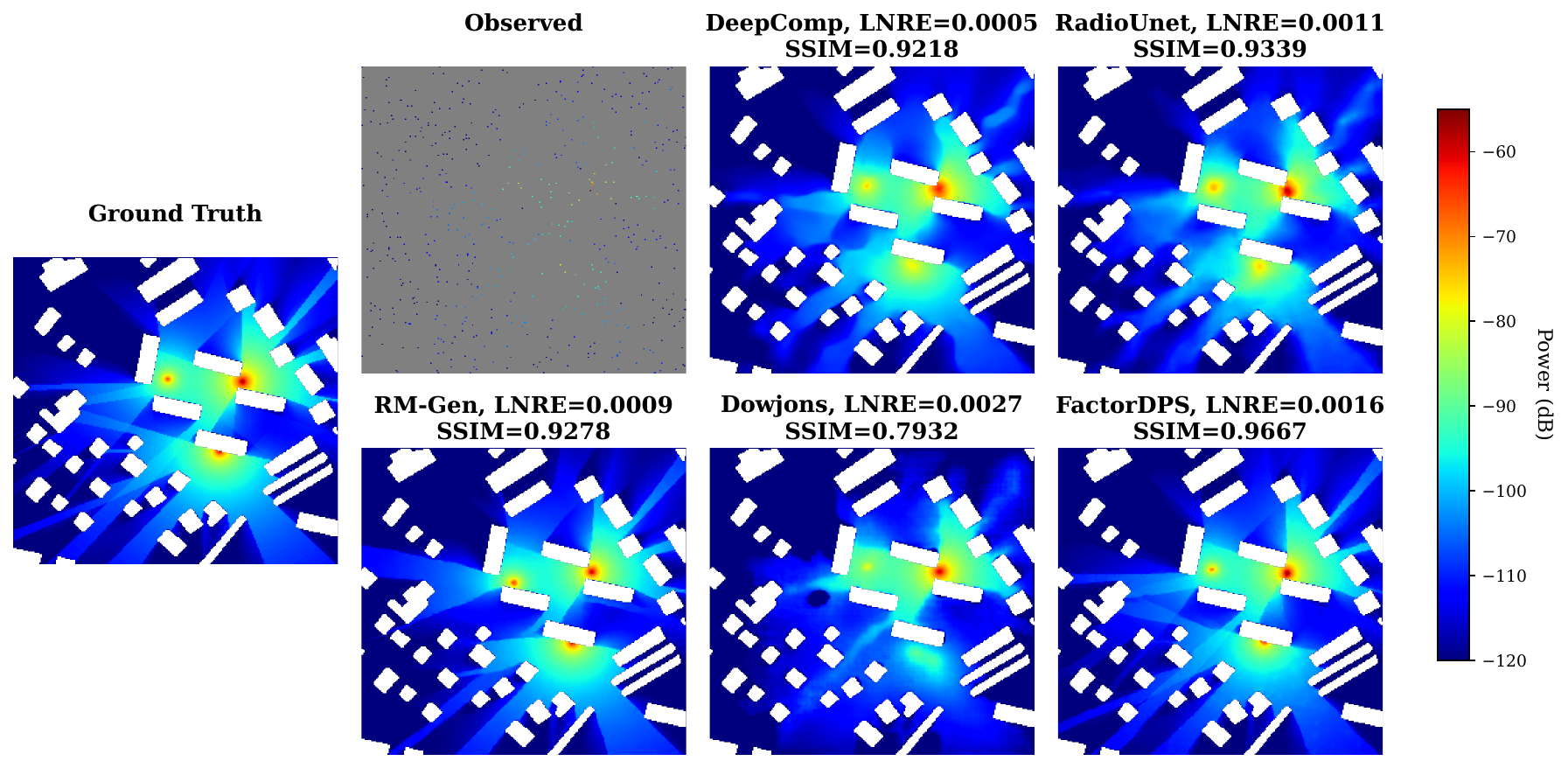}
        \caption{$R=3$.}
        \label{subfig:R=3}
    \end{subfigure}
    \hfill
    \begin{subfigure}{0.98\linewidth}
        \centering
        \includegraphics[width=\linewidth]{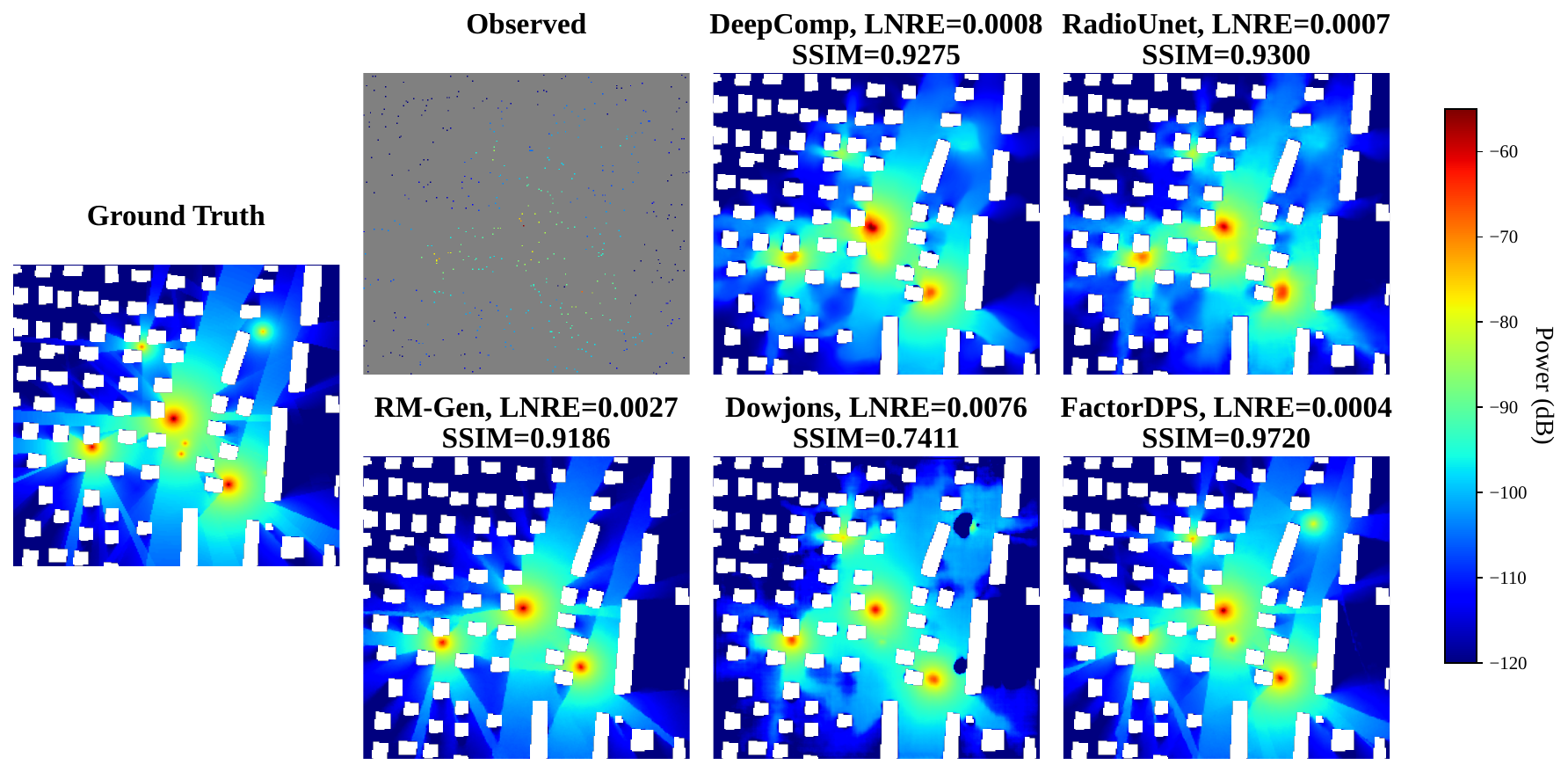}
        \caption{$R=8$.}
        \label{subfig:R=8}
    \end{subfigure}
    \caption{Recovered radio maps at the $48$th frequency bin from $1\%$ sampling ratio.}
    \label{fig:K=64Normal}
\end{figure}

\begin{table}[t!]
\caption{Performance under different sampling rate $\tau$ ($K=64$, $R=3$).}
\footnotesize
\centering
\centering
\scalebox{0.72}{
\begin{tabular}{ |c | c c c c c |}
\hline
 & \multicolumn{5}{c|}{LNRE ($\times 10^{-4}$)}\\
$\tau$ & \texttt{DeepComp} & \texttt{RadioUnet} & \texttt{RM-Gen} & \texttt{Dowjons} & \texttt{FactorDPS} \\
 \hline
 1\% &    6.37 $\pm$ 1.84 &    \textbf{5.54} $\pm$ 3.08 &    10.68 $\pm$ 0.66 &    60.67 $\pm$ 41.87 &    5.87 $\pm$ 4.41 \\
 2\% &    4.72 $\pm$ 1.19 &    4.14 $\pm$ 1.87 &    8.38 $\pm$ 0.68 &    38.81 $\pm$ 14.22 &    \textbf{3.95} $\pm$ 2.39 \\
 4\% &    4.06 $\pm$ 1.07 &    \textbf{3.44} $\pm$ 1.23 &    8.22 $\pm$ 1.32 &    29.26 $\pm$ 12.17 &    \textbf{3.74} $\pm$ 2.39 \\
 6\% &    4.03 $\pm$ 1.04 &    \textbf{3.29} $\pm$ 1.37 &    10.57 $\pm$ 4.25 &    22.59 $\pm$ 9.31 &    3.65 $\pm$ 3.60 \\
 8\% &    3.80 $\pm$ 0.94 &    3.36 $\pm$ 1.44 &    5.45 $\pm$ 0.08 &    19.68 $\pm$ 6.62 &    \textbf{3.13} $\pm$ 2.18 \\
 10\% &   3.74 $\pm$ 0.78 &    \textbf{2.99} $\pm$ 1.08 &    6.03 $\pm$ 0.88 &    15.90 $\pm$ 5.24 &    3.64 $\pm$ 2.51 \\ 
\hline
 & \multicolumn{5}{c|}{MSSIM}\\
$\tau$ & \texttt{DeepComp} & \texttt{RadioUnet} & \texttt{RM-Gen} & \texttt{Dowjons} & \texttt{FactorDPS} \\
 \hline
 1\% &    0.945 $\pm$ 0.028 &    0.916 $\pm$ 0.051 &    0.903 $\pm$ 0.024 &    0.825 $\pm$ 0.055 &    \textbf{0.950} $\pm$ 0.027 \\
 2\% &    0.956 $\pm$ 0.019 &    0.932 $\pm$ 0.038 &    0.926 $\pm$ 0.021 &    0.839 $\pm$ 0.041 &    \textbf{0.959} $\pm$ 0.018 \\
 4\% &    0.964 $\pm$ 0.017 &    0.943 $\pm$ 0.037 &    0.913 $\pm$ 0.017 &    0.871 $\pm$ 0.046 &    \textbf{0.967} $\pm$ 0.019 \\
 6\% &    0.964 $\pm$ 0.016 &    0.948 $\pm$ 0.028 &    0.936 $\pm$ 0.006 &    0.875 $\pm$ 0.045 &    \textbf{0.965} $\pm$ 0.021 \\
 8\% &    0.970 $\pm$ 0.014 &    0.944 $\pm$ 0.033 &    0.937 $\pm$ 0.008 &    0.881 $\pm$ 0.045 &    \textbf{0.971} $\pm$ 0.015 \\
 10\% &   \textbf{0.972} $\pm$ 0.013 &    0.944 $\pm$ 0.033 &    0.939 $\pm$ 0.001 &    0.901 $\pm$ 0.041 &    0.970 $\pm$ 0.018 \\
\hline
\end{tabular}}
\label{table:mr_multiK_meanstd_without}
\end{table}

\begin{table}[t!]
\caption{Performance under different emitter numbers $R$ ($K=64$, $\tau=1\%$).}
\footnotesize
\centering
\centering
\scalebox{0.72}{
\begin{tabular}{ |c | c c c c c |}
\hline
 & \multicolumn{5}{c|}{LNRE ($\times 10^{-4}$)}\\
$R$ & \texttt{DeepComp} & \texttt{RadioUnet} & \texttt{RM-Gen} & \texttt{Dowjons} & \texttt{FactorDPS} \\
 \hline
 2 &    6.11 $\pm$ 1.34 &    \textbf{5.79} $\pm$ 2.83 &    7.36 $\pm$ 1.36 &    42.48 $\pm$ 29.77 &    6.50 $\pm$ 5.19 \\
 3 &    6.23 $\pm$ 1.77 &    \textbf{5.74} $\pm$ 3.21 &    10.94 $\pm$ 1.13 &    42.89 $\pm$ 40.63 &    6.01 $\pm$ 4.30 \\
 4 &    6.00 $\pm$ 1.24 &    \textbf{4.94} $\pm$ 1.35 &    14.48 $\pm$ 0.55 &    57.95 $\pm$ 34.24 &    4.96 $\pm$ 2.83 \\
 5 &    6.43 $\pm$ 1.73 &    5.34 $\pm$ 1.51 &    15.53 $\pm$ 3.71 &    68.29 $\pm$ 37.84 &    \textbf{4.75} $\pm$ 2.94 \\
 6 &    6.67 $\pm$ 1.99 &    4.79 $\pm$ 1.48 &    19.55 $\pm$ 5.87 &    85.54 $\pm$ 49.31 &    \textbf{4.24} $\pm$ 3.17 \\
 7 &    7.30 $\pm$ 2.08 &    5.20 $\pm$ 1.26 &    21.46 $\pm$ 3.46 &    90.75 $\pm$ 47.80 &    \textbf{4.18} $\pm$ 2.42 \\
 8 &    7.61 $\pm$ 2.24 &    5.21 $\pm$ 1.48 &    25.14 $\pm$ 3.07 &    84.08 $\pm$ 60.55 &    \textbf{3.88} $\pm$ 2.30 \\
\hline
\hline
 & \multicolumn{5}{c|}{MSSIM}\\
$R$ & \texttt{DeepComp} & \texttt{RadioUnet} & \texttt{RM-Gen} & \texttt{Dowjons} & \texttt{FactorDPS} \\
 \hline
 2 &    0.939 $\pm$ 0.026 &    0.899 $\pm$ 0.055 &    0.929 $\pm$ 0.023 &    0.812 $\pm$ 0.075 &    \textbf{0.941} $\pm$ 0.032 \\
 3 &    0.943 $\pm$ 0.029 &    0.917 $\pm$ 0.049 &    0.910 $\pm$ 0.026 &    0.829 $\pm$ 0.061 &    \textbf{0.951} $\pm$ 0.028 \\
 4 &    0.954 $\pm$ 0.019 &    0.944 $\pm$ 0.025 &    0.928 $\pm$ 0.021 &    0.821 $\pm$ 0.055 &    \textbf{0.959} $\pm$ 0.019 \\
 5 &    0.954 $\pm$ 0.021 &    0.949 $\pm$ 0.023 &    0.924 $\pm$ 0.014 &    0.815 $\pm$ 0.050 &    \textbf{0.961} $\pm$ 0.019 \\
 6 &    0.954 $\pm$ 0.023 &    0.954 $\pm$ 0.021 &    0.921 $\pm$ 0.024 &    0.806 $\pm$ 0.059 &    \textbf{0.963} $\pm$ 0.023 \\
 7 &    0.952 $\pm$ 0.022 &    0.954 $\pm$ 0.018 &    0.924 $\pm$ 0.028 &    0.788 $\pm$ 0.053 &    \textbf{0.962} $\pm$ 0.022 \\
 8 &    0.953 $\pm$ 0.022 &    0.955 $\pm$ 0.020 &    0.898 $\pm$ 0.016 &    0.788 $\pm$ 0.056 &    \textbf{0.966} $\pm$ 0.018 \\
\hline
\end{tabular}}
\label{table:mr_multiK_multiR_meanstd_without}
\end{table}

\begin{figure}[!t]
    \centering

    \begin{subfigure}{0.48\linewidth}
        \centering
        \includegraphics[width=\linewidth]{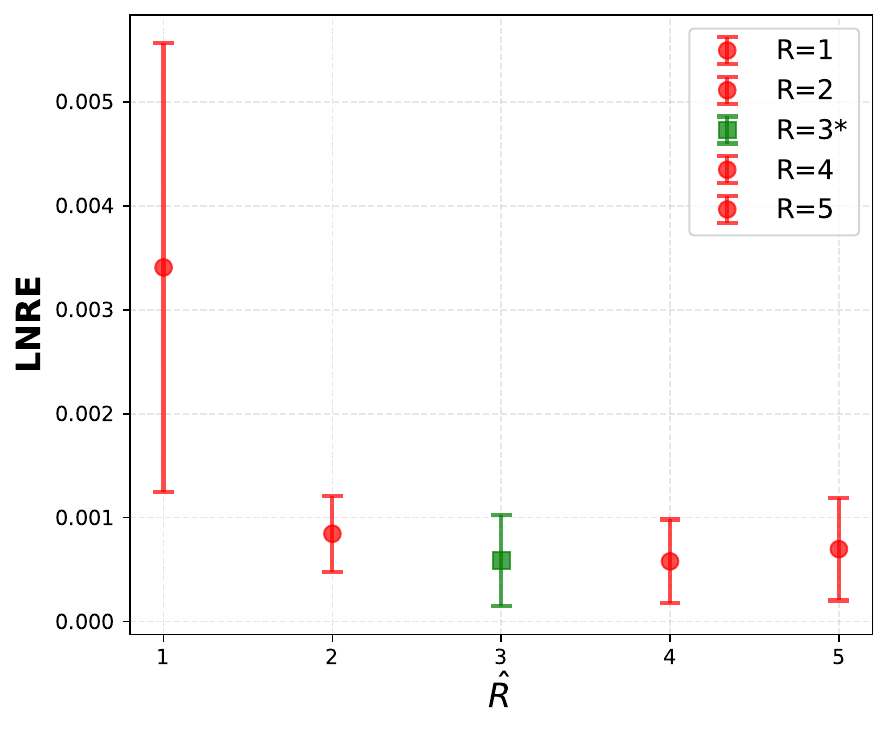}
        \caption{LNRE}
        \label{subfig:wrongR_lnre}
    \end{subfigure}
    \hfill
    \begin{subfigure}{0.48\linewidth}
        \centering
        \includegraphics[width=\linewidth]{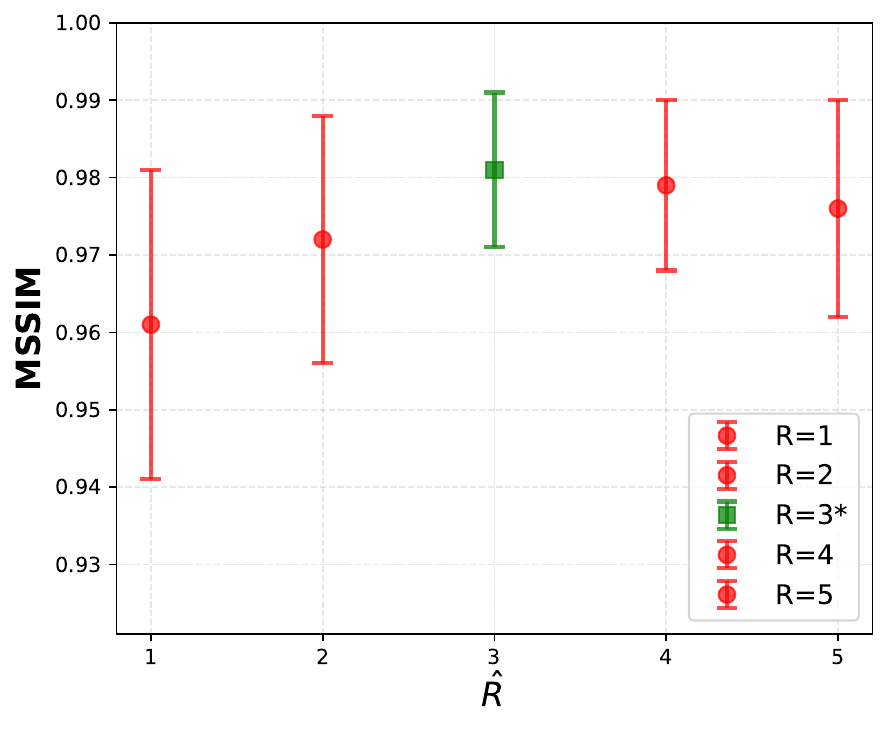}
        \caption{MSSIM}
        \label{subfig:wrongR_ssim}
    \end{subfigure}
    \caption{Performance with Different $\hat{R}$ ($R=3$, $\tau=1\%$).}
    \label{fig:wrongR}
\end{figure}

Table~\ref{table:mr_multiK_meanstd_without} summarizes the quantitative performance under various sampling rates $\tau$ with $R=3$. As shown, {\tt FactorDPS} achieves LNRE values comparable to those of the best-performing baseline, {\tt RadioUNet}, while attaining the overall highest MSSIM across all evaluated scenarios.

Table~\ref{table:mr_multiK_multiR_meanstd_without} reports the reconstruction performance under different values of $R$ at a fixed sampling rate $\tau = 1\%$. The proposed {\tt FactorDPS} achieves the best overall LNRE and consistently attains the highest MSSIM. 
As $R$ increases, the radio maps become more complex, and the performance of {\tt DeepComp}, {\tt RM-Gen}, and {\tt Dowjons} degrades noticeably. In contrast, {\tt FactorDPS} maintains improved performance as $R$ grows, demonstrating strong scalability with respect to the number of emitters.

Fig.~\ref{fig:wrongR} presents the performance of {\tt FactorDPS} under mismatched estimates of the number of emitters $\hat{R}$, with the ground truth set to $R=3$. 
When $\hat{R} < R$, the reconstruction performance degrades noticeably, and the degradation becomes more pronounced as $\hat{R}$ decreases. 
In contrast, when $\hat{R} > R$, although performance loss is also observed, it is considerably less severe, indicating the robustness of {\tt FactorDPS} to overestimation of $R$.

\begin{figure}[t]
    \centering
    \begin{subfigure}{0.98\linewidth}
        \centering
        \includegraphics[width=\linewidth]{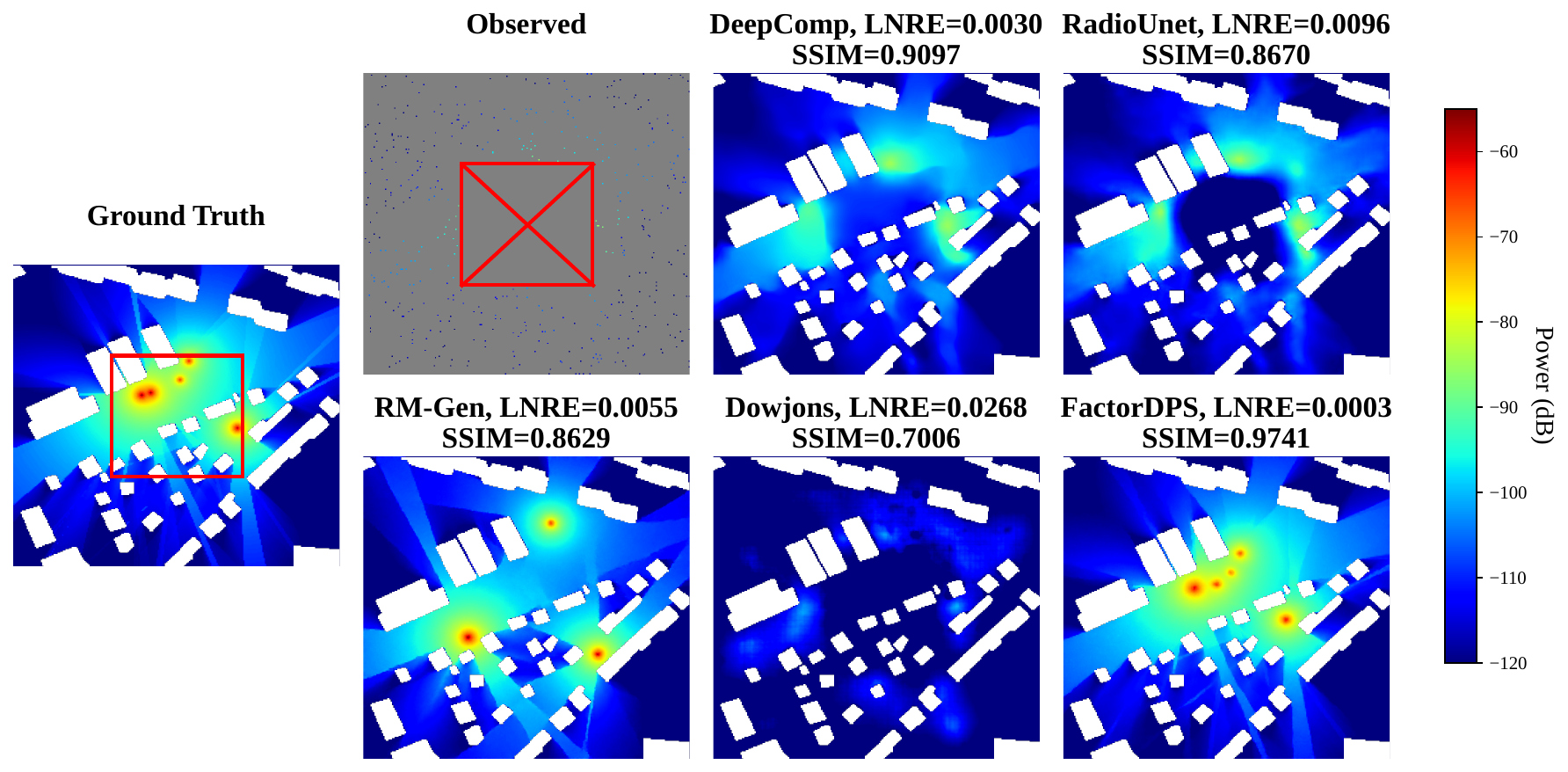}
        \caption{Central square missing region covering $40\% \times 40\%$ of the map, $R=5$}
        \label{subfig:centralmissing}
    \end{subfigure}
    \begin{subfigure}{0.98\linewidth}
        \centering
        \includegraphics[width=\linewidth]{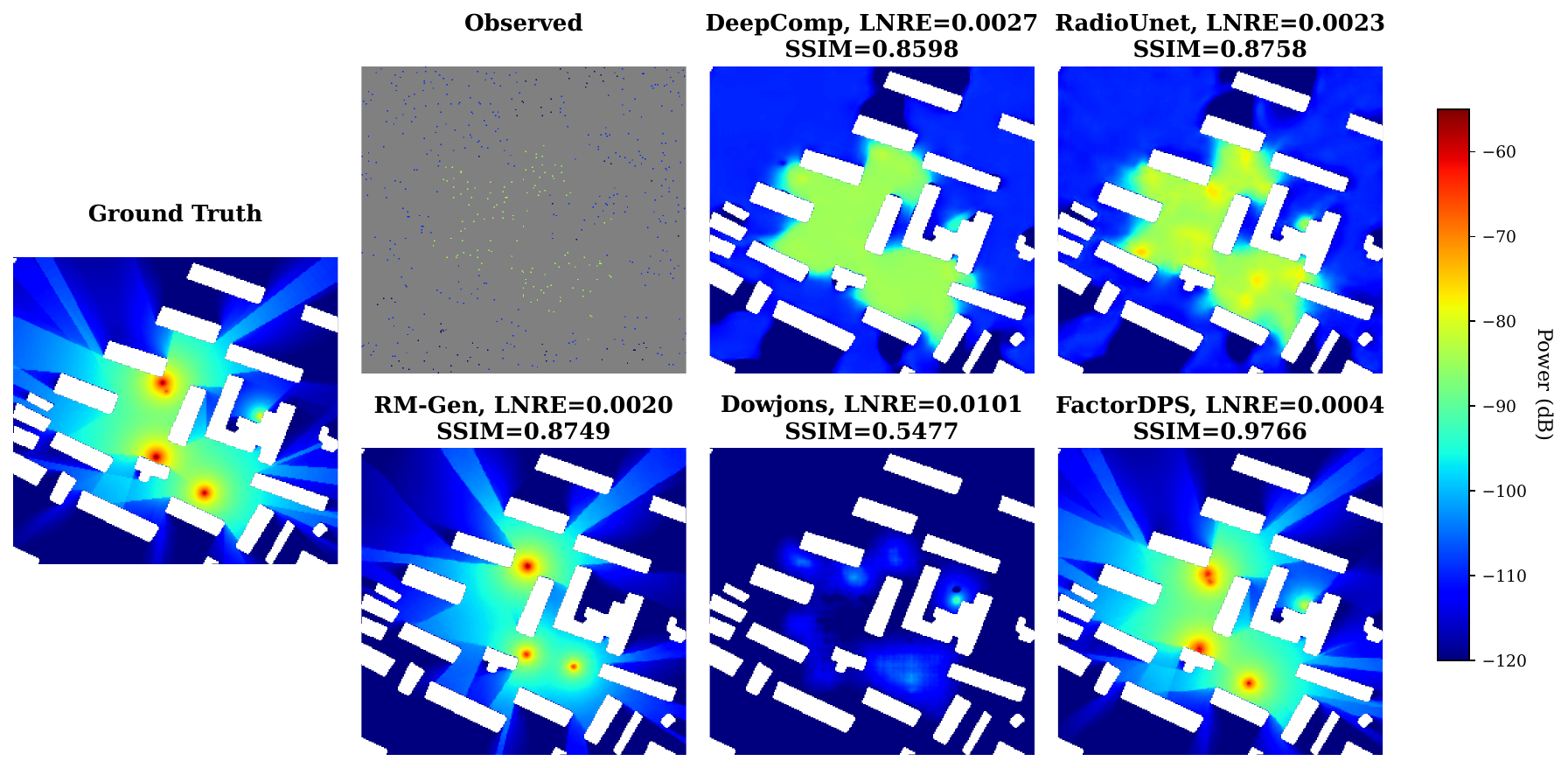}
        \caption{$2$-bit quantized measurement, $R=5$}
        \label{subfig:quantized}
    \end{subfigure}
    \caption{Recovered radio maps at the $16$th frequency bin from $1\%$ sampling ratio and different measurements, $K=64$.}
    \label{fig:distortion}
\end{figure}

Fig.~\ref{fig:distortion} presents visual reconstruction results under two challenging settings: an additional $40\% \times 40\%$ missing central region (Fig.~\ref{subfig:centralmissing}) and $2$-bit quantized measurements (Fig.~\ref{subfig:quantized}), on top of $1\%$ random sampling.
Note that most of the baselines are not designed for quantized measurements and we apply them as if the measurements are continuous. An exception is {\tt DowJons}, which uses the quantized sensing objective following the work \cite{Subash2024Quantized}. As one can see, only the proposed {\tt FactorDPS} achieves reasonable and accurate reconstructions, demonstrating its flexibility in handling changes in the observation model.

Table~\ref{table:exp9_hole_meanstd_without} and Table~\ref{table:exp10_bit_meanstd_without} further report the quantitative performance under the above settings. 
Consistent with the visual results, {\tt FactorDPS} consistently outperforms all baseline methods by a large margin in terms of both LNRE and MSSIM.

\begin{table}[t!]
\caption{Performance under different size of forbidden regions ($R=5$, $\tau=1\%$).}
\footnotesize
\centering
\scalebox{0.72}{
\begin{tabular}{ |c | c c c c c |}
\hline
 & \multicolumn{5}{c|}{LNRE ($\times 10^{-4}$)}\\
Ratio & \texttt{DeepComp} & \texttt{RadioUnet} & \texttt{RM-Gen} & \texttt{Dowjons} & \texttt{FactorDPS} \\
 \hline
 20\% &    6.24 $\pm$ 2.01 &    6.64 $\pm$ 2.51 &    17.85 $\pm$ 3.43 &    79.48 $\pm$ 34.63 &    \textbf{5.54} $\pm$ 3.68 \\
 40\% &    21.19 $\pm$ 7.48 &    85.49 $\pm$ 46.33 &    42.72 $\pm$ 9.70 &    192.00 $\pm$ 93.75 &    \textbf{10.10} $\pm$ 6.79 \\
 60\% &    49.21 $\pm$ 12.40 &    237.62 $\pm$ 97.71 &    173.40 $\pm$ 94.69 &    565.66 $\pm$ 235.17 &    \textbf{12.24} $\pm$ 6.98 \\
 80\% &    78.00 $\pm$ 16.78 &    435.52 $\pm$ 178.64 &    435.53 $\pm$ 187.52&    837.42 $\pm$ 315.12 &    \textbf{18.37} $\pm$ 11.62 \\
\hline
\hline
 & \multicolumn{5}{c|}{MSSIM}\\
Ratio & \texttt{DeepComp} & \texttt{RadioUnet} & \texttt{RM-Gen} & \texttt{Dowjons} & \texttt{FactorDPS} \\
 \hline
 20\% &    0.777 $\pm$ 0.076 &    0.768 $\pm$ 0.078 &    {0.899} $\pm$ 0.034  &    0.662 $\pm$ 0.088 &    \textbf{0.946} $\pm$ 0.023 \\
 40\% &    0.755 $\pm$ 0.079 &    0.717 $\pm$ 0.066 &    {0.860} $\pm$ 0.043 &    0.567 $\pm$ 0.094 &    \textbf{0.935} $\pm$ 0.029 \\
 60\% &    0.729 $\pm$ 0.077 &    0.602 $\pm$ 0.041 &    {0.749} $\pm$ 0.050 &    0.365 $\pm$ 0.066 &    \textbf{0.924} $\pm$ 0.033 \\
 80\% &    0.702 $\pm$ 0.091 &    0.468 $\pm$ 0.045 &    {0.531} $\pm$ 0.053 &    0.181 $\pm$ 0.035 &    \textbf{0.913} $\pm$ 0.034 \\
\hline
\end{tabular}}
\label{table:exp9_hole_meanstd_without}
\end{table}

\begin{table}[t!]
\caption{Performance under different quantization bits ($R=5$, $\tau=1\%$)}
\footnotesize
\centering
\scalebox{0.72}{
\begin{tabular}{ |c | c c c c c |}
\hline
 & \multicolumn{5}{c|}{LNRE ($\times 10^{-4}$)}\\
Bits & \texttt{DeepComp} & \texttt{RadioUnet} & \texttt{RM-Gen} & \texttt{Dowjons} & \texttt{FactorDPS} \\
 \hline
 1 &    151.57 $\pm$ 62.89 &    133.19 $\pm$ 57.52 &    70.76 $\pm$ 29.97  &    329.11 $\pm$ 141.95 &    \textbf{15.78} $\pm$ 11.75 \\
 2 &    31.10 $\pm$ 5.49 &    28.18 $\pm$ 6.56 &    20.42 $\pm$ 2.45 &    76.14 $\pm$ 29.36 &    \textbf{7.06} $\pm$ 3.67 \\
 3 &    9.12 $\pm$ 1.57 &    6.93 $\pm$ 0.77 &    16.07 $\pm$ 1.42  &    61.08 $\pm$ 22.04 &    \textbf{6.62} $\pm$ 3.53 \\
 4 &    4.83 $\pm$ 1.61 &    5.79 $\pm$ 1.18 &    16.21 $\pm$ 1.43 &    64.43 $\pm$ 24.36 &    \textbf{3.66} $\pm$ 3.12 \\
 5 &    3.86 $\pm$ 1.73 &    4.00 $\pm$ 1.87  &    16.43 $\pm$ 2.03  &    65.74 $\pm$ 28.23 &    \textbf{3.29} $\pm$ 3.34 \\
\hline
\hline
 & \multicolumn{5}{c|}{MSSIM}\\
Bits & \texttt{DeepComp} & \texttt{RadioUnet} & \texttt{RM-Gen} & \texttt{Dowjons} & \texttt{FactorDPS} \\
 \hline
 1 &    0.636 $\pm$ 0.059 &    0.642 $\pm$ 0.054 &    0.849 $\pm$ 0.040 &    0.482 $\pm$ 0.055 &    \textbf{0.919} $\pm$ 0.038 \\
 2 &    0.689 $\pm$ 0.083 &    0.705 $\pm$ 0.077 &    0.874 $\pm$ 0.028 &    0.643 $\pm$ 0.086 &    \textbf{0.934} $\pm$ 0.032 \\
 3 &    0.734 $\pm$ 0.080 &    0.755 $\pm$ 0.077 &    0.899 $\pm$ 0.031 &    0.683 $\pm$ 0.087 &    \textbf{0.946} $\pm$ 0.021 \\
 4 &    0.756 $\pm$ 0.080 &    0.771 $\pm$ 0.078 &    0.903 $\pm$ 0.030 &    0.681 $\pm$ 0.099 &    \textbf{0.945} $\pm$ 0.024 \\
 5 &    0.771 $\pm$ 0.082 &    0.771 $\pm$ 0.083 &    0.903 $\pm$ 0.030 &    0.681 $\pm$ 0.094 &    \textbf{0.947} $\pm$ 0.024 \\
\hline
\end{tabular}}
\label{table:exp10_bit_meanstd_without}
\end{table}

\subsection{Ablation Study}

\subsubsection{Comparison with Data Domain DPS}
In this subsection, we examine the impact of incorporating the decomposition model (cf. \eqref{eq:X=SC}). To this end, we compare {\tt FactorDPS} with a data-domain diffusion baseline. In the training stage, the baseline directly learns single-band multi-emitter radio maps. During inference, the baseline applies DPS independently to each frequency (i.e., a frontal slice of the radio map tensor), and the reconstructed slices are then stacked to form the full radio map. We refer to this baseline as {\tt SliceDPS}. {\tt SliceDPS} adopts the same network architecture as the SLF diffusion model in {\tt FactorDPS}, and uses the same diffusion and training settings. The main difference lies in the training data: {\tt SliceDPS} directly learns multi-emitter radio maps generated as $\bm X=\sum_{r=1}^{R}c_r\bm S_{\ell_r,n}$, where $R$ is uniformly drawn from ${1,2,3,4}$, $c_r\sim\mathcal{U}(0,1)$, and $\bm S_{\ell_r,n}\sim\mathbb S_{\rm tr}^\natural$. In each epoch, 50,000 radio map samples are generated for training.

\begin{figure}[t]
    \centering
    \includegraphics[width=\linewidth]{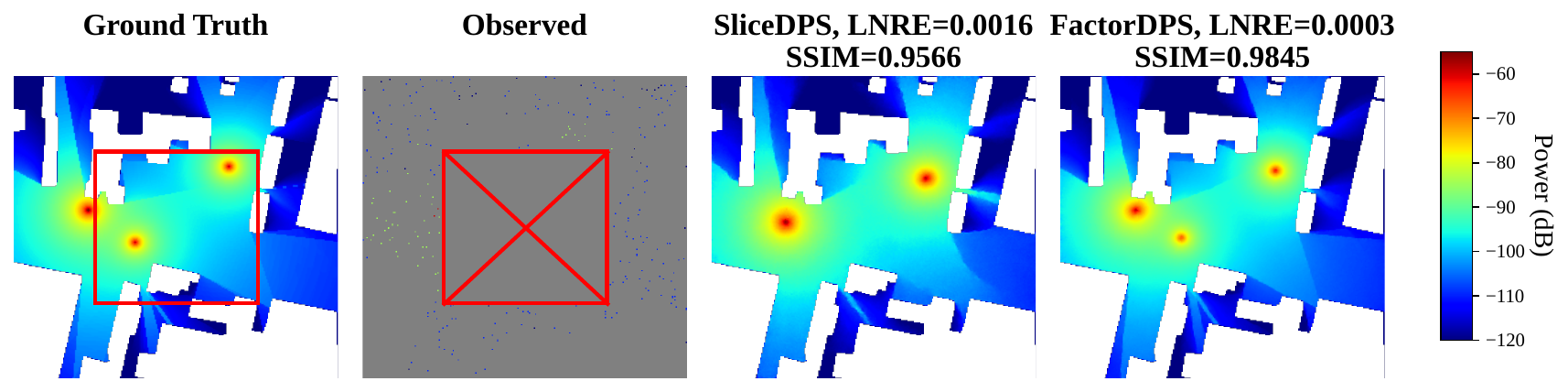}
    \caption{Recovered radio maps at the $16$th frequency bin from $1\%$ sampling ratio, 2-bits quantization, and an additional $50\%\times 50\%$ central square missing area ($R = 3$).}
    \label{fig:ablation}
\end{figure}

\begin{table}[t!]
\caption{Performance under different sampling rate $\tau$ ($R = 3$, 2-bits quantization, $50\%\times 50\%$ central square restricted region).}
\footnotesize
\centering
\centering
\scalebox{0.76}{
\begin{tabular}{ |c | c c | c c |}
\hline
 & \multicolumn{2}{c|}{LNRE $(\times 10^{-4})$} & \multicolumn{2}{c|}{MSSIM} \\
$\tau$ & \texttt{SliceDPS} & \texttt{FactorDPS} & \texttt{SliceDPS} & \texttt{FactorDPS}\\
 \hline
 1\% &    71.45 $\pm$ 34.34 &    \textbf{17.09} $\pm$ 10.76 &    0.7248 $\pm$ 0.0596 &    \textbf{0.8829} $\pm$ 0.0683 \\
 2\% &    39.63 $\pm$ 18.48 &    \textbf{14.09} $\pm$ 10.54 &    0.8280 $\pm$ 0.0841 &    \textbf{0.9083} $\pm$ 0.0548 \\
 4\% &    61.36 $\pm$ 16.55 &    \textbf{11.91} $\pm$ 7.54 &    0.7434 $\pm$ 0.0433 &    \textbf{0.9123} $\pm$ 0.0452 \\
 6\% &    62.49 $\pm$ 17.49 &    \textbf{11.37} $\pm$ 7.11 &    0.7951 $\pm$ 0.0349 &    \textbf{0.9150} $\pm$ 0.0441 \\
 8\% &    59.42 $\pm$ 21.63 &    \textbf{11.30} $\pm$ 7.44 &    0.8131 $\pm$ 0.0511 &    \textbf{0.9158} $\pm$ 0.0397 \\
 10\% &    58.12 $\pm$ 19.53 &    \textbf{13.33} $\pm$ 8.63 &    0.8472 $\pm$ 0.0741 &    \textbf{0.9151} $\pm$ 0.0515 \\
\hline
\hline
& \multicolumn{2}{c}{\texttt{SliceDPS}} & \multicolumn{2}{c|}{\texttt{FactorDPS}} \\
{Runtime (mins)} & \multicolumn{2}{c}{208.36} & \multicolumn{2}{c|}{\textbf{16.81}} \\
\hline
\end{tabular}}
\label{table:mr_multiK_bits2_hole50_meanstd_without}
\end{table}

\begin{figure}[t]
    \centering
    \includegraphics[width=0.488\linewidth]{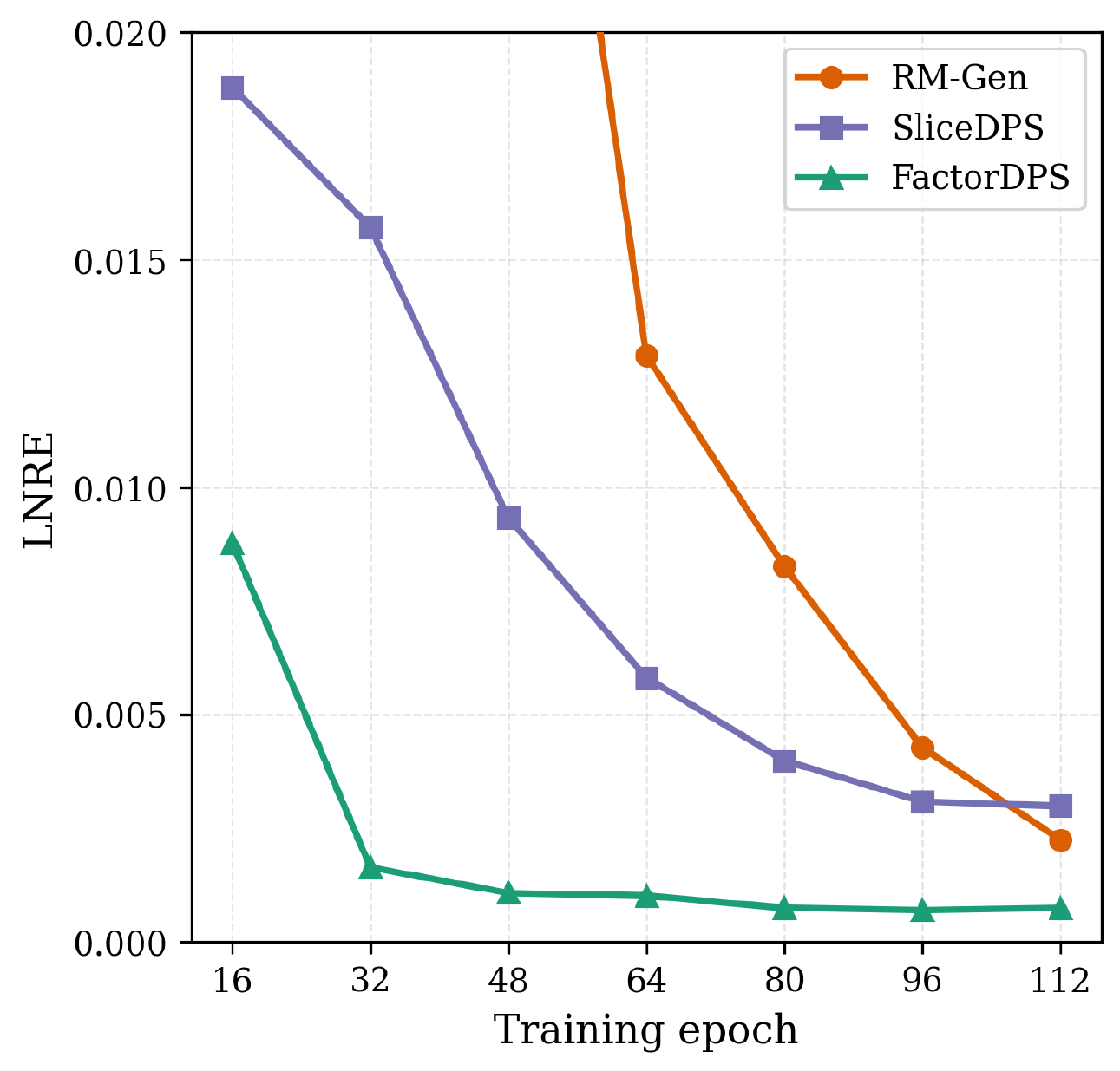}
    \hfill
    \includegraphics[width=0.47\linewidth]{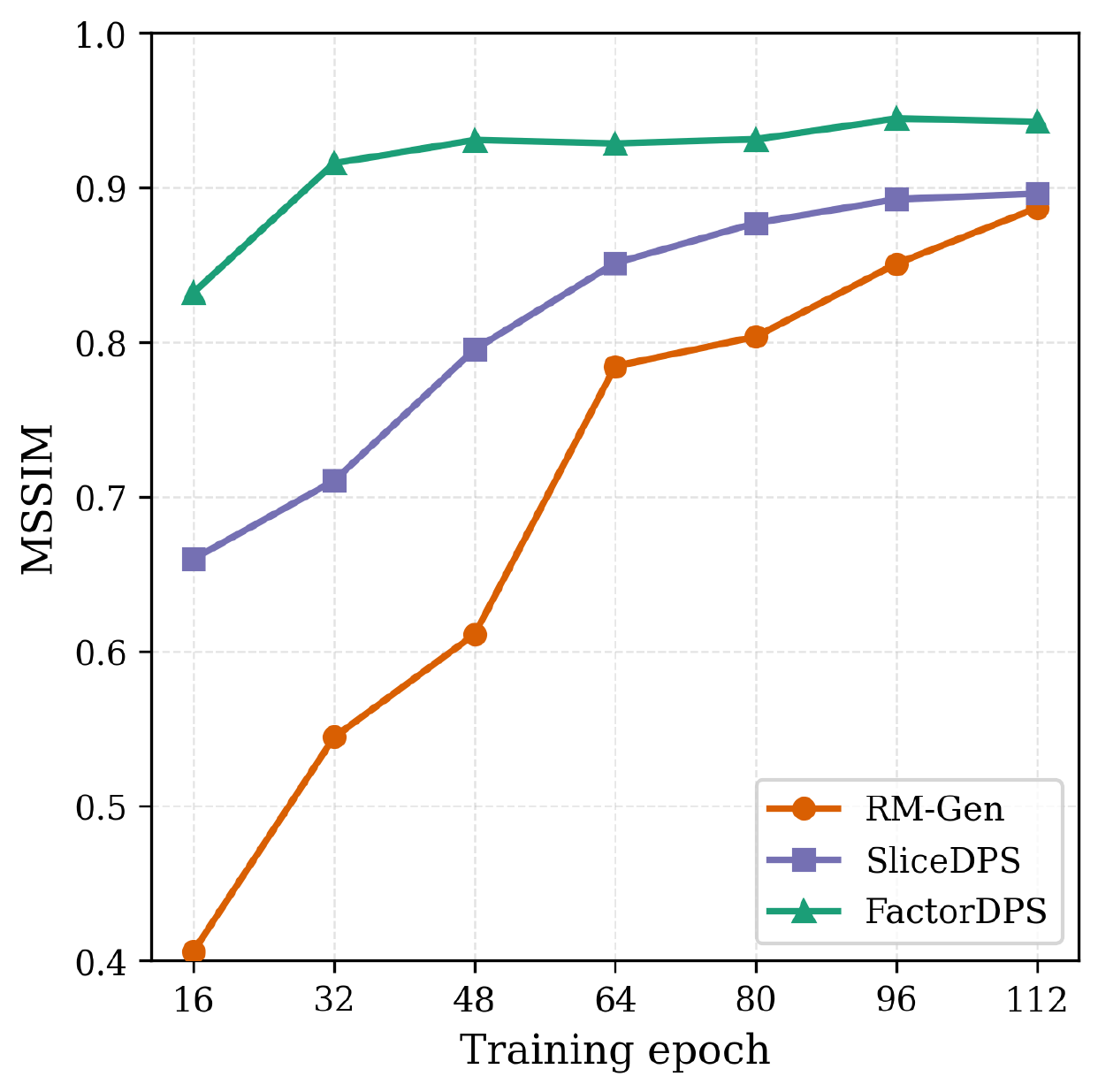}
    \caption{Performance versus the number of training epochs.}
    \label{fig:metric_epoch}
\end{figure}
\begin{figure}[t]
    \centering
    \includegraphics[width=0.98\linewidth]{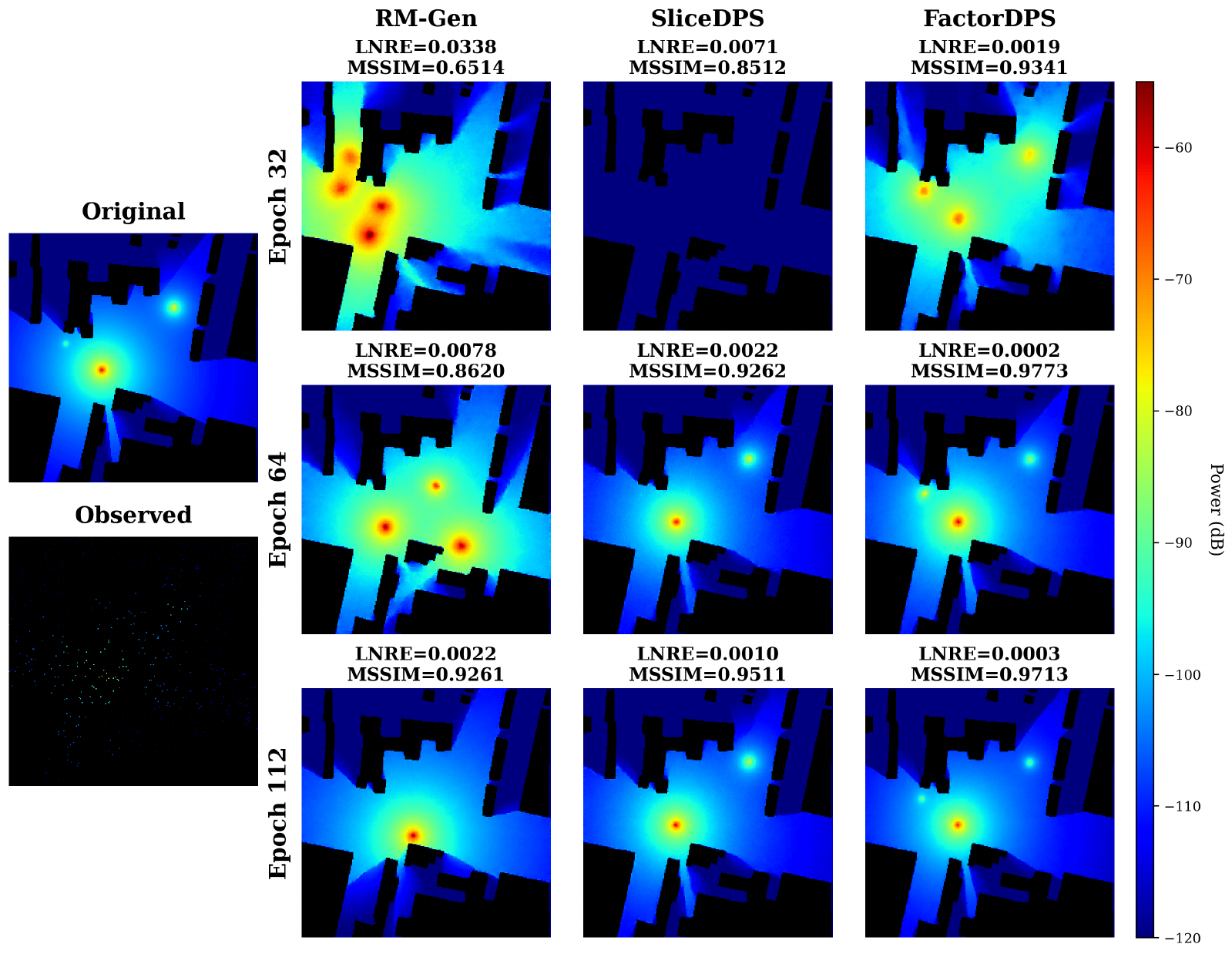}
    \caption{Recovered radio maps at the 16th frequency bin after 32, 64 and 112 training epochs, respectively.}
    \label{fig:visual_epoch}
\end{figure}

Fig.~\ref{fig:ablation} presents the reconstruction results at the $16$th frequency bin in a highly challenging scenario, including $1\%$ random sampling, an additional central missing region covering $50\% \times 50\%$ of the map, and $2$-bit quantization, with $K=64$ and $R=3$. 
In this case, {\tt FactorDPS} successfully recovers all emitter components with clearer spatial structures.
In contrast, {\tt SliceDPS} fails to reveal one of the emitters.

Table~\ref{table:mr_multiK_bits2_hole50_meanstd_without} reports the quantitative performance under the same degradation settings across different sampling ratios. 
One can see that {\tt FactorDPS} consistently outperforms {\tt SliceDPS} in terms of both LNRE and MSSIM. 
Moreover, {\tt FactorDPS} is significantly more efficient than {\tt SliceDPS}, as evidenced by the runtime comparison in Table~\ref{table:mr_multiK_bits2_hole50_meanstd_without}.
This is because {\tt FactorDPS} operates only on the decomposed spatial and spectral factors, while {\tt SliceDPS} processes all $64$ frequency bands.

\subsubsection{Performance against Training Epochs}
We next examine whether incorporating the factorization model improves training efficiency. We observe the performance of {\tt RM-Gen}, {\tt SliceDPS}, and {\tt FactorDPS} under different numbers of training epochs. Note that all three methods use essentially the same network architecture and diffusion settings, with the only difference being that {\tt RM-Gen} includes an additional input channel for the measurements. 
They also share the same training configuration, with 50,000 training samples per epoch, and are all trained on an H100 GPU.
The experimental setting follows the multi-emitter experiments in Sec.~\ref{subsec:multi-band-cases}, with $K=64$, $R=3$, and $\tau=1\%$. All reported metrics are averaged over 10 Monte Carlo runs.

Fig.~\ref{fig:metric_epoch} shows that {\tt FactorDPS} achieves strong reconstruction performance with substantially fewer training epochs, with its performance largely stabilizing after approximately 32 epochs. In contrast, {\tt RM-Gen} and {\tt SliceDPS} continue to improve with additional training, but neither reaches the performance achieved by {\tt FactorDPS} at 32 epochs within the tested range. The same trends are also visually illustrated in Fig.~\ref{fig:visual_epoch}.

\section{Conclusion}

This paper proposed a latent factor diffusion framework for radio map tensor completion, where independently trained diffusion models for spatial loss fields and spectral signatures are embedded into the latent factors to enable flexible and efficient posterior sampling–based recovery. Compared with existing data-domain diffusion approaches, the proposed method reduces training complexity and improves adaptability to varying measurement formats and emitter configurations. We further established recoverability guarantees under the latent factor diffusion model, providing theoretical insights into diffusion-based radio map estimation. 
Numerical results demonstrated strong performance of the proposed method across a wide range of scenarios: it generalizes well from single-emitter training to challenging multi-emitter cases, adapts to different measurement settings, and remains robust to environmental shifts, as demonstrated by the disjoint training and test cities.

In the current framework, the radio map is assumed to remain stationary over time, and the SLF of each emitter is assumed to be shared across frequency bands. These assumptions may become restrictive in dynamic environments or when spatial propagation varies significantly with frequency. Future work will therefore explore more general and dynamic wireless environments.
In addition, \texttt{FactorDPS} does not guarantee exact posterior sampling due to likelihood approximation. Therefore, promising directions also include improving posterior sampling accuracy with theoretical supports.

\medskip

\noindent
{\bf Source Code}:
The code of our algorithm is available at \hyperlink{https://github.com/xumaomao94/FactorDPS}{https://github.com/xumaomao94/FactorDPS}.

\medskip

\noindent
{\bf Generative AI Usage}:
Generative AI tools were used for coding assistance and language polishing. All contents have been carefully reviewed and verified by the authors.

\bibliographystyle{IEEEtran}
\bibliography{ref}


\clearpage
\appendices

{\bf Supplementary Material of ``Completing Radio Map Tensors via Factor Diffusion Posterior Sampling'' by L. Xu, R. Shrestha, and X. Fu}

\section{Proof of Fact~\ref{fact:covering_X}}
\label{apd:covering_X_proof}

From the definition of $\mathbb X$ in \eqref{eq:Xset}, every $\mathbm X\in\mathbb X$ admits latent factors satisfying
\[
|\bm S_r(i,j)|\le \|\bm S_r\|_{\rm F}\le\gamma_{\rm s},
\qquad
|\bm c_r(k)|\le \|\bm c_r\|_2\le\gamma_{\rm c}.
\]
Hence, by the triangle inequality,
\[
|\mathbm X(i,j,k)|
\le
\sum_{r=1}^R|\bm S_r(i,j)\bm c_r(k)|
\le
R\gamma_{\rm s}\gamma_{\rm c}.
\]
Since $\bm x=\vecop(\mathbm X)$, this implies
\begin{align}
\max_\ell|\bm x_\ell|
\le
R\gamma_{\rm s}\gamma_{\rm c}\triangleq\tau_{\rm x}.
\label{eq:max_x_ell}
\end{align}

The covering-number bound follows directly from existing results for factorized tensor models; see, e.g., \cite[Lemma~2]{Shrestha2022DeepSpectrum} and \cite[Lemma~3]{xu2025radio}. Applying these results to the decomposition $\mathbm X=\sum_{r=1}^R\bm S_r\circ\bm c_r$, together with the energy constraints $\|\bm S_r\|_{\rm F}\le\gamma_{\rm s}$ and $\|\bm c_r\|_2\le\gamma_{\rm c}$, yields the desired bound.

\section{Proof of Theorems~\ref{thm:perfect_case}--\ref{thm:imperfect_TV}}
\label{apd:thm_proof}
We first prove the more general imperfect-sampling result in Theorem~\ref{thm:imperfect_TV}. The perfect-sampling result in Theorem~\ref{thm:perfect_case} then follows as the special case where the learned posterior sampler exactly matches the true posterior sampler.

Our proof follows the posterior-sampling-based compressed sensing framework in~\cite{jalal2021instance,jalal2021robust}. In particular, like in~\cite{jalal2021instance}, we decompose the error probability over a finite cover of the signal class and relate each local error probability to distances between induced measurement distributions. As discussed in Remark~\ref{remark:ourproof}, our setting differs from~\cite{jalal2021instance,jalal2021robust} in several important ways, including latent-space priors, tensor fiber subsampling, and simpler characterizations of sampler imperfection.

\subsection{Main Proof}

Since the latent-domain samples are mapped to the radio-map domain through $\bm h(\cdot)$, sampling
$\bm z^\natural \sim p(\bm z)$ and $\hat{\bm z}\sim p’(\bm z\mid \bm y)$ and then forming
$\bm x^\natural=\bm h(\bm z^\natural)$ and $\hat{\bm x}=\bm h(\hat{\bm z})$
is equivalent to sampling directly from the induced radio-map distributions; i.e., 
$\bm x^\natural\sim p(\bm x)$
and
$\hat{\bm x}\sim p'(\bm x\mid \bm y)$, respectively.
Thus, we work directly in the radio-map domain.

By Fact~\ref{fact:covering_X}, every $\bm x \in \mathbb X$ satisfies $\| \bm x\|_\infty \leq \tau_{\rm x}$. If $\hat{\bm x}\in \mathbb G(\mathbb X,\kappa)$, then by the definition of
$\mathbb G(\mathbb X,\kappa)$ in \eqref{eq:G(X)}, there exists some
$\bm x\in\mathbb X$ such that
$\|\hat{\bm x}-\bm x\|_2\leq\kappa$. Therefore,
\begin{align}
    |\hat{\bm x}_\ell|
    \le |\bm x_\ell|+|\hat{\bm x}_\ell-\bm x_\ell|
    \le \tau_{\rm x}+\kappa, \quad \forall \ell\in[IJK].
    \label{eq:Gx_bound}
\end{align}
Moreover, by the data-processing inequality \cite{Raginsky2016strong} \cite[Fact~1.1]{canonne2022topics} \cite[Theorem~7.4]{polyanskiy2025information}, the TV mismatch in \eqref{eq:post_TV_mismatch_final} implies that
\begin{equation}
    \mathrm{TV}\!\left(p'(\bm x\mid\bm y),p(\bm x\mid\bm y)\right)
    \leq \mathrm{TV}\!\left(p'(\bm z\mid\bm y),p(\bm z\mid\bm y)\right)
    \leq \varepsilon_{\rm TV}.
    \label{eq:TV_x_bound_imperfect}
\end{equation}

We aim to bound
\begin{align}
    {\rm Pr} ({\|\bm{x}^\natural - \hat{\bm{x}}\|_2} \ge {\sqrt{IJK}}\eta). \label{eq:Pr_error}
\end{align}
We start with a union bound to decompose such probability into probabilities over small partitions. In particular,
let $\{\bm b_i\}_{i=1}^H$ be a $\sqrt{IJK}\epsilon$-net of $\mathbb X$, where $H\leq \mathrm N(\mathbb X,\sqrt{IJK}\epsilon)$.
Let $\{\mathbb V_i\}_{i=1}^H$ be a measurable partition of $\mathbb X$ associated with this net, defined as
\begin{align}
\mathbb V_i
\triangleq
\Big\{
\bm x\in\mathbb X:\;
\|\bm x-\bm b_i\|_2
=
\min_{j\in[H]}\|\bm x-\bm b_j\|_2
\Big\}.\label{eq:V_i}
\end{align}
Then the following holds:
\begin{align}
    \mathbb V_i &\subseteq \mathbb B_i^{\rm in}
    \triangleq
    \left\{
    \bm x\in\mathbb X:
    {\|\bm x-\bm b_i\|_2}\leq {\sqrt{IJK}}\epsilon
    \right\}, \\
    \sum_{i=1}^H& {\rm Pr}(\bm x^\natural\in\mathbb V_i)=1.\label{eq:partition_sum_to1}
\end{align}
Therefore, \eqref{eq:Pr_error} can be expressed as
\begin{align}
    &{\rm Pr}\!\left(
    {\|\bm x^\natural-\hat{\bm x}\|_2}\geq {\sqrt{IJK}}\eta
    \right)
    =
    \sum_{i=1}^H
    {\rm Pr}(\bm x^\natural\in\mathbb V_i) \nonumber \\
    & \qquad \qquad \times{\rm Pr}\!\left(
    {\|\bm x^\natural-\hat{\bm x}\|_2}\geq {\sqrt{IJK}}\eta
    \mid
    \bm x^\natural\in\mathbb V_i
    \right).
    \label{eq:failure_rate_accu}
\end{align}

Next, we follow \cite{jalal2021instance} to convert the reconstruction error in \eqref{eq:failure_rate_accu} to an ``out-of-ball'' event. 
Suppose $\bm x^\natural\in\mathbb V_i$, then $\|\bm x^\natural-\bm b_i\|_2/\sqrt{IJK}\le\epsilon$.
By triangle inequality, $\|\bm x^\natural-\hat{\bm x}\|_2\ge\sqrt{IJK}\eta$ implies
\[
\|\hat{\bm x}-\bm b_i\|_2
\ge
\|\bm x^\natural-\hat{\bm x}\|_2-\|\bm x^\natural-\bm b_i\|_2
\ge
\sqrt{IJK}(\eta-\epsilon).
\]
Consequently, by defining
\[
    \mathbb B_i^{\rm out}
    \triangleq
    \left\{
    \bm x:
    {\|\bm x-\bm b_i\|_2}\geq {\sqrt{IJK}}(\eta-\epsilon)
    \right\},
\]
\eqref{eq:failure_rate_accu} can be bounded by
\begin{align}
    &{\rm Pr}\!\left(
    {\|\bm x^\natural-\hat{\bm x}\|_2}\geq {\sqrt{IJK}}\eta
    \right) \nonumber \\
    \leq&
    \sum_{i=1}^H
    {\rm Pr}(\bm x^\natural\in\mathbb V_i)
    {\rm Pr}\!\left(
    \hat{\bm x}\in\mathbb B_i^{\rm out}
    \mid
    \bm x^\natural\in\mathbb V_i
    \right).
    \label{eq:imperfect_partition_bound}
\end{align}
Considering the observation model in \eqref{eq: IP for RM}, i.e., $\bm y = \bm A \bm x^\natural + \bm w$ with $\bm A =(\bm M_{\bm \varOmega} \otimes \bm I_K  )$, the error probability in \eqref{eq:imperfect_partition_bound} can be explicitly written as 
\begin{align}
    &{\rm Pr}\!\left(
    \hat{\bm x}\in\mathbb B_i^{\rm out}
    \mid
    \bm x^\natural\in\mathbb V_i
    \right)\nonumber\\
    =&\mathbb E_{\bm A}
    \mathbb E_{\bm w}
    \mathbb E_{\bm x^\natural \in\mathbb V_i} [{\rm Pr}(\hat{\bm x}\in\mathbb B_i^{\rm out}\mid \bm y)],
    \label{eq:imperfect_conditional_out}
\end{align}
in which, for simplicity, we use ${\rm Pr}(\hat{\bm x}\in\mathbb B_i^{\rm out}\mid \bm y)$ to denote ${\rm Pr}(\hat{\bm x}\in\mathbb B_i^{\rm out}\mid \bm y=\bm A \bm x^\natural + \bm w)$.
By \eqref{eq:TV_x_bound_imperfect} and the definition of TV distance, we further have
\begin{align}
    &\mathbb E_{\bm x^\natural \in\mathbb V_i} [{\rm Pr}(\hat{\bm x}\in\mathbb B_i^{\rm out}\mid \bm y)] \nonumber\\
    = &
    {\rm Pr}(\hat{\bm x}\in\mathbb B_i^{\rm out}\mid\bm x^\natural \in\mathbb V_i,\bm y) \nonumber\\
    \leq &
    {\rm Pr}(\tilde{\bm x}\in\mathbb B_i^{\rm out}\mid\bm x^\natural \in\mathbb V_i,\bm y)
    +
    \varepsilon_{\rm TV}\nonumber\\
    = &
    \mathbb E_{\bm x^\natural \in\mathbb V_i} [{\rm Pr}(\tilde{\bm x}\in\mathbb B_i^{\rm out}\mid \bm y)] +\varepsilon_{\rm TV}
    \label{eq:imperfect_TV_event}
\end{align}
where $\tilde{\bm x} \sim p(\bm x|\bm y)$ denotes a posterior sample from the perfect sampler.
Combining \eqref{eq:imperfect_partition_bound}-\eqref{eq:imperfect_TV_event} yields
\begin{align}
    &{\rm Pr}\!\left(
    {\|\bm x^\natural-\hat{\bm x}\|_2}\geq {\sqrt{IJK}}\eta
    \right) \leq \sum_{i=1}^H 
    \mathbb E_{\bm A}\Big[
    \nonumber\\
    &
    {\rm Pr}(\bm x^\natural\in\mathbb V_i)
    \mathbb E_{\bm w}
    \mathbb E_{\bm x^\natural \in\mathbb V_i} [{\rm Pr}(\tilde{\bm x}\in\mathbb B_i^{\rm out}\mid \bm y)]\Big] + \varepsilon_{\rm TV},
    \label{eq:imperfect_before_core_bound}
\end{align}
where $\varepsilon_{\rm TV}$ is extracted outside the summation due to \eqref{eq:partition_sum_to1}.

It remains to bound the first term in \eqref{eq:imperfect_before_core_bound}. Fixing $\bm A$, let $p_{\bm y,i}^{\rm in}$ be the distribution of $\bm y=\bm A\bm x+\bm w$ induced by $\bm x\sim p(\bm x\mid \mathbb V_i)$, and let $p_{\bm y,i}^{\rm out}$ be the distribution induced by $\bm x\sim p(\bm x\mid \mathbb B_i^{\rm out})$.
Then the following holds
\begin{align}
    &\mathbb E_{\bm w}
    \mathbb E_{\bm x^\natural \in\mathbb V_i} [{\rm Pr}(\tilde{\bm x}\in\mathbb B_i^{\rm out}\mid \bm y)] \nonumber\\
    =& \mathbb E_{\bm y\sim p_{\bm y,i}^{\rm in}} [{\rm Pr}(\tilde{\bm x}\in\mathbb B_i^{\rm out}\mid\bm y)]\nonumber\\
    =&\int  p_{{\bm y},i}^{\rm in}(\bm y) \frac{{\rm Pr}(\tilde{\bm x}\in\mathbb{B}_{i}^{\rm out}) p(\bm y\mid \tilde{\bm x}\in  \mathbb{B}_{i}^{\rm out}) }{p(\bm y)} d\bm y\nonumber \\
    =&\int  p_{{\bm y},i}^{\rm in}(\bm y) \frac{{\rm Pr}(\tilde{\bm x}\in\mathbb{B}_{i}^{\rm out}) p_{\bm y,i}^{\rm out}(\bm y) }{p(\bm y)} d\bm y.
    \label{eq:E_w_Pr}
\end{align}
The first equality follows because drawing $\bm x^\natural$ and $\bm w$ and then forming $\bm y=\bm A\bm x^\natural+\bm w$ is equivalent to drawing $\bm y\sim p_{\bm y,i}^{\rm in}$. The second equality follows from Bayes' rule, and the last equality follows from the definition of $p_{\bm y,i}^{\rm out}$.
When $\eta > 2\epsilon$, $\mathbb V_i$ and $\mathbb B_i^{\rm out}$ are disjoint. Hence
\begin{align}
    &{p(\bm y)} 
    \geq {\rm Pr}(\tilde{\bm x}\in\mathbb{V}_{i})p_{{\bm y},i}^{\rm in}(\bm y)
    +
    {\rm Pr}(\tilde{\bm x}\in\mathbb{B}_{i}^{\rm out})p_{{\bm y},i}^{\rm out}(\bm y) \nonumber \\
    &\geq \max \{ {\rm Pr}(\tilde{\bm x}\in\mathbb{V}_{i})p_{{\bm y},i}^{\rm in}(\bm y), 
    \ 
    {\rm Pr}(\tilde{\bm x}\in\mathbb{B}_{i}^{\rm out})p_{{\bm y},i}^{\rm out}(\bm y) \}.
    \label{eq:p_y_bound}
\end{align}
Combining \eqref{eq:E_w_Pr} and \eqref{eq:p_y_bound}, the term inside $\mathbb E_{\bm A}[\cdot]$ in \eqref{eq:imperfect_before_core_bound} can be bounded by
\begin{align}
    &{\rm Pr}(\bm x^\natural\in\mathbb V_i)
    \mathbb E_{\bm w}
    \mathbb E_{\bm x^\natural \in\mathbb V_i} [{\rm Pr}(\tilde{\bm x}\in\mathbb B_i^{\rm out}\mid \bm y)] \nonumber \\
    \leq& 
    \int 
    \frac{{\rm Pr}(\bm x^\natural\in\mathbb V_i)p_{{\bm y},i}^{\rm in}(\bm y)\times {\rm Pr}(\tilde{\bm x}\in\mathbb{B}_{i}^{\rm out})p_{{\bm y},i}^{\rm out}(\bm y) }{\max \{ {\rm Pr}(\tilde{\bm x}\in\mathbb{V}_{i})p_{{\bm y},i}^{\rm in}(\bm y), 
    \ 
    {\rm Pr}(\tilde{\bm x}\in\mathbb{B}_{i}^{\rm out})p_{{\bm y},i}^{\rm out}(\bm y) \}} d\bm y \nonumber \\
    = & \int {\min \{ {\rm Pr}(\tilde{\bm x}\in\mathbb{B}_{i}^{\rm out})p_{{\bm y},i}^{\rm out}(\bm y) ,\  {\rm Pr}(\tilde{\bm x}\in\mathbb{B}_{i}^{\rm in})p_{{\bm y},i}^{\rm in}(\bm y) \}} d\bm y \nonumber \\
    \leq & \int \min \{p_{{\bm y},i}^{\rm in}(\bm y), \; p_{{\bm y},i}^{\rm out}(\bm y) \} d\bm y \nonumber \\
    =& 1-{\mathrm TV}(p_{{\bm y},i}^{\rm in}, \; p_{{\bm y},i}^{\rm out}).
    \label{eq:one_ball_bound}
\end{align}
The third line holds because
$
    {\rm Pr}(\bm x^\natural\in\mathbb V_i)
    =
    {\rm Pr}(\tilde{\bm x}\in\mathbb V_i)
$, as both $\bm x^\natural$ and $\tilde{\bm x}$ are sampled from the ground-truth prior distribution.
Substituting \eqref{eq:one_ball_bound} into \eqref{eq:imperfect_before_core_bound} gives
\begin{align}
    &{\rm Pr}\!\left(
    {\|\bm x^\natural-\hat{\bm x}\|_2}\geq {\sqrt{IJK}}\eta
    \right) 
    \nonumber\\
    \leq &
    \sum_{i=1}^H 
    \mathbb E_{\bm A}\left[ 1-{\mathrm TV}(p_{{\bm y},i}^{\rm in}, \; p_{{\bm y},i}^{\rm out})\right] + \varepsilon_{\rm TV}.
    \label{eq:NxTVbound}
\end{align}

We now apply Lemma~\ref{lemma:TVsampling}. The lemma only requires the two signal distributions to be supported on the corresponding inner and outer regions with a common element-wise amplitude bound. In the present case, the inner distribution is supported on $\mathbb V_i\subseteq\mathbb B_i^{\rm in}$, while the outside distribution is supported on $\mathbb B_i^{\rm out}\cap\mathbb G(\mathbb X,\kappa)$. By \eqref{eq:Gx_bound}, every point in the enlarged set satisfies $\|\bm x\|_\infty\le \tau_{\rm x}+\kappa$. Hence Lemma~\ref{lemma:TVsampling} applies with $\gamma_{\rm x}\triangleq 2(\tau_{\rm x}+\kappa)$.
Set $\eta=\epsilon_{\rm x}$ and $\epsilon=\nicefrac{\epsilon_{\rm x}}{4}$, then the ratio between the radii of the ``out" and ``in" regions is $c= 3$, and $\nu=\nicefrac{\epsilon_{\rm x}^2}{4}$. Applying Lemma~\ref{lemma:TVsampling} gives, for any $\lambda\in(0,1)$ and $0<\epsilon_{\rm x}<\nicefrac{4}{3}(\tau_x+\kappa)$,
\begin{align}\label{eq:TV_bound_radioMap}
&\mathbb E_{\bm A}\!\left[1-\mathrm{TV}(p_{{\bm y},i}^{\rm in},p_{{\bm y},i}^{\rm out})\right]
< 4\exp\!\bigg(
-\frac{2N(\lambda\nu)^2}{\big(1-\frac{N-1}{IJ}\big)\gamma_{\rm x}^4}
\bigg)\nonumber\\
&
\qquad  +4\exp\!\left(
-\frac{NK(1-\lambda)^2\nu^2}{4\sigma^2\big(\sigma^2+2\gamma_{\rm x}^2+(1-\lambda)\nu\big)}
\right).
\end{align}
Taking $\lambda=1/2$, it follows that, for any $\delta\in(0,1)$, if
\begin{align} \label{eq:sampling_complexity_apd}
    N \geq \max&\left\{ \frac{32\gamma_{\rm x}^4}{\epsilon_{\rm x}^4},\frac{32\sigma^2(\epsilon_{\rm x}^2+8\sigma^2+16\gamma_{\rm x}^2)}{K\epsilon_{\rm x}^4} \right\} \nonumber\\
    &\times\log(\frac{8}{\delta} \mathrm{N}\big(\mathbb X,\sqrt{IJK}\nicefrac{\epsilon_{\rm x}}{4})\big), 
\end{align}
then
\begin{align} \label{eq:TV_bound_complexity}
&\mathbb E_{\bm A}\!\left[1-\mathrm{TV}(p_{{\bm y},i}^{\rm in},p_{{\bm y},i}^{\rm out})\right]
< \frac{\delta}{{\rm N}\big(\mathbb X,\sqrt{IJK}\nicefrac{\epsilon_{\rm x}}{4})}.
\end{align}
Substituting this into \eqref{eq:NxTVbound} gives
\begin{align}
    {\rm Pr}\left(\frac{\| \bm x^\natural - \hat{\bm x} \|_2 }{\sqrt{IJK}}\geq \epsilon_{\rm x}\right) < \delta + \varepsilon_{\rm TV}.
\end{align}
This completes the proof.

Theorem~\ref{thm:perfect_case} follows immediately from Theorem~\ref{thm:imperfect_TV} by setting
$\kappa=0$
and 
$\varepsilon_{\rm TV}=0$.

\subsection{Total Variation Bound under Fiber Sampling}
We provide the total-variation bound needed in \eqref{eq:NxTVbound}. For a fixed reference tensor $\underline{\bm x}$, we compare the measurement distributions induced by signals near $\underline{\bm x}$ and by signals separated from $\underline{\bm x}$. The lemma shows that these two distributions remain distinguishable after fiber sampling and additive Gaussian noise, similar to the purpose of \cite[Lemma~3.2]{jalal2021instance}.
Nonetheless,
unlike \cite[Lemma~3.2]{jalal2021instance} that considers Gaussian sensing matrices and vector signals, our result is
tailored to the structured fiber-sampling operator $\bm A=\bm M_{\bm \varOmega}\otimes \bm I_K$ and our latent factorization model.

\begin{lemma}
\label{lemma:TVsampling}
Let $\gamma>0$ and let $\underline{\bm x}\in[-\gamma_{\rm x}/2,\gamma_{\rm x}/2]^{IJK}$ be a fixed reference point. Define the local neighborhood and the separated region around $\underline{\bm x}$ as
\begin{align}
    &{\mathbb B}_{\rm in} \triangleq \{ \bm{x} \in [-\gamma_{\rm x}/2,\gamma_{\rm x}/2]^{IJK}: \frac{\| \bm x - \underline{\bm x}\|_2}{\sqrt{IJK}} \leq \epsilon \},\nonumber \\
    &{\mathbb B}_{\rm out} \triangleq \{ \bm x \in [-\gamma_{\rm x}/2,\gamma_{\rm x}/2]^{IJK}: \frac{\| \bm x - \underline{\bm x}\|_2}{\sqrt{IJK}} \geq c\epsilon \}, \nonumber
\end{align}
where $c>1$, $\epsilon>0$ and $\gamma_{\rm x} >  2c\epsilon$. Let $p(\bm x\mid \mathbb B_{\rm in})$ and $p(\bm x \mid \mathbb B_{\rm out})$ denote the distributions supported on $\mathbb B_{\rm in}$ and $\mathbb B_{\rm out}$, respectively. For a fixed realization of $\bm A=\bm M_{\bm \varOmega}\otimes \bm I_K\in\mathbb R^{NK\times IJK}$ as in \eqref{eq: IP for RM}, let $p_{\bm y}^{\rm in}$ and $p_{\bm y}^{\rm out}$ denote the distributions of $\bm y=\bm A\bm x+\bm w$, induced by $\bm x \sim p(\bm x\mid \mathbb B_{\rm in})$ and $\bm x \sim p(\bm x \mid \mathbb B_{\rm out})$, respectively. Then with $\nu=(c^2-1)\epsilon^2/2$, for any $\lambda\in(0,1)$,
\begin{align} \label{eq:TV-bound}
&\mathrm E_{\bm A}[{\rm TV}(p_{\bm y}^{\rm in},p_{\bm y}^{\rm out})]
\ge 1
-4\exp\!\bigg(
-\frac{2N(\lambda\nu)^2}{\big(1-\frac{N-1}{IJ}\big)\gamma_{\rm x}^4}
\bigg)\nonumber\\
&\qquad\qquad  - 4\exp\left(  -\frac{NK(1-\lambda)^2\nu^2}{4\sigma^2\big(\sigma^2+2\gamma_{\rm x}^2+(1-\lambda)\nu\big)} \right).
\end{align}
\end{lemma}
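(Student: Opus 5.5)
We will lower-bound the TV distance by building an explicit test: one statistic computed from $\bm y$ that is small when $\bm x\in\mathbb B_{\rm in}$ and large when $\bm x\in\mathbb B_{\rm out}$. Let $\underline{\bm y}=\bm A\underline{\bm x}$ and define
\[
T(\bm y)\triangleq\frac{1}{NK}\|\bm y-\underline{\bm y}\|_2^2-\sigma^2,
\qquad
\text{decide ``out'' iff } T(\bm y)\ge\theta,
\]
with threshold $\theta\triangleq\frac{(c^2+1)\epsilon^2}{2}=\epsilon^2+\nu$. This sits midway between $\epsilon^2$ and $c^2\epsilon^2$. For any event $E$ we have $\mathrm{TV}(p_{\bm y}^{\rm in},p_{\bm y}^{\rm out})\ge P^{\rm out}(E)-P^{\rm in}(E)$. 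Hence
\[
1-\mathbb E_{\bm A}\mathrm{TV}\le \mathbb E_{\bm A}\,P^{\rm in}(T\ge\theta)+\mathbb E_{\bm A}\,P^{\rm out}(T<\theta).
\]
Since both $P^{\rm in}$ and $P^{\rm out}$ are mixtures over $\bm x$, it suffices to bound, uniformly in $\bm x$ in each region, the probability (over $\bm A$ and $\bm w$) that a fixed $\bm x$ is misclassified. Each such probability will be split into a sampling term and a noise term, which give the two exponentials.

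\textbf{Expansion.} Fix $\bm x$ and set $\bm d=\bm x-\underline{\bm x}$. Write $\bm d_\ell\in\mathbb R^K$ for the spectral fiber of $\bm d$ at spatial location $\ell$, and $D_\ell=\|\bm d_\ell\|_2^2/K\in[0,\gamma_{\rm x}^2]$; the bound holds because both $\bm x$ and $\underline{\bm x}$ lie in $[-\gamma_{\rm x}/2,\gamma_{\rm x}/2]^{IJK}$. Then
\[
T=\frac{1}{N}\sum_{n=1}^N D_{\ell_n}
+\frac{2}{NK}\langle\bm A\bm d,\bm w\rangle
+\Big(\frac{1}{NK}\|\bm w\|_2^2-\sigma^2\Big).
\]
Under uniform sampling of $N$ distinct locations without replacement, the first term has mean $\|\bm d\|_2^2/(IJK)$. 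This mean is $\le\epsilon^2$ in $\mathbb B_{\rm in}$ and $\ge c^2\epsilon^2$ in $\mathbb B_{\rm out}$, so both regions are separated from $\theta$ by $\nu$. We allocate $\lambda\nu$ of this margin to the sampling term and $(1-\lambda)\nu$ to the noise terms.

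\textbf{Sampling term.} Apply Hoeffding–Serfling for sampling without replacement to the bounded variables $D_{\ell}\in[0,\gamma_{\rm x}^2]$. This gives a deviation probability of at most $\exp\big(-2N(\lambda\nu)^2/((1-\frac{N-1}{IJ})\gamma_{\rm x}^4)\big)$, which is exactly the first exponential. The factor $4$ in the lemma absorbs two-sidedness and the two regions.

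\textbf{Noise term.} Conditioned on $\bm A$, the noise terms combine into
\[
\frac{1}{NK}\Big(\|\bm A\bm d+\bm w\|_2^2-\|\bm A\bm d\|_2^2\Big)-\sigma^2,
\]
a centered noncentral-$\chi^2$ deviation. Its variance proxy is $\sigma^2\|\bm A\bm d\|_2^2/(NK)^2\le\sigma^2\gamma_{\rm x}^2/(NK)$ for the linear part plus $\sigma^4/(NK)$ for the quadratic part. A Bernstein / Laurent–Massart inequality for Gaussian chaos then yields a tail of the form $\exp\big(-NK t^2/(4\sigma^2(\sigma^2+2\gamma_{\rm x}^2+t))\big)$ at $t=(1-\lambda)\nu$, matching the second exponential.

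\textbf{Main obstacle.} The hardest part is the bookkeeping: choosing the threshold and the split of $\nu$ so that the constants match, and making sure the bound is uniform over $\bm x$ so it can pass through the mixtures $p(\bm x\mid\mathbb B_{\rm in/out})$. For the out region only a lower deviation is needed, and for the in region only an upper deviation. The condition $\gamma_{\rm x}>2c\epsilon$ merely guarantees that $\mathbb B_{\rm out}$ can be nonempty.
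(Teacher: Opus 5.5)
Your proposal is correct and follows essentially the same route as the paper. The paper uses the same test set $\{\bm y:\|\bm y-\bm A\underline{\bm x}\|_2^2/(NK)\le \frac{c^2+1}{2}\epsilon^2+\sigma^2\}$ and the same split of the margin $\nu$: $\lambda\nu$ goes to the fiber-sampling term, handled by Serfling's inequality with range $[0,\gamma_{\rm x}^2]$, and $(1-\lambda)\nu$ goes to the noise term. For the noise term, the tool you describe only loosely as "Bernstein / Laurent--Massart" is, in the paper, Birg\'e's noncentral chi-square deviation bound (Lemma 8.1), combined with $\|\bm A\bm d\|_2^2\le NK\gamma_{\rm x}^2$ and an inversion of the threshold $2\sigma^2u+2\sqrt{\zeta u}$, which gives exactly your second exponential.
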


\begin{IEEEproof}
Let ${P}_{\bm y}^{\rm in}(\cdot)$ and ${P}_{\bm y}^{\rm out}(\cdot)$ denote probabilities under $\bm y\sim p_{\bm y}^{\rm in}$ and $\bm y\sim p_{\bm y}^{\rm out}$, respectively.
    By Definition \ref{def:tv-distance},
    \begin{align}
        {\rm TV}(p_{\bm y}^{\rm in}, p_{\bm y}^{\rm out})
        =
        \sup_{{\mathbb Y} \subset \mathbb R^{NK}}
        \big| P_{\bm y}^{\rm in}({\mathbb Y}) - P_{\bm y}^{\rm out}({\mathbb Y}) \big|.
    \end{align}
    Hence, for any measurable set $\mathbb Y\subseteq\mathbb R^{NK}$,
    \begin{align}\label{eq:Ea_bound}
        &\mathbb E_{\bm A} [{\rm TV}(p_{\bm y}^{\rm in}, p_{\bm y}^{\rm out})] 
         \geq \mathbb E_{\bm A} [ | P_{\bm y}^{\rm in}(\mathbb Y) -  P_{\bm y}^{\rm out}(\mathbb Y) | ] \nonumber \\
        & = 1 - \mathbb E_{\bm A} [  P_{\bm y}^{\rm in}(\mathbb Y^\complement)] - \mathbb E_{\bm A} [P_{\bm y}^{\rm out}(\mathbb Y) ],
    \end{align}
    in which $\mathbb Y^\complement$ denotes the complement of $\mathbb Y$.

    In the following of this proof, we will show that \eqref{eq:TV-bound} can be derived by taking $\mathbb Y = \mathbb B_{\rm y}$, which is constructed as
    \begin{align}
        \mathbb B_{\rm y} \triangleq \Big\{ \bm y \in \mathbb R^{NK}:\frac{\| \bm y - \bm A\underline{\bm x}\|_2^2}{NK}  \leq \frac{c^2+1}{2}\epsilon^2+\sigma^2\Big\}.
    \end{align}
    Since $\bm y=\bm A\bm x+\bm w$, the event $\bm y\in\mathbb B_{\rm y}$ is equivalent to $(\bm x,\bm w)\in\mathbb B_{\rm x,w}$, where
    \begin{align}
        &\mathbb B_{\rm x,w} \triangleq \Big\{ (\bm x,\bm w) \in \mathbb R^{IJK} \times \mathbb R^{NK}:\nonumber\\
        &\qquad \| \bm A \bm x - \bm A\underline{\bm x} +\bm w\|_2^2 \leq {NK\Big(\frac{c^2+1}{2}\epsilon^2}+\sigma^2\Big)\Big\}.
    \end{align}
    Denote $\bm e = \bm A \bm x - \bm A\underline{\bm x} +\bm w$ and $\bm d = \bm A \bm x - \bm A\underline{\bm x}$. Then we have
    \begin{align}
        \| \bm e \|_2^2 = \| \bm d \|_2^2 + \| \bm w\|_2^2 + 2\bm d^\top \bm w.
    \end{align}

    Before proceeding, we introduce two inequalities on the components $\| \bm d \|_2^2$ and $\| \bm w\|_2^2 + 2\bm d^\top \bm w$, respectively.

    \paragraph{Inequality on $\| \bm d\|_2^2$}
    Define the block format of $\bm x\in \mathbb R^{IJK}$ as
    \begin{align}
        \bm X = [\bm x(1:K),\ldots,\bm x(IJK-K+1:IJK)]^\top \in \mathbb R^{IJ\times K}. \nonumber
    \end{align}
    and define $\underline{\bm X}$ in the same manner. Since
    $\bm A=(\bm M_{\bm \varOmega}\otimes \bm I_K)$, $\bm d = \bm A\bm x-\bm A\underline{\bm x}$ in its block form is
    \begin{align}
        \bm D = \bm M_{\bm \varOmega} (\bm X-\underline{\bm X}) \in \mathbb R^{N\times K}.
    \end{align}
    
    Notice that
    \[
    0 \leq \| \bm X(i,:) -\underline{\bm X}(i,:) \|_{\rm F}^2/K \leq\gamma_{\rm x}^2, \quad \forall i\in[IJ],
    \]
    therefore, by Serfling's inequality for sampling without replacement, the following holds for $\| \bm d \|_2$:
    \begin{align}\label{eq:Hoeffding_block}
        &{\rm Pr}\!\left(\left|
        \frac{\|\bm d\|_2^2}{NK}
        -\frac{\|\bm x-\underline{\bm x}\|_2^2}{IJK}
        \right|
        \ge \lambda\nu\right)\nonumber\\
        &={\rm Pr} \left( \left|  \frac{\| \bm D \|_{\rm F}^2}{NK} - \frac{\| \bm X -\underline{\bm X} \|_{\rm F}^2}{IJK} \right| \geq \lambda\nu\right)\nonumber\\
        &={\rm Pr} \left( \left|  \frac{\| \bm M_{\bm \varOmega} (\bm X -\underline{\bm X})\|_{\rm F}^2/K}{N} - \frac{\| \bm X -\underline{\bm X} \|_{\rm F}^2/K}{IJ} \right| \geq \lambda\nu\right)\nonumber\\
        &\le
        2\exp\!\left(
        -\frac{2N(\lambda\nu)^2}{\big(1-\frac{N-1}{IJ}\big)\gamma_{\rm x}^4}
        \right),
    \end{align}
    where the probability is taken over the sampling pattern $\bm\varOmega$, or equivalently, over $\bm A$.

    \paragraph{{Inequality on $\| \bm w\|_2^2 + 2\bm d^\top \bm w$.}}
    
    Notice that $\bm w\sim\mathcal N(\bm 0,\sigma^2\bm I_{NK})$. Therefore, fix any $\bm d$, $(\| \bm w\|_2^2 + 2\bm d^\top \bm w + \| \bm d\|_2^2)/\sigma^2$ follows a noncentral chi-square distribution with $NK$ degrees of freedom and noncentrality parameter $\|\bm d\|_2^2/\sigma^2$ \cite{patnaik1949non}. By the concentration inequality for the noncentral chi-square distribution (Lemma 8.1 in \cite{birge2001alternative}), for every $u>0$,
    \begin{align}
        &{\rm Pr}\Big( 
        \left|\| \bm w\|_2^2 + 2\bm d^\top \bm w - NK\sigma^2 \right| \nonumber \\
        &\qquad \geq  2\sigma^2u + 2\sqrt{(NK\sigma^4+2\sigma^2\| \bm d \|_2^2) u} \ \Big | \ \bm d  \Big) \leq 2 e^{-u}, \nonumber 
    \end{align}
    where the probability is taken over the noise $\bm w$.
    Since \(\bm x_i,\underline{\bm x}_i\in[-\gamma_{\rm x}/2,\gamma_{\rm x}/2]\) for any $i\in[IJK]$, we have
    \(
    \|\bm d\|_2^2 \le NK\gamma_{\rm x}^2.
    \)
    Hence, with
    \(
    \zeta \triangleq NK\sigma^4+2NK\gamma_{\rm x}^2\sigma^2,
    \)
    it follows that 
    \begin{align}
    {\rm Pr}\!\left(
    \left|\|\bm w\|_2^2 + 2\bm d^\top \bm w - NK\sigma^2\right|
    \ge 2\sigma^2 u + 2\sqrt{\zeta\, u}
    \,\middle|\, \bm d
    \right)
    \le 2e^{-u}.\nonumber
    \end{align}
    Equivalently, for any $t>0$,
    \begin{align}\label{eq:tail_bound}
        &{\rm Pr}\left( 
        \left|\| \bm w\|_2^2 + 2\bm d^\top \bm w - NK\sigma^2 \right| \geq t \mid \bm d  \right) \nonumber\\
        &\leq 2\exp\left( - (\frac{\sqrt{2t\sigma^2+\zeta}-\sqrt{\zeta}}{2\sigma^2})^2 \right)\nonumber\\
        &=2\exp\left( - \frac{t^2}{(\sqrt{2t\sigma^2+\zeta}+\sqrt{\zeta})^2} \right)\nonumber\\
        & \leq 2\exp\left( - \frac{t^2}{4t\sigma^2+4(NK\sigma^4 + 2NK\gamma_{\rm x}^2\sigma^2)} \right).
    \end{align}

    \medskip
    With the above inequalities, we are now ready to bound the two error probabilities in \eqref{eq:Ea_bound}.

    \subsubsection{Inner case} 
    For $\mathrm E_{\bm A} [  P_{\bm y}^{\rm in}(\mathbb Y^\complement)] $ with $\mathbb Y^\complement = \mathbb B_{\rm y}^\complement$, we upper bound the equivalent set $\mathbb B_{\rm x,w}^\complement$ as 
    \begin{align} \label{eq:inner_set_bound}
        \mathbb B_{\rm x,w}^\complement \subseteq {\mathbb D^\complement} \cup  \mathbb W^\complement,
    \end{align}
    with 
    \begin{subequations}
        \begin{align}
            \mathbb D \triangleq \{& (\bm x\in \mathbb B_{\rm in},\bm w \in \mathbb R^{NK}): \nonumber\\
            &\left| \frac{\|\bm A(\bm x - \underline{\bm x})\|_2^2}{NK}
        -\frac{\|\bm x-\underline{\bm x}\|_2^2}{IJK} \right|
        \le \lambda\nu \},\\
            \mathbb W \triangleq \{& (\bm x\in \mathbb R^{IJK},\bm w \in \mathbb R^{NK}): \nonumber \\
            &\Big| \| \bm w\|_2^2 + 2\bm d^\top \bm w -NK\sigma^2 \Big| \leq NK(1-\lambda) \nu \}. \label{eq:W_set}
        \end{align}
    \end{subequations}
    To see this, note that 
    \begin{align}
        \mathbb B_{\rm x,w}^\complement = ({\mathbb D^\complement} \cap \mathbb B_{\rm x,w}^\complement) \cup ({\mathbb D} \cap \mathbb B_{\rm x,w}^\complement)  \subseteq {\mathbb D^\complement}  \cup ({\mathbb D} \cap \mathbb B_{\rm x,w}^\complement).
    \end{align}
    In addition, ${\mathbb D} \cap \mathbb B_{\rm x,w}^\complement$ implies that
    \begin{subequations}
        \begin{align}
            \| \bm d \|_2^2 &= \| \bm A(\bm x -\underline{\bm x}) \|_2^2 \nonumber\\
            &  \leq NK (\lambda \nu + \frac{\| \bm x -\underline{\bm x}\|_2^2}{IJK}) \leq NK (\lambda \nu + {\epsilon^2}),\nonumber
        \end{align}
        \begin{align}
            \| \bm d \|_2^2 + \| \bm w\|_2^2 + 2\bm d^\top \bm w &> {NK(\frac{c^2+1}{2}\epsilon^2}+\sigma^2)\nonumber\\
            & =NK(\nu+{\epsilon^2}+\sigma^2).\nonumber
        \end{align}
    \end{subequations}
    which further leads to
    \begin{align}
        \| \bm w\|_2^2 + 2\bm d^\top \bm w -NK\sigma^2 \geq NK(1-\lambda)\nu.
    \end{align}
    Therefore, ${\mathbb D} \cap \mathbb B_{\rm x,w}^\complement \subseteq \mathbb W^\complement$.
    
    With \eqref{eq:inner_set_bound}, the following can be obtained:
    \begin{align}\label{eq:inner_bound}
        &\mathbb E_{\bm A} [  P_{\bm y}^{\rm in}(\mathbb Y^\complement)] \nonumber\\
        =&\mathbb E_{\bm A} \Big[  \mathbb E_{\bm x\sim p_{\rm in}} \big[ \mathbb E_{\bm w}[\mathbb 1 (\mathbb B_{\rm x,w}^\complement)]\big] \Big] \nonumber\\
        \leq&  \mathbb E_{\bm A} \Big[  \mathbb E_{\bm x\sim p_{\rm in}} \big[ \mathbb E_{\bm w}[\mathbb 1(\mathbb D^\complement)]\big]  + \mathbb E_{\bm x\sim p_{\rm in}} \big[ \mathbb E_{\bm w}[\mathbb 1(\mathbb W^\complement)] \big]\Big]  \nonumber \\
        =& \mathbb E_{\bm x\sim p_{\rm in}} \big[\mathbb E_{\bm w} [ {\rm Pr}(\mathbb D^\complement) ] \big]+ \mathbb E_{\bm x\sim p_{\rm in}} \big[ \mathbb E_{\bm A}[{\rm Pr}(\mathbb W^\complement)]\big]\nonumber\\
        \leq& 2\exp\!\left(
        -\frac{2N(\lambda\nu)^2}{\big(1-\frac{N-1}{IJ}\big)\gamma_{\rm x}^4} 
        \right) \nonumber\\
        &
        + 2\exp\left(  -\frac{NK(1-\lambda)^2\nu^2}{4\sigma^2\big(\sigma^2+2\gamma_{\rm x}^2+(1-\lambda)\nu\big)} \right),
    \end{align}
    where $\mathbb 1(\cdot)$ is an indicator function.
    In the last inequality of \eqref{eq:inner_bound}, the first part comes directly from \eqref{eq:Hoeffding_block}, and the second part is from \eqref{eq:tail_bound} by setting $t = NK(1-\lambda)\nu$.

\subsubsection{Outer case} Similarly, for $\mathbb E_{\bm A} [P_{\bm y}^{\rm out}(\mathbb Y) ]$ with $\mathbb Y = \mathbb B_{\rm y}$, we bound the equivalent set $\mathbb B_{\rm x,w}$ as
\begin{align}
    \mathbb B_{\rm x,w} \subseteq {\mathbb F}^\complement \cup \mathbb W^\complement,
\end{align}
with
\begin{align}
     \mathbb F \triangleq \{& (\bm x\in \mathbb B_{\rm out},\bm w \in \mathbb R^{NK}):\nonumber\\
     &\left| \frac{\|\bm A(\bm x - \underline{\bm x})\|_2^2}{NK}
    -\frac{\|\bm x-\underline{\bm x}\|_2^2}{IJK} \right|
    \le \lambda\nu \},
\end{align}
and $\mathbb W$ defined as in \eqref{eq:W_set}.
Then the outer case is handled in the same way as the inner case, which yields
\begin{align}\label{eq:outer_bound}
    &\mathbb E_{\bm A} [  P_{\bm y}^{\rm out}(\mathbb Y)] 
    \leq  2\exp\!\left(
    -\frac{2N(\lambda\nu)^2}{\big(1-\frac{N-1}{IJ}\big)\gamma_{\rm x}^4} 
    \right)  \nonumber\\
    & \qquad 
    + 2\exp\left(  -\frac{NK(1-\lambda)^2\nu^2}{4\sigma^2\big(\sigma^2+2\gamma_{\rm x}^2+(1-\lambda)\nu\big)} \right).
\end{align}

Combine \eqref{eq:Ea_bound}, \eqref{eq:inner_bound} and \eqref{eq:outer_bound}, the proof is completed.

\end{IEEEproof}

\section{Proof of Corollary \ref{cor:cover_lowrank}}
\label{apd:RM_set}

Following Assumption~\ref{assump:structured_latent} and the low rank assumption on $\bm S_r$, we define the admissible sets for the spatial and spectral components as
\begin{align} 
    \mathbb S_{\rm lr} &\triangleq \{\bm S\in\mathbb R^{I\times J}:\ \mathrm{rank}(\bm S)\le r_{\rm s}, \|\bm S\|_{\rm F}\le \gamma_{\rm s}\}, \\
    \mathbb K &\triangleq \{ \bm c\in \mathbb R^K: \|\bm c \|_2 \leq \gamma_{\rm c}\}.
\end{align}
By the covering-number bound for Frobenius-bounded low-rank matrices \cite[Lemma~A.2]{wang2012stability} and the standard volumetric bound for Euclidean balls \cite[Corollary~4.2.11]{vershynin2018high}, we have
\begin{equation}
{\rm N}(\mathbb S_{\rm lr},\epsilon)
\le
\Big(\frac{9\gamma_{\rm s}}{\epsilon}\Big)^{(I+J+1)r_{\rm s}}, \quad {\rm N}(\mathbb K,\epsilon)
\le
\left(\frac{3\gamma_{\rm c}}{\epsilon}\right)^K. 
\end{equation}
Then the covering number of $\mathbb X_{\rm lr}$ satisfies \cite{Shrestha2022DeepSpectrum,xu2025radio}
\begin{align}
&\mathrm N(\mathbb X_{\rm lr},(\gamma_{\rm c} +  \gamma_{\rm s})R\epsilon')
\le
\left(
\mathrm N(\mathbb S_{\rm lr},\epsilon')\,
\mathrm N(\mathbb K,\epsilon')
\right)^R \nonumber \\
=&\left(\frac{9\gamma_{\rm s}}{\epsilon'}\right)^{(I+J+1)r_{\rm s}R} \left(\frac{3\gamma_{\rm c}}{\epsilon'}\right)^{KR} \nonumber\\
=& \left(\frac{3}{\epsilon'}\right)^{R(Ir_{\rm s}+Jr_{\rm s}+r_{\rm s}+K)} (3\gamma_{\rm s})^{Rr_{\rm s}(I+J+1)} \gamma_{\rm c}^{RK}.\nonumber
\end{align}
Setting $\epsilon=R(\gamma_{\rm c}+\gamma_{\rm s})\epsilon'$, we obtain
\begin{align}\label{eq:X_lr_covering}
    \mathrm N(\mathbb X_{\rm lr},\epsilon) \leq &\left(\frac{3(\gamma_{\rm c} +  \gamma_{\rm s})R}{\epsilon}\right)^{R(Ir_{\rm s}+Jr_{\rm s}+r_{\rm s}+K)} \nonumber\\
    &\times (3\gamma_{\rm s})^{Rr_{\rm s}(I+J+1)} \gamma_{\rm c}^{RK}.
\end{align}

Then the proof of Theorem~\ref{thm:imperfect_TV} applies exactly here, with the same parameter choices, except that the covering number is replaced by the low-rank covering bound in \eqref{eq:X_lr_covering}. This completes the proof.

\section{Network Architecture}
\label{apd:network}
Tables \ref{tab:slf_unet} and \ref{tab:psd_unet} summarize the network architectures used for SLF and PSD training, respectively. Table \ref{tab:NNmodule} further provides brief descriptions of the individual modules used in these network designs.

\begin{table}[t!]
\centering
\caption{The diffusion network for SLFs.}
\label{tab:slf_unet}
\scalebox{0.6}{
\begin{tabular}{||c||l|c|| c ||l|c||}
\hline
 & \multicolumn{2}{c||}{\textbf{Encoder}}  &  & \multicolumn{2}{c||}{\textbf{Decoder}} \\
\hline
\textbf{Resolution} & \textbf{Module} & \textbf{Channels} & \textbf{Skip conn.} & \textbf{Module} & \textbf{Channels} \\
\hline

\multirow{4}{*}{$256\times256$}
& Input & 2 & & Output & 1\\
& Conv2d($3\times3$)
& $64$
& $\rightarrow$
& ResBlock 3
& $64$ \\
& ResBlock 1
& $64$
& $\rightarrow$
& ResBlock 2
& $64$ \\
& ResBlock 2
& $64$
& $\rightarrow$
& ResBlock 1
& $64$ \\
\hline

\multirow{4}{*}{$128\times128$}
& Downsample
& $64$
& $\rightarrow$
& ResBlock 3
& $128$ \\
& ResBlock 1
& $128$
& $\rightarrow$
& ResBlock 2
& $128$ \\
& ResBlock 2
& $128$
& $\rightarrow$
& ResBlock 1
& $128$ \\
&  & &
& Upsample
& $128$ \\
\hline

\multirow{4}{*}{$64\times64$}
& Downsample
& $128$
& $\rightarrow$
& ResBlock 3
& $256$ \\
& ResBlock 1
& $256$
& $\rightarrow$
& ResBlock 2
& $256$ \\
& ResBlock 2
& $256$
& $\rightarrow$
& ResBlock 1
& $256$ \\
&  &  &
& Upsample
& $256$ \\
\hline

\multirow{4}{*}{$32\times32$}
& Downsample
& $256$
& $\rightarrow$
& ResBlock 3
& $512$ \\
& ResBlock 1
& $512$
& $\rightarrow$
& ResBlock 2
& $512$ \\
& ResBlock 2
& $512$
& $\rightarrow$
& ResBlock 1
& $512$ \\
&  & &
& Upsample
& $512$ \\
\hline

\multirow{5}{*}{$16\times16$}
& Downsample
& $512$
& $\rightarrow$
& [ResBlock+Attn] 3
& $512$ \\
& [ResBlock+Attn] 1
& $512$
& $\rightarrow$
& [ResBlock+Attn] 2
& $512$ \\
& [ResBlock+Attn] 2
& $512$
& $\rightarrow$
& [ResBlock+Attn] 1
& $512$ \\
\cline{2-6}
& \multicolumn{5}{c||}{\textbf{Bottleneck: ResBlock+Attn+ResBlock (Channels: 512)}}\\
\hline
\end{tabular}}
\end{table}

\begin{table}[t!]
\centering
\caption{The diffusion network for PSDs.}
\label{tab:psd_unet}
\scalebox{0.6}{
\begin{tabular}{||c||l|c|| c ||l|c||}
\hline
 & \multicolumn{2}{c||}{\textbf{Encoder}}  &  & \multicolumn{2}{c||}{\textbf{Decoder}} \\
\hline
\textbf{Length} & \textbf{Module} & \textbf{Channels} & \textbf{Skip conn.} & \textbf{Module} & \textbf{Channels} \\
\hline

\multirow{4}{*}{$64$}
& Input & 1 & & Output & 1\\
& Conv1d($3$)
& $64$
& $\rightarrow$
& ResBlock 3
& $64$ \\
& ResBlock 1
& $64$
& $\rightarrow$
& ResBlock 2
& $64$ \\
& ResBlock 2
& $64$
& $\rightarrow$
& ResBlock 1
& $64$ \\
\hline

\multirow{4}{*}{$32$}
& Downsample
& $64$
& $\rightarrow$
& ResBlock 3
& $128$ \\
& ResBlock 1
& $128$
& $\rightarrow$
& ResBlock 2
& $128$ \\
& ResBlock 2
& $128$
& $\rightarrow$
& ResBlock 1
& $128$ \\
&  & &
& Upsample
& $128$ \\
\hline

\multirow{4}{*}{$16$}
& Downsample
& $128$
& $\rightarrow$
& [ResBlock+Attn] 3
& $256$ \\
& [ResBlock+Attn] 1
& $256$
& $\rightarrow$
& [ResBlock+Attn] 2
& $256$ \\
& [ResBlock+Attn] 2
& $256$
& $\rightarrow$
& [ResBlock+Attn] 1
& $256$ \\
&  & &
& Upsample
& $256$ \\
\hline

\multirow{4}{*}{$8$}
& Downsample
& $256$
& $\rightarrow$
& ResBlock 3
& $512$ \\
& ResBlock 1
& $512$
& $\rightarrow$
& ResBlock 2
& $512$ \\
& ResBlock 2
& $512$
& $\rightarrow$
& ResBlock 1
& $512$ \\
&  & &
& Upsample
& $512$ \\
\hline

\multirow{5}{*}{$4$}
& Downsample
& $512$
& $\rightarrow$
& ResBlock 3
& $512$ \\
& ResBlock 1
& $512$
& $\rightarrow$
& ResBlock 2
& $512$ \\
& ResBlock 2
& $512$
& $\rightarrow$
& ResBlock 1
& $512$ \\
\cline{2-6}
& \multicolumn{5}{c||}{\textbf{Bottleneck: ResBlock+Attn+ResBlock (Channels: 512)}}\\
\hline
\end{tabular}}
\end{table}

\begin{table}[t!]
    \caption{Description on the network modules. The SLF network uses 2D operators, whereas the PSD network uses their 1D counterparts.}
    \label{tab:NNmodule}
    \centering
    \scalebox{0.6}{
    \begin{tabular}{|c|c|}
        \hline
        \textbf{Module} & \textbf{Descriptions} \\
        \hline
         \textbf{Input} & \textbf{SLF}: Noisy SLF + building map \\
         & \textbf{PSD}: Noisy PSD \\
        \hline
        \textbf{Output} & GroupNorm(32) $\rightarrow$ SiLU $\rightarrow$ Dropout(0.2) \\
        & $\rightarrow$ Conv2D/Conv1D (kernel 3, stride 1, padding 1) \\
        \hline
        \textbf{Downsample} & Conv2D/Conv1D (kernel 3, stride 2, padding 1) \\
        \hline
        \textbf{Upsample} & Neareat Neighbor Interpolation (scale factor 2) \\
        & $\rightarrow$ Conv2D/Conv1D (kernel 3, stride 1, padding 1)\\
        \hline
        \textbf{ResBlock} & [GroupNorm(32)$\rightarrow$SiLU$\rightarrow$Conv2D/Conv1D (kernel 3, stride 1, padding 1)] \\
        & $\rightarrow$ Add time embedding  \\
        & $\rightarrow$ [GroupNorm(32)$\rightarrow$SiLU$\rightarrow$Conv2D/Conv1D (kernel 3, stride 1, padding 1)] \\
        & $\rightarrow$ Add residual input\\
        \hline
        \textbf{Attn} & GroupNorm(32) $\rightarrow$ Extract Query, Key, and Value matrices \\
        &  via Conv2D/Conv1D (kernel 1, stride 1, padding 0) \\
        & $\rightarrow$ Self-attention over spatial/spectral domain \\
        & $\rightarrow$ Output projection via Conv2D/Conv1D (kernel 1, stride 1, padding 0) \\
        & $\rightarrow$ Add residual input \\
        \hline
    \end{tabular}
    }
\end{table}


\end{document}